\newif\ifrs
\ifrs \usepackage{mathrsfs} \fi  
\newif\ifcol
\newtheorem{theorem*}{Theorem}[section]
\newtheorem{note*}[theorem*]{Note}
\newtheorem{lemma*}[theorem*]{Lemma}
\newtheorem{definition*}[theorem*]{Definition}
\newtheorem{proposition*}[theorem*]{Proposition}
\newtheorem{corollary*}[theorem*]{Corollary}
\newtheorem{remark*}[theorem*]{Remark}
\newtheorem{example*}[theorem*]{Example}
\newtheorem*{conditionE'3}{Condition (E3')}
\numberwithin{equation}{section}
\newif\ifcol
\newcommand{\colorr}{\color[rgb]{0.8,0,0}}
\newcommand{\colorn}{\color[rgb]{1,1,1}}
\newcommand{\colorr}{\color{black}}
\newcommand{\colorn}{\color{black}}
\newtheorem{remark}{Remark}
\def\bd{\begin{description}}
\def\ed{\end{description}}
\def\D2{\bbD_{2,\infty-}}
\def\D{{\bf D}}
\def\I{{\bf I}}
\def\calf{{\cal F}}
\def\calg{{\cal G}}
\def\calh{{\cal H}}
\def\call{{\cal L}}
\def\calm{{\cal M}}
\def\caln{{\cal N}}
\def\calq{{\cal Q}}
\def\calu{{\cal U}}
\def\half{\frac{1}{2}}
\def\be{\begin{equation}}
\def\ee{\end{equation}}
\def\bea{\begin{eqnarray}}
\def\eea{\end{eqnarray}}
\def\beas{\begin{eqnarray*}}
\def\eeas{\end{eqnarray*}}
\def\bi{\begin{itemize}}
\def\ei{\end{itemize}}
\def\bd{\begin{description}}
\def\ed{\end{description}}
\def\l{\left}
\def\r{\right}
\newcommand{\bbD}{{\mathbb D}}
\newcommand{\reels}{\mathbb{R}}
\newcommand{\naturels}{\mathbb{N}}
\newcommand{\esp}{\mathbb{E}}
\newcommand{\proba}{\mathbb{P}}
\newcommand{\inv}[1]{\frac{1}{#1}}
\newcommand{\Tau}{\mathrm{T}}
\newcommand{\var}{\operatorname{Var}}
\DeclareMathOperator{\ind}{\perp \!\!\! \perp}
\newcommand{\bop}[1]{O_\proba\l(#1\r)}
\newcommand{\ti}[1]{t_{#1}^n}
\newcommand{\qtermlocal}[2]{\Delta_B^{-1}\int_{#1}^{#2}{\alpha_s^{-1}ds}\Big\{\int_{#1}^{#2}{\sigma_s^4 \alpha_sds} + \sum_{#1 < s \leq #2} \Delta J_s^2 (\sigma_s^2\alpha_s + \sigma_{s-}^2\alpha_{s-}) \Big\}}
\theoremstyle{plain}
\newtheorem{RK0}{Theorem}
\newtheorem{RK}[RK0]{Theorem}
\newtheorem{RKcor}[RK0]{Corollary}
\newtheorem{RKth3}[RK0]{Theorem}
\newtheorem{propAVARRK}[RK0]{Proposition}
\newtheorem{RKjumps}[RK0]{Theorem}
\newtheorem{QMLE}[RK0]{Theorem}
\newtheorem{QMLEcor}[RK0]{Corollary}
\newtheorem{QMLEjumps}[RK0]{Theorem}
\newtheorem{propAVARQMLE}[RK0]{Proposition}
\newtheorem{QMLEcorRobust}[RK0]{Proposition}
\newtheorem{QMLEcorRobust2}[RK0]{Proposition}
\newtheorem{RKcorRobust}[RK0]{Proposition}
\newtheorem{RKcorRobust2}[RK0]{Proposition}
\theoremstyle{remark}
\begin{document}
 \title{Efficient asymptotic variance reduction when estimating volatility in high frequency data\footnote{We would like to thank Markus Bibinger, Jia Li, Dacheng Xiu, Jean Jacod, Yacine A\"{i}t-Sahalia (the Editor), two anonymous referees and an anonymous Associate Editor, the participants of Keio Econometrics Workshop, the Workshop on Portfolio dynamics and limit order books in Ecole Centrale Paris, The Quantitative Methods in Finance 2016 Conference in Sydney for helpful discussions and advice. The research of Yoann Potiron is supported by a special private grant from Keio University and Japanese Society for the Promotion of Science Grant-in-Aid for Young Scientists No. 60781119. All financial data is provided by the Chair of Quantitative Finance of the Ecole Centrale Paris. The research of Simon Clinet is supported by CREST Japan
Science and Technology Agency and a special grant from Keio University.}}
\author{Simon Clinet\footnote{Faculty of Economics, Keio University. 2-15-45 Mita, Minato-ku, Tokyo, 108-8345, Japan. Phone:  +81-3-5427-1506. E-mail: clinet@keio.jp website: http://user.keio.ac.jp/\char`\~ clinet} \footnote{Japan Science and Technology Agency, CREST.}  and Yoann Potiron\footnote{Faculty of Business and Commerce, Keio University. 2-15-45 Mita, Minato-ku, Tokyo, 108-8345, Japan. Phone:  +81-3-5418-6571. E-mail: potiron@fbc.keio.ac.jp website: http://www.fbc.keio.ac.jp/\char`\~ potiron}}
\date{This version: \today} 

\maketitle

\begin{abstract}
This paper shows how to carry out efficient asymptotic variance reduction when estimating volatility in the presence of stochastic volatility and microstructure noise with the realized kernels (RK) from \cite{barndorff2008designing} and the quasi-maximum likelihood estimator (QMLE) studied in \cite{xiu2010quasi}. To obtain such a reduction, we chop the data into $B$ blocks, compute the RK (or QMLE) on each block, and aggregate the block estimates. The ratio of asymptotic variance over the bound of asymptotic efficiency converges as $B$ increases to the ratio in the parametric version of the problem, i.e. 1.0025 in the case of the fastest RK Tukey-Hanning 16 and 1 for the QMLE. The impact of stochastic sampling times and jump in the price process is examined carefully. The finite sample performance of both estimators is investigated in simulations, while empirical work illustrates the gain in practice.

\end{abstract}
\textbf{Keywords}: high frequency data ; jumps ; market microstructure noise ; integrated volatility ; quasi-maximum likelihood estimator ; realized kernels ; stochastic sampling times
\\

\section{Introduction} \label{introduction}
Over the past decades, the availability of high frequency data has led to a better understanding of asset prices. The main object of interest, the quadratic variation, can be used for example as a proxy for the spot volatility or the volatility parameter of a time-varying model. Moreover, forecasts of future volatility can be improved with it. Without microstructure noise, the realized variance (RV) estimator (e.g. \cite{andersen2001distribution}, \cite{meddahi2002theoretical}, \cite{barndorff2002estimating}) is both consistent and efficient. The convergence rate $n^{1/2}$ and the asymptotic variance (AVAR) were established in \cite{genon1993estimation}, \cite{jacod1994limit} and \cite{jacod1998asymptotic} (see also \cite{zhang2001martingales} and \cite{mykland2006anova}).

\smallskip 
Under market frictions, the RV is no longer consistent. \cite{zhang2005tale} bring forward the Two-Scale Realized Volatility nonparametric estimator, the first consistent estimator in the presence of noise and with a relatively slow convergence rate of $n^{1/6}$. \cite{zhang2006efficient} modifies it to provide the Multi-Scale Realized Volatility (MSRV) which features the optimal rate of convergence $n^{1/4}$ as documented in \cite{gloter2001diffusions}. Other approaches consist in and are not limited to: pre-averaging (PAE) the observations (\cite{jacod2009microstructure}), \cite{barndorff2008designing} advocates for the realized kernels (RK) and \cite{xiu2010quasi} studies the quasi-maximum likelihood estimator (QMLE) which was originally considered in \cite{ait2005often} when volatility is constant. Those three approaches share the optimal rate property and only differ through edge-effects which impact their respective AVAR. 

\smallskip 
The nonparametric AVAR bound of efficiency is equal to $8 a_0 T^{\frac{1}{2}} \int_0^T \sigma_u^3 du$, where $T$ stands for the time horizon and $a_0^2$ corresponds to the noise variance. This was shown in \cite{reiss2011asymptotic} under the deterministic volatility and Gaussian noise setting, but it is commonly assumed that it stays true under stochastic volatility.  Subsequently, in a recent breakthrough
paper, \cite{altmeyer2015functional} found an estimator based on the spectral approach introduced in \cite{reiss2011asymptotic} which reaches the bound in a very general situation.  More recently, \cite{jacod2015microstructure} proposed an adapted version of the pre-averaging estimator using local estimates as in \cite{reiss2011asymptotic} which gave rise to estimators that are within 7\% of the bound.

\smallskip
To be fair when comparing several estimators, we need the candidates to be equipped with the same technology. Following closely the local technique used in \cite{reiss2011asymptotic} and more recently in \cite{jacod2015microstructure}, we aim to adapt accordingly the RK and the QMLE. Indeed, although both estimators behave remarkably well when volatility is constant, i.e. in the parametric case the ratio of AVAR over the bound of asymptotic efficiency is 1.0025 when considering the most efficient Tukey-Hanning 16 RK and 1 for the QMLE, they can actually be highly inefficient in the non-parametric setting as documented in the following of this introduction and in Section 2. Under time-varying volatility, we aim to reduce significantly their AVAR and make them efficient. Although it would reduce the AVAR the same way, we did not implement the local version of the MSRV. In fact, MSRV and RK are asymptotically equivalent in the sense that they share the same asymptotic variance when considering the same kernel (see Section 2.2 in \cite{bibinger2016inference}).

\smallskip
To reduce the variance, we divide the interval $\big[0,T \big]$ into $B$ non-overlapping regular blocks $\big[ 0, T/B \big]$, $\big[ T/B, 2 T/B \big]$, $\cdots$, $\big[ (B-1)T/B, T \big]$. We then compute the RK (QMLE) on each block, and take the sum of the $B$ estimates. We show that the nonparametric ratio of AVAR over the bound of efficiency converges to the parametric ratio as $B$ increases. More importantly for practical applications, the convergence is very fast, and the gain is already important in the case $B=2$ blocks. 

\smallskip
As an example, we focus on the RK Tukey-Hanning 16 and consider the (apparently innocuous) block constant model $\sigma_t = 1$ for $t \in [0,\frac{1}{2})$ and $\sigma_t = 2$ for $t \in [\frac{1}{2},1]$. When choosing the optimal bandwidth, \cite{barndorff2008designing}\footnote{see pp. 1494-1495 for more details.} showed that the AVAR is equal to
\begin{eqnarray}
\label{AVAR}
AVAR_{[0,T]}^{(RK)} = a_0 \l( T \int_0^T \sigma_u^4 du \r)^{3/4} g,
\end{eqnarray}
where $g$ is defined as
$$g = \frac{16}{3} \sqrt{\rho k_{\bullet}^{0,0} k_{\bullet}^{1,1}} \bigg( \frac{1}{\sqrt{1 + \sqrt{1 + 3 d / \rho^2}}} + \sqrt{1 + \sqrt{1 + 3 d / \rho^2}} \bigg),$$
with
$$\rho = \frac{\int_0^T \sigma_u^2 du}{\sqrt{T\int_0^T \sigma_u^4 du}} \text{   ,   } d = \frac{k_{\bullet}^{0,0} k_{\bullet}^{2,2}}{ (k_{\bullet}^{1,1})^2},$$
and where $k_{\bullet}^{i,i}$ are constant functions of the kernel. We fix $T=1$ and we compute in that case $\int_0^1 \sigma_u^2 du = 5/2$, $\int_0^1 \sigma_u^3 du = 9/2$ and $\int_0^1 \sigma_u^4 du = 17/2$. Thus, the bound of efficiency is equal to $36 a_0$, whereas $AVAR_{[0,1]}^{(RK)} = 37.89 a_0$. This can be expressed as a loss of $\frac{37.89 - 36}{36} \approx 5 \%$, which is to be compared to the loss in the parametric case\footnote{Details can be found on Table II (p. 1495, \cite{barndorff2008designing}).} $\frac{8.02 - 8}{8} \approx .25 \%$. When fixing $B=2$, the volatility on each block is constant and thus yields $AVAR_{[0,1/2]}^{(RK)} = 2^{-3/2} \times 8.02 a_0$ on the first block and $AVAR_{[1/2,1]}^{(RK)} = 2^{3/2} \times 8.02 a_0$ on the second block. As both estimates are uncorrelated\footnote{if we remove end-effects.}, we obtain that the global AVAR is equal to $AVAR_{[0,1]}^{(RK)} = \sqrt{2} (AVAR_{[0,1/2]}^{(RK)} + AVAR_{[1/2,1]}^{(RK)}) = 8.02 a_0 \int_0^1 \sigma_u^3 du$, i.e. .25 \% loss which corresponds exactly to the parametric loss.

\smallskip
From (\ref{AVAR}), we can see that the theoretical loss can be expressed as a deterministic function of the already well-known measure of volatility constancy $\rho$ and another connected quantity which we denote $$\kappa = \frac{\int_0^T \sigma_u^3 du}{
T^{1/4}(\int_0^T \sigma_u^4 du)^{3/4}}.$$
Details can be found in Section 2, along with an expression for the QMLE loss as well. In the previous example where the loss was about $5 \%$, the corresponding setting can be computed as $\rho = 5/2 \times \sqrt{2/17} \approx .86$ and $\kappa = 9/2 \times (2/17)^{3/4} \approx .90$. Volatility on real data is moving more than on this toy example, corresponding to lower $\rho$ and $\kappa$. In their empirical study, \cite{andersen2014robust} daily estimate $\rho^{-1}$ and find that the typical value is around 1.3, and about 1.6 when restricting to the top 10\% days in terms of intraday variation of volatility. This corresponds respectively to estimates of $\rho$ as $1/1.3 \approx .77$ and $1/1.6 \approx .62$. When taking respectively those two realistic values, the corresponding RK and QMLE losses are expected to be around 20\% (can go up to 100 \%), depending on the other parameter value $\kappa$. With such highly inefficient estimators, we believe that there is a practical need for variance reduction. This is especially the case on days when the volatility is moving a lot.

\smallskip
Clearly this estimator is related to local parametric methods in high-frequency data, i.e. aggregating local parametric estimates. For example, \cite{mykland2009inference} investigated the ex post adjustment involving asymptotic likelihood
ratios to make when assuming constant local volatility. \cite{reiss2011asymptotic} showed the asymptotic equivalence in Le Cam's sense between the non-parametric and locally constant volatility experiment. To estimate quarticity and other functionals of volatility, \cite{jacod2013quarticity} estimated the volatility locally and plugged the value into the sum. Our work includes \cite{potiron2017estimation}, \cite{potiron2016local}, \cite{clinet2018statistical}.

\smallskip
The remainder of the paper is structured as follows. Section 2 stretches the limitations of the global approach by expressing the loss as a function of $\rho$ and $\kappa$. In Section 3, we provide the model, investigate the RK and the QMLE and their corresponding limit theory. Section 4 investigates what happens to both methods when considering stochastic arrival times and adding jump in the price process. Section 5 performs a Monte Carlo experiment to assess finite sample performance and AVAR reduction. Section 6 provides an empirical illustration where we quantify the expected gain in practice. Theoretical details and proofs can be found in the Appendix.
\section{Limitations of the global approach} \label{limitations}
This section documents the performance of the global RK and QMLE. In particular, we show how it deteriorates as a function of heteroskedasticity. Finally, we diagnose the reasons and provide the solution to this relative failure. 

\smallskip
One crucial feature common to both estimators is that they behave remarkably well when volatility is constant. Indeed, the QMLE is efficient and the RK Tukey-Hanning 16 almost efficient in that case. Even the RK Tukey-Hanning 2, with an AVAR over the bound of efficiency ratio of less than 1.04, can be considered as "practically efficient". To study what happens when volatility is time-varying, it is useful for $0 \leq r < s \leq T$  to define
$$\rho_{r,s} = \frac{\int_r^s \sigma_u^2 du}{\sqrt{(s-r) \int_r^s \sigma_u^4 du}} \text{ and  } \kappa_{r,s} = \frac{\int_r^s \sigma_u^3 du}{
(s-r)^{1/4}(\int_r^s \sigma_u^4 du)^{3/4}}$$
to be measures of heteroskedasticity. In the following, we will be using $\rho$ and $\kappa$ in place of $\rho_{0,T}$ and $\kappa_{0,T}$. The quantity $\rho$ was already introduced in \cite{barndorff2008designing} and plays an important role in the AVAR of both RK and the QMLE. \cite{xiu2010quasi} (Figure 1, p. 241) expresses the quotient of both AVARs as a function of $\rho$, but does not assess their respective performance when compared to the (conjectured) bound of efficiency defined as $$AVAR_{[0,T]}^{(Bound)} = 8 a_0 T^{\frac{1}{2}} \int_0^T \sigma_u^3 du.$$ 
In contrast, the other quantity $\kappa$ is introduced to investigate that relative performance. More precisely, $\kappa$ is needed to express the AVAR over the bound of efficiency ratio for both approaches since the AVAR does not feature the tricity, i.e. the integrated third moment of volatility, which is key in the bound of efficiency. Evidently, both measures $\rho$ and $\kappa$ are very much connected and we can actually show that we have that 
\begin{eqnarray}
\label{RhoKappa}
0 < \rho_{r,s}^{3/2} \leq \kappa_{r,s} \leq \rho_{r,s}^{1/2} \leq 1.
\end{eqnarray}

Note that the equality $\rho_{r,s} = \kappa_{r,s} = 1$ for all $r,s \in [0,T]$ corresponds to the parametric case. In particular, Eq. (\ref{RhoKappa}) implies that for any given $\rho$, the value $\kappa$ is a.s. in a small boundary around $\rho$. This is of particular interest because as far as the authors know under noisy observations the literature on quarticity estimation\footnote{see, e.g., \cite{jacod2009microstructure}, \cite{andersen2014robust}, \cite{mancino2012estimation}, \cite{potiron2016local} and \cite{clinet2017estimation}.}  is far more abundant than the corresponding work on estimating tricity \footnote{see the spectral approach AVAR estimator in \cite{altmeyer2015functional}, \cite{potiron2016local} and \cite{clinet2017estimation}.}, which implies that in practice $\rho$ can be estimated relatively easily, whereas $\kappa$ would require more effort. From \cite{andersen2014robust} (Figure 7, p. 41), when taking a pre-averaging window equal to one minute (chosen consistently with their recommendation in Section 5.2.4 on p. 34 where the authors argue that a reasonable choice of window should lie between 30 seconds and 2 minutes) we infer that the estimates of $\rho$ are about $1/1.2 \approx .83$, $1/1.3 \approx .77$ and $1/1.6 \approx .62$ when considering respectively the bottom 10\% days in terms of intraday variation of volatility, all days and the top 10\% days in terms of intraday variation of volatility. Correspondingly, we will be using $\rho_{high} = .83$, $\rho_{regular} = .77$, $\rho_{low} = .62$ to refer respectively to high, regular and low values of $\rho$ throughout the rest of the paper. It is not surprising to find such low values on stocks data as it has been understood for several decades now that many stylized facts describe volatility as time-varying (see, e.g., \cite{ghysels19965}, \cite{engle2001good}).

\smallskip
When using the optimal bandwidth, $AVAR_{[0,T]}^{(RK)}$ is defined as
\begin{eqnarray*}
AVAR_{[0,T]}^{(RK)} = a_0 \l( T \int_0^T \sigma_u^4 du \r)^{3/4} g,
\end{eqnarray*}
where we have 
$$g = \frac{16}{3} \sqrt{\rho k_{\bullet}^{0,0} k_{\bullet}^{1,1}} \bigg( \frac{1}{\sqrt{1 + \sqrt{1 + 3 d / \rho^2}}} + \sqrt{1 + \sqrt{1 + 3 d / \rho^2}} \bigg) \text{ and } d = \frac{k_{\bullet}^{0,0} k_{\bullet}^{2,2}}{ (k_{\bullet}^{1,1})^2},$$
with $k_{\bullet}^{i,i}$ constant functions of the kernel. Correspondingly, we give the formal definition of the RK loss as
\begin{eqnarray}
\label{loss}
L^{(RK)} = \frac{AVAR_{[0,T]}^{(RK)}}{AVAR_{[0,T]}^{(Bound)}} - 1.
\end{eqnarray}
Obvious computations lead to $L^{(RK)} = g \kappa^{-1} /8 -1$. If we see $g$ as a function of $\rho$, $L^{(RK)}$ is equal to $g(1)/8 - 1$  in the parametric case. The parametric values for several kernels can be directly inferred from \cite{barndorff2008designing} (Table II, p. 1495) and the loss is equal to .25 \% when considering the Tukey-Hanning 16, 3.625 \% for the Tukey-Hanning 2, 6.75 \% for the Parzen and 13 \% for the Cubic kernel. We have that $g$ is an increasing function of $\rho$, and thus the effect of $\rho$ and $\kappa$ are reverse. Next we consider the AVAR of the QMLE expressed via 
$$AVAR_{[0,T]}^{(QMLE)} = \frac{5 T a_0 \int_0^T \sigma_u^4 du}{( \int_0^T \sigma_u^2 du)^{1/2}} + 3 a_0 \l(\int_0^T \sigma_u^2 du \r)^{3/2}.$$
The formula can actually be found in Box V (p. 240, \cite{xiu2010quasi}). The corresponding QMLE loss is defined in analogy with (\ref{loss}) and can be expressed as
\begin{eqnarray*}
L^{(QMLE)} & = & \frac{AVAR_{[0,T]}^{(QMLE)}}{AVAR_{[0,T]}^{(Bound)}} - 1\\
& = & \frac{5+3 \rho^2}{8 \kappa \rho^{1/2}} - 1.
\end{eqnarray*}
Figure \ref{efficiencyGlob} plots the feasible loss region for three typical RK, the QMLE and the PAE with triangle kernel. It is clear that they highly lose efficiency when $\rho$ is decreasing. The QMLE is dominated by the RK approach when $\rho$ is low, which was observed on Figure 1 (p. 241, \cite{xiu2010quasi}).

\smallskip
The problem behind this potentially high loss can be intuitively explained as follows. For the RK, although the optimal tuning parameter is robust to time-varying volatility, it suffers from the fact that one day\footnote{or one week, one month, etc.} is too long to "stay optimal". This is a very similar situation to the PAE, which also features a tuning parameter. Subsequently, \cite{jacod2015microstructure} used block estimations to heavily reduce variance. As for the QMLE, which in contrast is designed in a parametric way yielding no choice of tuning parameter, the smaller $\rho$ and $\kappa$ are, the further the misspecified model deviates from the truth. It is by nature a different estimator, but local methods are expected to reduce the misspecification as in \cite{reiss2011asymptotic}. Thus, we aim to reduce the non-parametric loss into the parametric loss using adapted local methods. As we can see on Figure \ref{efficiencyGlob}, the QMLE will benefit the most as it is efficient in the parametric case and deteriorates more than the RK in the non-parametric case.

\section{Local estimation}
\subsection{Model for the observations}
\label{model}
We assume that the latent log-price process and the volatility follow
\begin{eqnarray}
dX_t & = & b_t dt + \sigma_t dW_t,\\
d \sigma_t & = & \tilde{b}_t dt + \tilde{\sigma}_t^{(1)} dW_t + \tilde{\sigma}_t^{(2)} d\tilde{W}_t + d\tilde{J}_t,
\end{eqnarray}
where $(W_t, \tilde{W}_t)$ is a 2 dimensional standard Brownian motion, the drift $(b_t,\tilde{b}_t)$ is componentwise locally bounded, the volatility matrix 
$$\left(\begin{matrix}  \sigma_t & 0 \\ 
\tilde{\sigma}_t^{(1)} & \tilde{\sigma}_t^{(2)} 
\end{matrix}\right)$$ 
is componentwise locally bounded, itself an It\^{o} process and $\inf_t (\min(\sigma_t, \tilde{\sigma}_t^{(2)})) > 0$ a.s. We also assume that $\tilde{J}_t$ is a pure jump process of finite activity. This rules out jumps in $X_t$, an issue addressed in Section \ref{robustness}. In contrast the volatility process can include jumps (see, e.g., \cite{todorov2011volatility} for empirical evidence). The observations are contaminated by the microstructure noise so that we observe 
$$Z_{t_i} = X_{t_i} + \epsilon_{t_i},$$ 
where $t_i$ correspond to the observation times\footnote{Note that $t_i$, $\Delta$, etc. are implicitly assumed to depend on the index $n$. We sometimes refer to $t_i^n$, $\Delta_n$ when necessary.} which are assumed to be regularly spaced, i.e. satisfying $t_{i} - t_{i-1} = \Delta$. Stochastic arrival times are also considered in Section \ref{robustness}. Furthermore, we assume that the noise is independent and identically distributed (i.i.d), and independent of the other quantities, with null-mean, variance $a_0^2$ and finite fourth moment. Next the horizon time is defined as $T >0$. Finally, we consider the high frequency asymptotics and assume that $n$ goes to infinity, where $T = n \Delta$. In particular, the time gap $\Delta$ goes to $0$.  

\subsection{Realized Kernels}
\label{RK}

\subsubsection{Local RK definition}
We consider first the framework $B=1$ where the local RK coincides with the RK. The flat-top RK takes on the form
$$K = \gamma_0 + \sum_{h=1}^{H} k \l( \frac{h-1}{H} \r) \l( \gamma_h + \gamma_{-h} \r),$$ 
where $H > 0$ and the deterministic kernel $k(x)$ is defined for $x \in [0,1]$. The realized autocovariance is defined as
$$\gamma_h = \sum_{j=1}^{n} (Z_{\Delta j} - Z_{\Delta (j-1)}) (Z_{\Delta (j - h)} - Z_{\Delta (j-h-1)}),$$
where $h = -H, \cdots, -1, 0, 1 , \cdots, H$. 

\smallskip
In the general case $B >1$, for each $i =1,\cdots,B$ we choose a bandwidth $H_i >0$ and define $K_i$ the estimate on the $i$th block $\big[ \Tau_{i-1}, \Tau_i \big]$, where $\Tau_{i} = iT/B$. On each block, we also assume that the number of observations $n/B$ is an integer for simplicity of exposition. Formally, all the considered quantities could be written with floor brackets, and all the results would still hold. We aggregate the local estimates to obtain the adapted version of the RK defined as 
$$\tilde{K} = \sum_{i=1}^{B} K_i.$$ 
The corresponding $H = (H_1,\cdots,H_B)$ is now $B$-dimensional in this case. We also adapt the jittering introduced in Section 2.6 (\cite{barndorff2008designing}, p. 1487), i.e. for $i=0, \cdots, B$ we assume that $X_{\Tau_{i}}$ is an average of $m$ distinct observations on the interval $(\Tau_i - \Delta, \Tau_i + \Delta)$. 
\subsubsection{Asymptotic theory}
We define $L_X$ for $\sigma(X)$-stable convergence. We further define 
$$\xi_{r,s}^2 = \frac{a_0^2}{\sqrt{ (s-r) \int_{r}^{s} \sigma_u^4 du}}$$
as the noise-to-signal ratio, and refer to $\xi^2 = \xi_{0,T}^2$ in the following. Finally, we define kernel weight functions $k(x)$ that are two times continuously differentiable on $[0,1]$ and 
$$k_{\bullet}^{0,0} = \int_0^1 k(x)^2 dx, \text{ } k_{\bullet}^{1,1} = \int_0^1 k'(x)^2 dx, \text{ } k_{\bullet}^{2,2} = \int_0^1 k''(x)^2 dx.$$
We recall the main asymptotic result with fastest rate of convergence about the RK which can be found in Theorem 4 (p. 1493) in \cite{barndorff2008designing}. When $k'(0)^2 + k'(1)^2 =0$, $m \rightarrow \infty$, and $H = c n^{1/2}$
, we have
\begin{eqnarray}
\label{RK00}
n^{1/4} \Big( K - \int_0^T \sigma_u^2 du \Big) \overset{L_X}{\rightarrow} \calm\caln \Big(0, \underbrace{4T \int_0^T \sigma_u^4 du \big\{ c k_{\bullet}^{0,0} + c^{-1} 2 k_{\bullet}^{1,1} \rho \xi^2 + c^{-3} k_{\bullet}^{2,2} \xi^4 \big\}}_{AVAR_{[0,T]}^{(RK,c)}} \Big),
\end{eqnarray}
where $\calm\caln$ denotes a mixed normal distribution. A straightforward application of (\ref{RK00}) on each block $i=1, \cdots, B$ yields
\begin{eqnarray}
\label{fastlocalAVAR}
n^{1/4} \l(K_i - \int_{\Tau_{i-1}}^{\Tau_i} \sigma_u^2 du \r) \overset{L_X}{\rightarrow} \calm\caln \l(0, B^{1/2} AVAR_{[\Tau_{i-1}, \Tau_i]}^{(RK,c_i)} \r),
\end{eqnarray}
where $c_i$ is the tuning parameter used on the $i$th block. Next we show that the AVAR associated to $\tilde{K}$ is equal to the sum of variance terms in (\ref{fastlocalAVAR}).
\begin{RK}
\label{RKth} (CLT for local RK) When $k'(0)^2 + k'(1)^2 =0$, $m \rightarrow \infty$, and $H = c n^{1/2}$
, we have
\begin{eqnarray}
\label{RKth2}
n^{1/4} \l( \tilde{K} - \int_0^T \sigma_u^2 du \r) \overset{L_X}{\rightarrow}  \calm\caln \l(0, B^{1/2} \sum_{i=1}^B AVAR_{[\Tau_{i-1}, \Tau_i]}^{(RK,c_i)}\r).
\end{eqnarray}
\end{RK}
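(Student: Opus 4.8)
The plan is to promote the marginal block limit theorems (\ref{fastlocalAVAR}) to a \emph{joint} stable limit theorem in which the $B$ block estimation errors converge to mutually (conditionally) independent mixed Gaussian variables, and then to sum. Since $\int_0^T \sigma_u^2\,du = \sum_{i=1}^B \int_{\Tau_{i-1}}^{\Tau_i}\sigma_u^2\,du$, the centering decomposes exactly over blocks, so that
$$n^{1/4}\left(\tilde K - \int_0^T \sigma_u^2\,du\right) = \sum_{i=1}^B Y_i^n, \qquad Y_i^n := n^{1/4}\left(K_i - \int_{\Tau_{i-1}}^{\Tau_i}\sigma_u^2\,du\right),$$
and everything reduces to the joint behaviour of the vector $(Y_1^n,\dots,Y_B^n)$.

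First I would record the standard decomposition of each realized kernel $K_i$ (as in the proof of Theorem 4 of \cite{barndorff2008designing}) into a leading martingale term built from the efficient-price increments $\sigma_t\,dW_t$ and the noise variables $\epsilon_{t_j}$ restricted to the block $[\Tau_{i-1},\Tau_i]$, plus remainder terms that are $o_{\mathbb{P}}(1)$ after the $n^{1/4}$ scaling. The marginal convergence (\ref{fastlocalAVAR}) already guarantees that each $Y_i^n$ is stochastically bounded and converges $L_X$-stably. The key structural observation is that the leading term of $Y_i^n$ is a functional of Brownian increments over $[\Tau_{i-1},\Tau_i]$ and of the i.i.d.\ noise sampled at the observation times inside that block; for distinct blocks these underlying variables are drawn from \emph{disjoint} index sets, and hence independent, save for a vanishing number of observations near the shared endpoints.

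Next I would establish joint $L_X$-stable convergence with a diagonal limiting covariance. Using the characterisation of stable convergence, it suffices to show, for every bounded $\sigma(X)$-measurable $Z$ and all $(u_1,\dots,u_B)$, that
$$\mathbb{E}\Big[\exp\Big(i\sum_{j=1}^B u_j Y_j^n\Big) Z\Big] \longrightarrow \mathbb{E}\Big[\exp\Big(-\tfrac12\sum_{j=1}^B u_j^2\, B^{1/2} AVAR_{[\Tau_{j-1},\Tau_j]}^{(RK,c_j)}\Big) Z\Big].$$
This would follow from the stable martingale central limit theorem applied to the triangular array obtained by concatenating the block martingales: the predictable quadratic variation converges block by block to the diagonal entries $B^{1/2} AVAR_{[\Tau_{j-1},\Tau_j]}^{(RK,c_j)}$ by (\ref{fastlocalAVAR}), while the Lindeberg condition is inherited from the marginal proofs, so that the only genuinely new ingredient is the control of the \emph{off-diagonal} brackets between $Y_i^n$ and $Y_j^n$ for $i\neq j$.

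The hard part is precisely showing that these cross-block covariances vanish, i.e.\ that the blocks decorrelate. Only adjacent blocks $i$ and $i+1$ can interact, and only through observations in a window of width $O(H\Delta)=O(n^{-1/2})$ around the common boundary $\Tau_i$. I would bound this boundary contribution by exploiting two facts: the assumption $k'(0)^2+k'(1)^2=0$, which annihilates the leading $O(n^{1/2})$ edge effect of the flat-top kernel so that the residual boundary coupling is of strictly smaller order; and the jittering with $m\to\infty$, under which the endpoint value $X_{\Tau_i}$ is replaced by an average of $m$ observations whose noise component has variance $O(a_0^2/m)\to 0$, so the shared endpoint carries no asymptotic noise covariance. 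Combining these, the cross-bracket between $Y_i^n$ and $Y_{i+1}^n$ is $o_{\mathbb{P}}(1)$, the limiting covariance matrix is diagonal, and the joint limit is a vector of independent $\calm\caln\big(0, B^{1/2} AVAR_{[\Tau_{i-1},\Tau_i]}^{(RK,c_i)}\big)$ variables. By the continuous mapping theorem applied to the summation map, $\sum_i Y_i^n$ converges $L_X$-stably to the sum of these conditionally independent mixed normals, which is $\calm\caln\big(0, B^{1/2}\sum_{i=1}^B AVAR_{[\Tau_{i-1},\Tau_i]}^{(RK,c_i)}\big)$, as claimed.
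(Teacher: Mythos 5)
Your proposal follows essentially the same route as the paper: there, $\tilde K$ is likewise decomposed block-wise through the $K(X,X)$, cross, and $K(U)$ components of \cite{barndorff2008designing}, with the quadratic variation of the signal martingale shown to be additive over blocks, the cross terms conditionally uncorrelated across blocks, and the only surviving adjacent-block coupling sitting in the noise/end-effect term of (\ref{th34}), which is killed precisely by $m\to\infty$ as you argue. The joint stable limit is then assembled via Proposition \ref{propositionConditionalStable} and the summation map, matching your plan of adding conditionally independent mixed normals with diagonal limiting covariance.
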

\begin{remark}
The requirement that $m \rightarrow \infty$ in (\ref{RKth2}) is due to end-effects. The reader should refer to the discussion in \cite{barndorff2008designing} (p. 1493) in the case $B=1$. When $m$ is fixed, the relative contribution\footnote{The corresponding expression can be found in the second term in (\ref{th34}).} to the AVAR is proportional to $\xi^2/m$, as it was already the case for the RK. \cite{barndorff2009realized} documented that this magnitude can reasonably be ignored in practice.
\end{remark}
To determine the $B$ tuning parameters that minimize the AVAR in (\ref{RKth2}), we can consider each local AVAR independently as they depend on one distinct tuning parameter. For that purpose, we follow Section 4.3 in \cite{barndorff2008designing} (p. 1494-1496) and consider that 
$$(H^{(1)}, \cdots, H^{(B)}) = (c_1 \xi_{0, \Tau_{1}}, \cdots, c_B \xi_{\Tau_{B-1}, T}) \sqrt{n/B}.$$ 
The optimal values are then shown to be equal to 
$$c_i^* = \sqrt{\rho_{\Tau_{i-1}, \Tau_i}  \frac{k_{\bullet}^{1,1}}{k_{\bullet}^{0,0}} \bigg( 1 + \sqrt{1 + 3 d / \rho_{\Tau_{i-1}, \Tau_i}^2} \bigg)}.$$
The corresponding AVAR is equal to 
$$AVAR_{[\Tau_{i-1}, \Tau_i]}^{(RK,c_i^*)} =  a_0 \l( \Delta_B \int_{T_{i-1}}^{T_i} \sigma_u^4 du \r)^{3/4} g (\rho_{T_{i-1},T_i}),$$
where $g$ is considered here as a function of $\rho$ and $\Delta_B = T/B$ corresponds to the block length. 

\smallskip
We provide in what follows a consistent estimator for each tuning parameter. To pre-estimate on each block the integrated volatility and quarticity, we consider the pre-averaging estimators from \cite{jacod2009microstructure}. For each block $i=1, \cdots, B$ we choose an integer $k_i$ and a real parameter $\theta_i > 0$ which satisfy $k_i \sqrt{\Delta} = \theta_i + o (\Delta^{1/4})$. We also consider a continuous function $f$ on $[0,1]$, piecewise $C^1$ with a piecewise Lipschitz derivative $f'$ such that $f(0) = f(1) = 0$, $\int_0^1 f(s)^2 ds > 0$. We define
\begin{eqnarray}
\phi_1 (s) & = & \int_s^1 f'(u) f'(u-s)du,\\
\phi_2 (s) & = & \int_s^1 f(u) f(u-s)du,\\
\Phi_{jl} & = & \int_0^1 \phi_j (s) \phi_l (s) ds \text{ for } j,l =1,2,
\end{eqnarray}
$\psi_1 = \phi_1 (0)$ and $\psi_2 = \phi_2 (0)$. We further define 
$$\bar{Z}_{l,i} = \sum_{j=1}^{k_i-1} f(j/k_i) (Z_{(l+j)\Delta} - Z_{(l+j-1)\Delta}).$$
The pre-averaging estimators of integrated volatility and quarticity on each block take on the form
\begin{eqnarray}
\widehat{\int_{\Tau_{i-1}}^{\Tau_i} \sigma_u^2 du} & = &\frac{\sqrt{\Delta}}{\theta_i \psi_2} \sum_{j=0}^{n/B - k_i +1} \bar{Z}_{j,i}^2 - \frac{\psi_1 \Delta}{2 \theta_i^2 \psi_2} \sum_{j=1}^{n/B} (Z_{(n(i-1)/B + j) \Delta} - Z_{(n(i-1)/B + j-1) \Delta})^2,\\
\widehat{\int_{\Tau_{i-1}}^{\Tau_i} \sigma_u^4 du} & = & \frac{1}{3 \theta_i^2 \psi_2^2} \sum_{j=0}^{n/B - k_i +1} \bar{Z}_{j+(i-1)n/B,i}^4\\ \nonumber
& &- \frac{\Delta \psi_1}{\theta_i^4 \psi_2^2} \sum_{j=0}^{n/B - 2 k_i +1} \bar{Z}_{j+(i-1)n/B,i}^2 \sum_{l=j+k_i}^{j + 2k_i-1} (Z_{l \Delta} - Z_{(l-1+(i-1)n/B) \Delta})^2\\ \nonumber
& & + \frac{\Delta \psi_1^2}{4 \theta_i^4 \psi_2^2} \sum_{j=1}^{n/B - 2} (Z_{(j+(i-1)n/B) \Delta} - Z_{(j-1+(i-1)n/B) \Delta})^2 (Z_{(j+2+(i-1)n/B) \Delta} - Z_{(j+1+(i-1)n/B) \Delta})^2.
\end{eqnarray}
We then estimate
\begin{eqnarray}
\widehat{\rho}_{\Tau_{i-1}, \Tau_i} & = & \frac{\widehat{\int_{\Tau_{i-1}}^{\Tau_i} \sigma_u^2 du} }{\sqrt{\Delta_B \widehat{\int_{\Tau_{i-1}}^{\Tau_i} \sigma_u^4 du}}},\\
\widehat{c}_i^* & = & \sqrt{\widehat{\rho}_{\Tau_{i-1}, \Tau_i}  \frac{k_{\bullet}^{1,1}}{k_{\bullet}^{0,0}} \bigg( 1 + \sqrt{1 + 3 d / \widehat{\rho}_{\Tau_{i-1}, \Tau_i}^2} \bigg)}.
\end{eqnarray}
We provide now a consistent estimator of $AVAR_{B}^{(RK)} = B^{1/2} \sum_{i=1}^B  AVAR_{[\Tau_{i-1}, \Tau_i]}^{(RK,c_i^*)}$. We estimate the noise as $\widehat{a}^2 = (2n)^{-1} \sum_{j=1}^{n}{\l(Z_{\Delta(j+1)} - Z_{\Delta j} \r)^2}$ and the asymptotic variance as
$$\widehat{AVAR}_B^{(RK)} =  \widehat{a}  B^{1/2} \sum_{i=1}^B \l( \Delta_B \widehat{\int_{T_{i-1}}^{T_i} \sigma_u^4 du} \r)^{3/4} g (\widehat{\rho}_{T_{i-1},T_i}).$$
The feasible CLT is given in the following theorem.
\begin{RKcor}
\label{RKcor} (feasible CLT for local RK) When $k'(0)^2 + k'(1)^2 =0$, $m \rightarrow \infty$, and $H = \widehat{c} n^{1/2}$ with $\widehat{c}=(\widehat{c}_1^*, \cdots, \widehat{c}_B^*)$, we have $\widehat{AVAR}_B^{(RK)} \overset{\proba}{\rightarrow} AVAR_{B}^{(RK)}$ and
\begin{eqnarray}
\label{RKthcor2}
n^{1/4} \frac{ \tilde{K} - \int_0^T \sigma_u^2 du}{\sqrt{\widehat{AVAR}_B^{(RK)}}}\overset{\call}{\rightarrow}  \caln \l(0, 1\r).
\end{eqnarray}
\end{RKcor}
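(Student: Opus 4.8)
The plan is to establish the two assertions separately: first the consistency $\widehat{AVAR}_B^{(RK)} \overset{\proba}{\rightarrow} AVAR_{B}^{(RK)}$, and then the studentized limit, which will follow by feeding this consistency into the stable convergence already obtained in Theorem \ref{RKth}. Throughout I keep $B$ fixed, so that each block still carries $n/B \to \infty$ observations and all blockwise limit theorems apply verbatim on each $[\Tau_{i-1}, \Tau_i]$.

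For the consistency of the variance estimator I would argue blockwise. On block $i$, the pre-averaging estimators $\widehat{\int_{\Tau_{i-1}}^{\Tau_i} \sigma_u^2 du}$ and $\widehat{\int_{\Tau_{i-1}}^{\Tau_i} \sigma_u^4 du}$ are consistent for the integrated volatility and quarticity by \cite{jacod2009microstructure}, applied to the restriction of the data to the block. Since $g$ and the map defining $\widehat{c}_i^*$ are continuous and the denominators are bounded away from zero under $\inf_t \sigma_t > 0$, the continuous mapping theorem gives $\widehat{\rho}_{\Tau_{i-1}, \Tau_i} \overset{\proba}{\rightarrow} \rho_{\Tau_{i-1}, \Tau_i}$, $\widehat{c}_i^* \overset{\proba}{\rightarrow} c_i^*$, and $g(\widehat{\rho}_{\Tau_{i-1}, \Tau_i}) \overset{\proba}{\rightarrow} g(\rho_{\Tau_{i-1}, \Tau_i})$. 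The noise estimator $\widehat{a}^2$ is the standard average of squared increments, whose expectation is dominated by the noise term $2a_0^2$ at this sampling scale, so a law of large numbers yields $\widehat{a}^2 \overset{\proba}{\rightarrow} a_0^2$ and hence $\widehat{a} \overset{\proba}{\rightarrow} a_0$. Assembling these through the continuous map defining $\widehat{AVAR}_B^{(RK)}$ over the finite sum $i=1,\dots,B$ gives $\widehat{AVAR}_B^{(RK)} \overset{\proba}{\rightarrow} AVAR_{B}^{(RK)}$.

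For the central limit theorem the delicate point is that the bandwidth is now the data-driven $H = \widehat{c}\, n^{1/2}$ rather than the fixed $c$ of Theorem \ref{RKth}. I would treat this in two stages. Since the oracle tuning $c_i^*$ is a functional of the volatility path, it is measurable with respect to the $\sigma$-field driving the $\sigma(X)$-stable convergence, so Theorem \ref{RKth} applies with $c_i^*$ in place of $c_i$ and yields $n^{1/4}\l( \tilde{K}(c^*) - \int_0^T \sigma_u^2 du \r) \overset{L_X}{\rightarrow} \calm\caln\l(0, AVAR_{B}^{(RK)}\r)$. Then I would show that swapping $c^*$ for $\widehat{c}$ is asymptotically negligible, i.e. $n^{1/4}\l( \tilde{K}(\widehat{c}) - \tilde{K}(c^*) \r) = o_{\proba}(1)$: because $\widehat{c}_i^* - c_i^* = o_{\proba}(1)$ and the realized kernel depends smoothly on its bandwidth over compacts bounded away from zero, this is obtained by localizing to the event $\{ |\widehat{c}_i^* - c_i^*| \le \eta \}$ (of probability tending to one) and controlling, uniformly in $c$ over a small neighborhood, the fluctuation of $K_i$ under bandwidth perturbations.

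Finally I would assemble the studentized statement. Because the convergence is $\sigma(X)$-stable and the limiting variance is measurable with respect to the same $\sigma$-field, the pair $\l( n^{1/4}\l( \tilde{K} - \int_0^T \sigma_u^2 du \r), \widehat{AVAR}_B^{(RK)} \r)$ converges jointly (stably) to $\l( \sqrt{AVAR_{B}^{(RK)}}\, N, AVAR_{B}^{(RK)} \r)$ with $N \sim \caln(0,1)$ independent of $\sigma(X)$; dividing and applying the continuous mapping theorem gives $n^{1/4}\l( \tilde{K} - \int_0^T \sigma_u^2 du \r) / \sqrt{\widehat{AVAR}_B^{(RK)}} \overset{\call}{\rightarrow} \caln(0,1)$. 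I expect the main obstacle to be the second stage above, namely the proof that the data-driven bandwidth is asymptotically equivalent to the oracle one, since it requires a quantitative bound on the sensitivity of the realized kernel to perturbations of $H$ of order $o_{\proba}(n^{1/2})$; the consistency of $\widehat{AVAR}_B^{(RK)}$ and the concluding Slutsky-type step are by comparison routine.
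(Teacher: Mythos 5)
Your proposal is correct and follows essentially the same route as the paper: consistency of $\widehat{AVAR}_B^{(RK)}$ via the blockwise consistency of the pre-averaging estimators of integrated volatility and quarticity and of $\widehat{a}^2$, followed by a Slutsky-type argument exploiting the $\sigma(X)$-stable convergence of the infeasible CLT. The only difference is that you explicitly isolate the replacement of the oracle bandwidth $c^*$ by the estimated $\widehat{c}$ as a separate negligibility step $n^{1/4}\l(\tilde{K}(\widehat{c})-\tilde{K}(c^*)\r)=o_{\proba}(1)$, a point the paper's one-line proof leaves implicit; your treatment is the more careful one there, though the uniform-in-$c$ sensitivity bound you defer is precisely where the remaining technical work would lie.
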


\smallskip
Finally, we show that when choosing the optimal values, the AVAR associated to $\tilde{K}$ goes to $\frac{g(1)}{8} AVAR_{[0,T]}^{(Bound)}$ when $B \rightarrow \infty$. The constant $g(1)/8$, when normalized to $g(1)/8 - 1$, corresponds to the parametric loss and depends solely on the shape of the kernel. The rationale of such result is that when $B$ increases we have the volatility roughly constant on each block and thus
\begin{eqnarray*}
\sum_{i=1}^B B^{1/2} AVAR_{[\Tau_{i-1}, \Tau_i]}^{(RK,c_i^*)} & = & \sum_{i=1}^B a_0 B^{1/2} \l( \Delta_B \int_{T_{i-1}}^{T_i} \sigma_u^4 du \r)^{3/4} g (\rho_{T_{i-1},T_i}),\\
& \approx & \sum_{i=1}^B a_0 B^{1/2} \Delta_B^{3/2} \sigma_{\Tau_{i-1}}^3 g (1).
\end{eqnarray*}
Next we obtain by a Riemann sum argument that
\begin{eqnarray*}
\sum_{i=1}^B a_0 B^{1/2} \Delta_B^{3/2} \sigma_{\Tau_{i-1}}^3 g (1) & = & a_0 T^{1/2} \sum_{i=1}^B \Delta_B \sigma_{\Tau_{i-1}}^3 g (1),\\
& \approx & a_0 T^{1/2} \int_0^T \sigma_u^3 du g (1),
\end{eqnarray*}
which can be expressed as $\frac{g(1)}{8} AVAR_{[0,T]}^{(Bound)}$. The formal result is given in the following proposition.
\begin{propAVARRK} \label{propAVARRK}
(Convergence of local RK AVAR)  When $B \to +\infty$, we have
\begin{eqnarray}
\label{eqAVARRK}
AVAR_{B}^{(RK)} \overset{a.s.}{\rightarrow} \frac{g(1)}{8} AVAR_{[0,T]}^{(Bound)}.
\end{eqnarray}
\end{propAVARRK}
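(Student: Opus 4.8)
The plan is to turn the heuristic preceding the statement into a pathwise argument on the almost-sure event $\Omega_0$ on which $\sigma$ is c\`adl\`ag, bounded together with $\sigma^{-1}$ on $[0,T]$, and has only finitely many jumps, say at $\tau_1<\cdots<\tau_N$; all of this is guaranteed by the model, since the volatility is an It\^o process with a finite-activity jump part and $\inf_t\sigma_t>0$ a.s. On $\Omega_0$ the maps $\sigma^3$ and $\sigma^4$ are bounded and Riemann integrable, and $\sigma$ is uniformly continuous on each of the finitely many intervals on which it has no jump (with left limits inserted at the jump times), with a common modulus $\omega$ satisfying $\omega(\delta)\to0$ as $\delta\to0$. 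Writing $V_i:=\int_{\Tau_{i-1}}^{\Tau_i}\sigma_u^4\,du$ and recalling the expression derived above,
$$AVAR_{B}^{(RK)}=a_0 B^{1/2}\sum_{i=1}^B(\Delta_B V_i)^{3/4}\,g(\rho_{\Tau_{i-1},\Tau_i}),$$
I would split off the value $g(1)$:
$$AVAR_{B}^{(RK)}=a_0 g(1)B^{1/2}\sum_{i=1}^B(\Delta_B V_i)^{3/4}+a_0 B^{1/2}\sum_{i=1}^B(\Delta_B V_i)^{3/4}(g(\rho_{\Tau_{i-1},\Tau_i})-g(1)),$$
and treat the two sums separately.

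For the main term I would call a block \emph{good} if $\sigma$ has no jump in $(\Tau_{i-1},\Tau_i]$ and \emph{bad} otherwise, so that there are at most $N$ bad blocks. On a good block $\sigma^4$ is continuous, so the mean value theorem for integrals gives a tag $\xi_i\in[\Tau_{i-1},\Tau_i]$ with $V_i=\Delta_B\sigma_{\xi_i}^4$, whence $(\Delta_B V_i)^{3/4}=\Delta_B^{3/2}\sigma_{\xi_i}^3$. Using the scaling identity $B^{1/2}\Delta_B^{3/2}=T^{1/2}\Delta_B$ (recall $\Delta_B=T/B$), the good blocks contribute
$$a_0 g(1)T^{1/2}\sum_{\mathrm{good}\ i}\Delta_B\sigma_{\xi_i}^3,$$
a tagged Riemann sum of mesh $\Delta_B\to0$; since $\sigma^3$ is Riemann integrable and the at-most-$N$ missing bad terms are $O(N\Delta_B)\to0$, this converges to $a_0 g(1)T^{1/2}\int_0^T\sigma_u^3\,du$. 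On bad blocks $V_i\le\Delta_B\sup_{[0,T]}\sigma^4$, so their contribution is $O(B^{1/2}\cdot N\cdot\Delta_B^{3/2})=O(NT^{1/2}\Delta_B)\to0$, and the main term converges to $a_0 g(1)T^{1/2}\int_0^T\sigma_u^3\,du$.

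For the remainder I would first record that $g$ is continuous on $(0,1]$ and admits a finite limit as $\rho\downarrow0$ (a direct check on its closed form), hence is bounded on $(0,1]$; since $\rho_{\Tau_{i-1},\Tau_i}\in(0,1]$ by (\ref{RhoKappa}), the factors $g(\rho_{\Tau_{i-1},\Tau_i})$ are uniformly bounded. The crucial step is $\sup_{\mathrm{good}\ i}|1-\rho_{\Tau_{i-1},\Tau_i}|\to0$: by Cauchy--Schwarz one has $\rho_{r,s}\le1$ with defect $1-\rho_{r,s}^2$ controlled by the oscillation of $\sigma$ on $[r,s]$ (divided by the lower bound of $\sigma^4$), and on good blocks this oscillation is at most $\omega(\Delta_B)\to0$. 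By continuity of $g$ at $1$ this yields $\sup_{\mathrm{good}\ i}|g(\rho_{\Tau_{i-1},\Tau_i})-g(1)|\to0$; since $B^{1/2}\sum_i(\Delta_B V_i)^{3/4}$ stays bounded (from the main-term analysis), the good part of the remainder vanishes, while the bad part is again $O(NT^{1/2}\Delta_B)\to0$ because $|g(\rho)-g(1)|\le2\sup_{(0,1]}|g|$ there. Combining, $AVAR_{B}^{(RK)}\to a_0 g(1)T^{1/2}\int_0^T\sigma_u^3\,du=\tfrac{g(1)}{8}AVAR_{[0,T]}^{(Bound)}$, as claimed.

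I expect the main obstacle to be the bookkeeping around the volatility jumps: isolating the finitely many bad blocks, verifying that the oscillation bound --- and hence $\rho_{\Tau_{i-1},\Tau_i}\to1$ --- is genuinely \emph{uniform} over the good blocks despite the absence of global continuity, and confirming that the near-jump blocks never dominate either sum. Once this pathwise regularity is pinned down, the scaling identity and the Riemann-sum convergence are routine.
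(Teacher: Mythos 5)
Your proof is correct, but it takes a genuinely different route from the paper's. The paper works in $\mathbb{L}^q$: after the localization of Section \ref{simplification} (so that $\sigma$, $\sigma^{-1}$ and the coefficients of its It\^o decomposition are bounded by nonrandom constants), it expands $\int_{\Tau_{i-1}}^{\Tau_i}\sigma_u^p\,du=\sigma_{\Tau_{i-1}}^p\Delta_B+O_{\mathbb{L}^q}(\Delta_B^{3/2})$ uniformly over the jump-free blocks, deduces $\rho_{\Tau_{i-1},\Tau_i}=1+O_{\mathbb{L}^q}(\Delta_B^{1/2})$, $g(\rho_{\Tau_{i-1},\Tau_i})=g(1)+O_{\mathbb{L}^q}(\Delta_B^{1/2})$ and $\big(\Delta_B\int_{\Tau_{i-1}}^{\Tau_i}\sigma_u^4du\big)^{3/4}=\Delta_B^{1/2}\int_{\Tau_{i-1}}^{\Tau_i}\sigma_u^3du+O_{\mathbb{L}^q}(\Delta_B^{2})$, sums these to get a global error $Y_B=O_{\mathbb{L}^q}(\Delta_B^{1/2})$, and finally upgrades to almost sure convergence through the summability $\mathbb{E}\sum_{B\geq1}|Y_B|^q<\infty$ for $q>2$ --- a Borel--Cantelli step that is needed precisely because the moment bounds by themselves only give convergence in probability along $B$. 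You instead argue pathwise on a full-measure event: the mean value theorem turns the main term into a tagged Riemann sum for $\sigma^3$, and the uniform continuity of $\sigma$ on the finitely many jump-free intervals gives the correct uniform control $\sup_{\text{good }i}\big(1-\rho_{\Tau_{i-1},\Tau_i}^2\big)\leq \mathrm{osc}(\sigma^2)^2/\big(2\inf_{[0,T]}\sigma^4\big)\to0$, hence $g(\rho_{\Tau_{i-1},\Tau_i})\to g(1)$ uniformly over good blocks; the finitely many blocks containing a volatility jump contribute $O(\Delta_B)$ in both treatments, exactly as in the paper's handling of the random set $J_B$. Your argument is more elementary --- it needs neither the localization nor the summability trick, since almost sure convergence is obtained directly, and the boundedness of $g$ follows from its finite limit at $\rho\downarrow0$ rather than from a lower bound on $\rho$ --- while the paper's moment expansion has the advantage of delivering an explicit rate $O_{\mathbb{L}^q}(\Delta_B^{1/2})$, which a pathwise modulus-of-continuity argument would only recover up to logarithmic factors. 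Both approaches correctly isolate the finitely many bad blocks, which is indeed the one place where genuine care is required.
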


\begin{remark} \label{rklossprop}
In particular, the asymptotic loss for $B \to +\infty$ is $g(1)/8-1$, which is always smaller than $L^{(RK)} = g \kappa^{-1}/8-1$ when using the RK with $B=1$. The proof of this statement can be found in Appendix (Section \ref{proofrk}).
\end{remark}

\subsection{QMLE}
\label{QMLE}
In analogy with Section \ref{RK}, we provide in this section a definition of the local estimator and equivalent asymptotic results in the case of the QMLE.

\subsubsection{Local QMLE definition}
\label{localQMLEsub}
We consider first the setting $B=1$ where the local QMLE is equal to the global QMLE. We recapitulate the parametric approach, and introduce the quasi-estimator. \cite{ait2005often} studied the parametric case assuming that the latent efficient log price process satisfies 
\begin{eqnarray}
\label{param}
dX_t = \sigma dW_t.
\end{eqnarray}
The observed log returns $Y_i = Z_{t_i} - Z_{t_{i-1}}$ are following a MA(1) process in that situation. If we postulate that the noise distribution is Gaussian, then the log likelihood function for $Y = (Y_1, \cdots, Y_n)^T$ can be expressed as
\begin{eqnarray}
\label{loglik}
l(\sigma^2, a^2) = - \frac{1}{2} \log \det(\Omega) - \frac{n}{2} \log (2\pi) -\frac{1}{2} Y^T \Omega^{-1} Y,
\end{eqnarray}
where 
$$\Omega = \left(\begin{matrix}
                    \sigma^2 \Delta + 2 a^2 & - a^2 & 0 & \cdots & 0 \\
                    - a^2 & \sigma^2 \Delta + 2 a^2 & - a^2 & \ddots & \vdots \\
                    0 & - a^2 & \sigma^2 \Delta + 2 a^2 & \ddots & 0\\
                    \vdots & \ddots & \ddots & \ddots & - a^2\\
                    0 & \cdots & 0 & - a^2 & \sigma^2 \Delta + 2 a^2
                  \end{matrix}\right) \in \reels^{n\times n}.\\[12pt] $$
We define the corresponding MLE which maximizes (\ref{loglik}) as $(\widehat{\sigma}^2, \widehat{a}^2)$ and the estimator of integrated volatility as $Q = T \widehat{\sigma}^2$. When the log price $X_t$ features stochastic volatility and drift as in Section \ref{model} and/or when the noise is not normally distributed, $(\widehat{\sigma}^2, \widehat{a}^2)$ is seen as the QMLE. 

\smallskip
When $B > 1$, we define for each block $i \in \{1,\cdots,B\}$ a local QMLE estimator $(\widehat{\sigma}_i^2, \widehat{a}_i^2)$ which maximizes the expression $l(\sigma^2,a^2)$ applied to the observations on $(\Tau_{i-1},\Tau_i]$ only, along with the local integrated volatility estimator $Q_i = \Delta_B \widehat{\sigma}_i^2$. We then construct the aggregate version of the QMLE as 
$$\tilde{Q} = \sum_{i=1}^B Q_i.$$
 
\subsubsection{Asymptotic theory}
\label{QMLEasymptotictheory}
We state the main result in \cite{xiu2010quasi} (Box V, p. 240). If we assume that $\int_0^T \sigma_u^2 du \in [\underline{\Sigma}, \overline{\Sigma}]$ with $0 < \underline{\Sigma} < \overline{\Sigma}$, we have
\begin{eqnarray}
\label{QMLE00}
\left(\begin{matrix} n^{1/4} \l(Q - \int_0^T \sigma_u^2 du \r) \\ 
n^{1/2} \l( \widehat{a}^2 - a_0^2 \r) 
\end{matrix}\right) \overset{L_X}{\rightarrow} \calm\caln \left( \left(\begin{matrix} 0 \\ 
0 
\end{matrix}\right), \left(\begin{matrix} AVAR_{[0,T]}^{(QMLE)} & 0 \\ 
0 & 2 a_0^4 + \textnormal{cum}_4[\epsilon]
\end{matrix}\right)  \right),
\end{eqnarray}
where we recall that 
$$ AVAR_{[0,T]}^{(QMLE)} =  \frac{ 5 T  a_0 \int_{0}^{T} \sigma_u^4 du}{ \l( \int_{0}^{T} \sigma_u^2 du\r)^{1/2} } + 3 a_0 \l(\int_{0}^{T} \sigma_u^2 du\r)^{3/2}.$$
Also, $\textnormal{cum}_4[\epsilon]$ refers to the fourth cumulant of $\epsilon_0$. An obvious application of (\ref{QMLE00}) for each block $i =1, \cdots, B$ gives us that
$$n^{1/4} \l(Q_i - \int_{\Tau_{i-1}}^{\Tau_i} \sigma_u^2 du \r) \overset{L_X}{\rightarrow} \calm\caln \l(0, B^{1/2} AVAR_{[\Tau_{i-1},\Tau_i]}^{(QMLE)} \r).$$
We show in the following theorem that the AVAR associated to $\tilde{Q}$ can be decomposed as a sum of local AVARs scaled by $B^{1/2}$.
\begin{QMLE} \label{CLTQMLE} (CLT for local QMLE) We have 
$$\left(\begin{matrix} n^{1/4} \l(\tilde{Q} - \int_0^T \sigma_u^2 du \r) \\ 
n^{1/2} \l( B^{-1} \sum_{i=1}^B \widehat{a}_i^2 - a_0^2 \r) 
\end{matrix}\right) \overset{L_X}{\rightarrow} \calm\caln \left( \left(\begin{matrix} 0 \\ 
0 
\end{matrix}\right), \left(\begin{matrix} B^{1/2} \sum_{i=1}^B  AVAR_{[\Tau_{i-1},\Tau_i]}^{(QMLE)} & 0\\ 
0 & 2 a_0^4 + \textnormal{cum}_4[\epsilon]
\end{matrix}\right)  \right).$$
\end{QMLE}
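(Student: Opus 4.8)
The plan is to reduce the joint statement to two essentially independent pieces: the CLT for the aggregated integrated-volatility estimator $\tilde{Q}=\sum_{i=1}^B Q_i$, and the CLT for the averaged noise-variance estimator $B^{-1}\sum_i \widehat{a}_i^2$, and then to argue that the off-diagonal entries of the limiting covariance vanish. The key structural fact I would exploit is that the blocks $(\Tau_{i-1},\Tau_i]$ are disjoint and the only dependence across blocks is through the common filtration and through edge-effects at the $B-1$ interior boundaries. Since on each block the single-block result (\ref{QMLE00}) already gives the marginal stable limit $n^{1/4}(Q_i-\int_{\Tau_{i-1}}^{\Tau_i}\sigma_u^2 du)\overset{L_X}{\to}\calm\caln(0,B^{1/2}AVAR_{[\Tau_{i-1},\Tau_i]}^{(QMLE)})$, the task is to upgrade the $B$ marginal convergences to a joint one with a diagonal (block-wise) covariance.

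First I would establish asymptotic independence of the block estimators conditionally on $\sigma(X)$. Because the microstructure noise $\epsilon_{t_i}$ is i.i.d.\ and independent of $X$, two QMLE statistics computed on disjoint index sets share only the returns $Y_j$ straddling a boundary; there are $O(1)$ such returns per boundary, so their contribution to any cross-block covariance is $O(\Delta^{1/2})$ after the $n^{1/4}$ normalization and hence negligible. Conditionally on the volatility path, the statistic on block $i$ depends only on the noise increments and Brownian increments inside that block, so the joint characteristic function factorizes up to these vanishing boundary terms. This yields $\sigma(X)$-stable joint convergence of $(n^{1/4}(Q_i-\int\sigma^2))_{i=1}^B$ to independent mixed Gaussians, whence the linear combination $\tilde{Q}=\sum_i Q_i$ converges to $\calm\caln(0,B^{1/2}\sum_i AVAR_{[\Tau_{i-1},\Tau_i]}^{(QMLE)})$ by summing the independent block variances. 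The analogous argument, now at the faster $n^{1/2}$ rate, handles $B^{-1}\sum_i\widehat{a}_i^2$: each $n^{1/2}(\widehat{a}_i^2-a_0^2)$ has variance $B(2a_0^4+\textnormal{cum}_4[\epsilon])$ by (\ref{QMLE00}) applied on a block of length $\Delta_B=T/B$, the blocks are again asymptotically independent, and averaging over $B$ blocks restores the variance $2a_0^4+\textnormal{cum}_4[\epsilon]$.

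The remaining point is the vanishing of the cross term between the $n^{1/4}$-scaled volatility component and the $n^{1/2}$-scaled noise component. Within a single block this is already zero in the limit by the diagonal structure of (\ref{QMLE00}); across blocks the rate mismatch ($n^{1/4}$ versus $n^{1/2}$) together with the asymptotic independence of distinct blocks forces every cross-covariance to zero. I would make this precise by computing the joint covariance of the score/estimating-equation increments and showing each entry is $o(1)$ after normalization.

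The main obstacle I anticipate is making the asymptotic independence across blocks rigorous rather than heuristic, precisely because the QMLE is an implicitly defined $M$-estimator: $Q_i$ is a nonlinear functional of the block data, not a simple sum. To deal with this I would pass through the estimating-equation (score) representation, linearizing $\widehat{\sigma}_i^2$ around the true local integrated volatility via a standard Taylor expansion of the first-order condition, so that the leading term becomes an explicit quadratic form in the block returns whose cross-block correlations are controlled by the $O(1)$ boundary-straddling returns identified above. Once the leading linear statistics are shown to be asymptotically independent across blocks and the remainders are $o_\proba(1)$ uniformly, the joint stable CLT and the diagonal covariance follow by assembling the block-wise limits.
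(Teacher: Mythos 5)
Your overall route coincides with the paper's: both pass from the implicitly defined QMLE to the score (estimating-equation) representation, linearize around the true parameter, reduce the leading terms to explicit quadratic forms in the block data, exploit the disjointness of the blocks to get a block-diagonal limiting covariance, and treat the noise-variance component separately at the faster $n^{1/2}$ rate so that all cross terms vanish. The paper's version of your "assemble the block-wise limits" step is Lemma \ref{lemmaCLTPsi}, where the joint $\sigma(X)$-stable convergence of the $2B$-dimensional score vector with block-diagonal variance follows from the orthogonality of the martingales $M_{k,(i)}$ and $M_{l,(j)}$ for $i\neq j$; the Taylor expansion you describe is then carried out exactly as in (\ref{eqTaylorScore}), after first proving consistency (Theorem \ref{thmConsistencyH1}) and convergence of the local Fisher information (Lemma \ref{lemmaFisherConsistency}) -- two prerequisites you call "standard" but which occupy a nontrivial portion of the paper's argument, since the identifiability of the limiting score must be checked.

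The one place where your stated justification would not survive scrutiny is the claim that, \emph{conditionally on the volatility path}, the block statistics depend only on the noise and Brownian increments inside each block so that the joint characteristic function factorizes. Under the paper's model the volatility is driven in part by the same Brownian motion $W$ as the price ($\tilde{\sigma}^{(1)}\neq 0$), so conditioning on the volatility path distorts the law of the increments of $W$ and the factorization argument breaks down; this is precisely the leverage obstruction that forces the use of stable-convergence machinery. The paper circumvents it by applying Jacod's CLT for continuous martingales to the Brownian quadratic form $M_2$ -- verifying in particular $\langle \tilde{M}_2^{(\sigma^2)}, W\rangle_t \overset{\proba}{\rightarrow} 0$ and the orthogonality to bounded martingales orthogonal to $W$ (Lemma \ref{lemmaM2}) -- and by proving a CLT for the noise terms $M_3,M_4$ \emph{conditionally} on $\calg_T$, the two being glued into a joint stable limit via Proposition \ref{propositionConditionalStable}. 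Your fallback plan (reduce to quadratic forms and control cross-block covariances directly) points in the right direction, but to complete it you would need to replace the conditional-factorization heuristic by these bracket conditions; the additivity of the variances then comes from martingale orthogonality across non-overlapping blocks rather than from genuine conditional independence.
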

We define $AVAR_{B}^{(QMLE)} = B^{1/2} \sum_{i=1}^B AVAR_{[\Tau_{i-1}, \Tau_i]}^{(QMLE)}$ which is estimated via 
$$ \widehat{AVAR}_B^{(QMLE)} = B^{1/2} \sum_{i=1}^B \l\{\frac{ 5 \Delta_B \widehat{a} \widehat{\int_{\Tau_{i-1}}^{\Tau_i} \sigma_u^4 du}}{ \l( \widehat{\int_{\Tau_{i-1}}^{\Tau_i} \sigma_u^2 du}\r)^{1/2} } + 3 \widehat{a} \l(\widehat{\int_{\Tau_{i-1}}^{\Tau_i} \sigma_u^2 du}\r)^{3/2} \r\}.$$
The feasible theorem follows.
\begin{QMLEcor}
\label{QMLEcor} (feasible CLT for local QMLE) We have $\widehat{AVAR}_B^{(QMLE)} \overset{\proba}{\rightarrow} AVAR_{B}^{(QMLE)}$ and
\begin{eqnarray}
\label{QMLEthcor2}
n^{1/4} \frac{ \tilde{Q} - \int_0^T \sigma_u^2 du}{\sqrt{\widehat{AVAR}_B^{(QMLE)}}}\overset{\call}{\rightarrow}  \caln \l(0, 1\r).
\end{eqnarray}
\end{QMLEcor}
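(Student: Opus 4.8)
The plan is to separate the statement into two parts: first the consistency $\widehat{AVAR}_B^{(QMLE)} \overset{\proba}{\rightarrow} AVAR_B^{(QMLE)}$, and then the self-normalized central limit theorem, which I would obtain by combining this consistency with the $L_X$-stable convergence already established in Theorem \ref{CLTQMLE}.

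First I would establish consistency of each plug-in block entering $\widehat{AVAR}_B^{(QMLE)}$. For the noise, decomposing each increment $Z_{\Delta(j+1)} - Z_{\Delta j}$ into its efficient-price and noise parts, the average of the squared noise increments converges to $a_0^2$ by the law of large numbers for the (one-dependent) sequence $(\epsilon_{\Delta(j+1)} - \epsilon_{\Delta j})^2$, whose common mean is $2a_0^2$, while the efficient-price and cross contributions are negligible after division by $2n$; hence $\widehat{a}^2 \overset{\proba}{\rightarrow} a_0^2$ and $\widehat{a} \overset{\proba}{\rightarrow} a_0$. For the integrated volatility and quarticity on each block, the pre-averaging estimators $\widehat{\int_{\Tau_{i-1}}^{\Tau_i} \sigma_u^2 du}$ and $\widehat{\int_{\Tau_{i-1}}^{\Tau_i} \sigma_u^4 du}$ converge in probability to $\int_{\Tau_{i-1}}^{\Tau_i} \sigma_u^2 du$ and $\int_{\Tau_{i-1}}^{\Tau_i} \sigma_u^4 du$ by the consistency of the pre-averaging method of \cite{jacod2009microstructure}, applied blockwise; since $B$ is held fixed and each block still contains $n/B \to \infty$ observations, the single-block asymptotics apply verbatim, and the finite-activity jumps in $\sigma$ together with the local boundedness assumptions of Section \ref{model} cause no difficulty.

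Next I would assemble these pieces. Because $\inf_t \sigma_t > 0$ a.s., each target $\int_{\Tau_{i-1}}^{\Tau_i} \sigma_u^2 du$ is strictly positive, so the map $(v, q, a) \mapsto B^{1/2} \sum_{i=1}^B \{ 5 \Delta_B a q_i / v_i^{1/2} + 3 a v_i^{3/2} \}$ is continuous at the limit point; the continuous mapping theorem then delivers $\widehat{AVAR}_B^{(QMLE)} \overset{\proba}{\rightarrow} B^{1/2} \sum_{i=1}^B AVAR_{[\Tau_{i-1},\Tau_i]}^{(QMLE)} = AVAR_B^{(QMLE)}$, which is the first claim.

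Finally, for the feasible CLT, I would invoke the $L_X$-stable convergence of Theorem \ref{CLTQMLE}, namely $n^{1/4}(\tilde{Q} - \int_0^T \sigma_u^2 du) \overset{L_X}{\rightarrow} \calm\caln(0, AVAR_B^{(QMLE)})$, noting that $AVAR_B^{(QMLE)}$ is $\sigma(X)$-measurable and a.s. strictly positive. Writing the studentized statistic as
$$n^{1/4} \frac{\tilde{Q} - \int_0^T \sigma_u^2 du}{\sqrt{\widehat{AVAR}_B^{(QMLE)}}} = \frac{n^{1/4}\l(\tilde{Q} - \int_0^T \sigma_u^2 du\r)}{\sqrt{AVAR_B^{(QMLE)}}} \times \sqrt{\frac{AVAR_B^{(QMLE)}}{\widehat{AVAR}_B^{(QMLE)}}},$$
the first factor converges stably to $\caln(0,1)$ — this is the defining feature of stable convergence toward a mixed normal whose random variance is measurable with respect to the conditioning field — while the second factor tends to $1$ in probability by the first part. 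A converging-together argument adapted to stable convergence then gives (\ref{QMLEthcor2}). The step I expect to be the most delicate is precisely this last one: one must use that the convergence in Theorem \ref{CLTQMLE} is stable, and not merely in law, so that it holds jointly with the $\sigma(X)$-measurable normalizer, allowing the mixing variance to cancel and producing an exactly standard — rather than mixed — normal limit.
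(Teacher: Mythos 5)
Your proposal is correct and follows essentially the same route as the paper: the paper's own proof is a one-line appeal to Slutsky's lemma combined with the consistency of the pre-averaging estimators of integrated volatility and quarticity (Theorem 3.1 and Remark 4 of \cite{jacod2009microstructure}) and of $\widehat{a}^2$ (for which it cites (21) in \cite{zhang2005tale} rather than rederiving the law of large numbers as you do). Your additional detail on the continuous mapping step and on why stability of the convergence in Theorem \ref{CLTQMLE} is needed to cancel the random variance is exactly the content the paper leaves implicit.
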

We show now that the AVAR associated to $\tilde{Q}$ goes to $AVAR_{[0,T]}^{(Bound)}$ when $B$ increases.
\begin{propAVARQMLE} \label{propAVARQMLE}
(Convergence of local QMLE AVAR) When $B \to +\infty$, we have
\begin{eqnarray}
\label{eqAVARQMLE}
AVAR_{B}^{(QMLE)} \overset{a.s.}{\rightarrow} AVAR_{[0,T]}^{(Bound)}.
\end{eqnarray}
\end{propAVARQMLE}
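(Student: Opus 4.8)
The plan is to rewrite $AVAR_B^{(QMLE)}$ as the integral over $[0,T]$ of a piecewise constant function and then pass to the limit by dominated convergence, paralleling the Riemann-sum heuristic already spelled out for the RK in Proposition \ref{propAVARRK}. Writing $A_i = \int_{\Tau_{i-1}}^{\Tau_i}\sigma_u^2 du$ and $C_i = \int_{\Tau_{i-1}}^{\Tau_i}\sigma_u^4 du$ for the block integrals, and introducing the block averages $\bar{\sigma}_i^2 = \Delta_B^{-1} A_i$ and $\bar{\sigma}_i^{(4)} = \Delta_B^{-1} C_i$, a direct substitution into $AVAR_{[\Tau_{i-1},\Tau_i]}^{(QMLE)} = 5\Delta_B a_0 C_i A_i^{-1/2} + 3 a_0 A_i^{3/2}$ together with the identity $B^{1/2} = T^{1/2}\Delta_B^{-1/2}$ gives
$$AVAR_B^{(QMLE)} = a_0 T^{1/2}\int_0^T g_B(u)\,du, \qquad g_B(u) = 5\bar{\sigma}_i^{(4)}(\bar{\sigma}_i^2)^{-1/2} + 3(\bar{\sigma}_i^2)^{3/2} \ \text{ for } u\in[\Tau_{i-1},\Tau_i).$$
The whole statement then reduces to showing that $\int_0^T g_B(u)\,du \to 8\int_0^T\sigma_u^3\,du$ almost surely.

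First I would establish pointwise convergence of the integrand. Under the model of Section \ref{model}, $\sigma$ is a.s. right-continuous with left limits and, since $\tilde{J}$ has finite activity, has only finitely many jumps on $[0,T]$; hence $\sigma$ is continuous at Lebesgue-almost every $u$. Fix such a continuity point $u_0$ and, for each $B$, let $i=i(B)$ be the index of the block containing $u_0$. As $B\to\infty$ the block $[\Tau_{i-1},\Tau_i]$ shrinks to $\{u_0\}$, so continuity at $u_0$ yields $\bar{\sigma}_i^2 \to \sigma_{u_0}^2$ and $\bar{\sigma}_i^{(4)}\to\sigma_{u_0}^4$. Since $\inf_t\sigma_t>0$ a.s., the map $(x,y)\mapsto 5y x^{-1/2}+3x^{3/2}$ is continuous at $(\sigma_{u_0}^2,\sigma_{u_0}^4)$, and therefore $g_B(u_0)\to 5\sigma_{u_0}^3+3\sigma_{u_0}^3 = 8\sigma_{u_0}^3$.

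Next I would produce a pathwise domination. Because the volatility coefficient is locally bounded and $\inf_t\sigma_t>0$ a.s., on the compact $[0,T]$ we have $0<\underline{\sigma}\le\sigma_u\le\overline{\sigma}<\infty$ for path-dependent finite random constants. Each block average inherits these bounds, $\bar{\sigma}_i^2\in[\underline{\sigma}^2,\overline{\sigma}^2]$ and $\bar{\sigma}_i^{(4)}\in[\underline{\sigma}^4,\overline{\sigma}^4]$, so $g_B(u)\le 5\overline{\sigma}^4\underline{\sigma}^{-1}+3\overline{\sigma}^3$ uniformly in $u$ and $B$. With this constant pathwise bound and the finite measure of $[0,T]$, dominated convergence applies and gives $\int_0^T g_B(u)\,du\to\int_0^T 8\sigma_u^3\,du$ almost surely, whence $AVAR_B^{(QMLE)}\to 8a_0 T^{1/2}\int_0^T\sigma_u^3\,du = AVAR_{[0,T]}^{(Bound)}$.

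I expect the only delicate point to be the domination near the factor $(\bar{\sigma}_i^2)^{-1/2}$: everything hinges on the assumption $\inf_t\sigma_t>0$, which keeps the denominators bounded away from zero uniformly over blocks, and on finite jump activity, which confines the discontinuities of $\sigma$ to a Lebesgue-null set so that the pointwise limit $g_B(u)\to 8\sigma_u^3$ holds at a.e. $u$. Once these two structural facts are in place the argument is the same dominated Riemann-sum computation as for the RK, and no control of the approximation error block by block (nor any rate in $B$) is required.
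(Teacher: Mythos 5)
Your argument is correct, and it takes a genuinely different route from the paper. The paper proves this proposition by mirroring its proof of Proposition \ref{propAVARRK}: after localizing to nonrandom bounds $0<\underline{\sigma}\le\sigma\le\overline{\sigma}$, it isolates the (almost surely finitely many) blocks on which $\sigma$ jumps, shows their contribution vanishes, and on the remaining blocks uses the It\^{o}-process expansion $\int_{\Tau_{i-1}}^{\Tau_i}\sigma_u^p\,du=\sigma_{\Tau_{i-1}}^p\Delta_B+O_{\mathbb{L}^q}(\Delta_B^{3/2})$ uniformly in $i$, so that the total error is $O_{\mathbb{L}^q}(\Delta_B^{1/2})$; almost sure convergence is then extracted from the summability of $\esp|Y_B|^q$ for $q>2$ (a Borel--Cantelli-type step). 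You instead rewrite $AVAR_B^{(QMLE)}$ exactly as $a_0T^{1/2}\int_0^T g_B(u)\,du$ for a piecewise constant integrand, prove pointwise convergence $g_B(u)\to 8\sigma_u^3$ at every continuity point of $\sigma$ (a co-finite, hence co-null, set since $\tilde{J}$ has finite activity), and conclude by dominated convergence using the pathwise bounds coming from $\inf_t\sigma_t>0$ and local boundedness on $[0,T]$. Your route is more elementary and purely pathwise: it needs no moment estimates, no rate, and not even the It\^{o} structure of $\sigma$ — only c\`adl\`ag paths with finitely many jumps, boundedness, and positivity. What it gives up is the quantitative $O_{\mathbb{L}^q}(\Delta_B^{1/2})$ rate that the paper's block-by-block expansion produces as a byproduct, which is informative for choosing $B$ in practice and is reused across the RK and robust variants. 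Your identification of the two delicate points — the denominator $(\bar{\sigma}_i^2)^{-1/2}$ controlled by $\inf_t\sigma_t>0$, and the null set of discontinuities — is exactly right.
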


\section{Is the local method robust to stochastic sampling times and jump in the price process?}
\label{robustness}
We discuss in this section what happens to both approaches when considering stochastic arrival times and adding jump to the price process. Related work in the global case include \cite{da2017moving} for the QMLE and \cite{varneskov2016flat} for the RK. We further inspect the AVAR behavior when $B \rightarrow +\infty$ in this situation. The results are mitigated. Reduction (conjectured to be efficient) is obtained in the case of stochastic arrival times on the one hand, but there are additional terms in the AVAR as $B \rightarrow +\infty$ when adding jumps on the other hand. 

\subsection{Central limit theory when $B$ is fixed}
We assume that the latent log-price process is now an It\^{o} semimartingale defined by $$dX_t = b_t dt + \sigma_t dW_t + dJ_t,$$
where $b_t$ and $\sigma_t$ satisfy the same conditions as in Section \ref{model}, and $J_t$ is a pure jump process of finite activity.

\smallskip
For the observation times, we adopt the random discretization scheme of \cite{jacod2011discretization} (see Section 14.1) and we assume that there exists an It\^{o} semimartingale $\alpha_t > 0$ which satisfies Assumption 4.4.2 (p. 115) in \cite{jacod2011discretization} and is locally bounded away from $0$, and i.i.d $U_i > 0$ that are both independent of the other quantities, $\alpha_t \ind U_i$, such that 
\bea 
t_0 & = & 0,\\
t_i & = & t_{i-1} + \Delta \alpha_{t_{i-1}} U_i,
\eea
where we recall that $\Delta = T/n$. Finally, we assume that $\esp U_i = 1$, and that for any $q >0$, $m_{q} := \esp U_i^q \to m_{q,\infty} < \infty $ as $n \to +\infty$. Note that, defining $\pi_t := \sup_{i \geq 1} t_{i}-t_{i-1}$, the number of observations before $t$ as $N_n(t) = \sup \{ i\in \naturels - \{0\} | t_{i}  \leq t \}$, we have $\pi_t \overset{\proba}{\rightarrow} 0$ as $n \to +\infty$ and 
\bea 
 \frac{N_n(t)}{n} \to^{u.c.p} \inv{T}\int_0^t{\alpha_s^{-1}ds},
 \label{eqNumberJumps}
 \eea 
 where the convergence $u.c.p$ means uniformly in probability on $[0,t]$ for any $t \in [0,T]$.\footnote{We can prove (\ref{eqNumberJumps}) using Lemma 14.1.5 in \cite{jacod2011discretization}. The uniformity is a consequence of the fact that $N_n$ and $\int_0^.{\inv{\alpha_s}ds}$ are increasing processes and Property (2.2.16) in \cite{jacod2011discretization}.} We further define $N_n = N_n(T)$.

\smallskip
As pointed out in \cite{jacod2011discretization} (p. 431), any deterministic grid satisfies the above conditions. Actually, this model can be considered as more general than the time deformation proposed by \cite{barndorff2008designing} (Section 5.3, pp. 1505-1507)
in the sense that more complex arrival times, such as a Poisson process independent of the other quantities fall under the model. On the contrary, assuming the existence of the quadratic variation of time (see, e.g., Assumption A on p. 1939 in \cite{mykland2006anova}) is too general as our proofs require the existence of the quadratic covariation of time lags for all lags\footnote{To see a condition on the first lag, one can look at Assumption B.vii on p. 37 in \cite{li2016efficient}. This does not include other lags.}. 

\smallskip
Since the price process features possible jumps, the two estimators are no longer consistent to the integrated volatility, but they converge to the quadratic variation 
$$T\bar{\sigma}_0^2 :=\int_0^T{\sigma_s^2ds} + \sum_{0 < s \leq T}\Delta J_s^2,$$
where $\Delta J_s = J_s - J_{s-}$ corresponds to the size of the jump if there is a jump at time $s$ and 0 otherwise. Correspondingly we define on each block $i=1, \cdots, B$ the new local target as 
$$\Delta_B \bar{\sigma}_i^2 :=\int_{\Tau_{i-1}}^{\Tau_i} {\sigma_s^2ds} + \sum_{\Tau_{i-1} < s \leq \Tau_i}\Delta J_s^2.$$
The AVARs obtained in the robust theorems feature $\Delta_B \bar{\sigma}_i^2$ in place of integrated volatility, and the following quantity as an alternative for quarticity:
$$\calq_{(i)}  =  \qtermlocal{\Tau_{i-1}}{\Tau_i}.$$
Correspondingly, we define substitutes for the measure of heteroskedasticity and the noise-to-ratio measure as
\begin{eqnarray*}
\widetilde{\rho}_{\Tau_{i-1},\Tau_i}  =  \frac{\Delta_B \bar{\sigma}_i^2}{\sqrt{\Delta_B \calq_{(i)}}} \text{ and } \widetilde{\xi}_{\Tau_{i-1},\Tau_i}^2 = \frac{a_0^2}{\sqrt{ \Delta_B \calq_{(i)}}}.
\end{eqnarray*}
Moreover, we also introduce 
\bea 
R_{(i)} := \frac{\int_0^T{\alpha_s^{-1}ds}}{\int_{\Tau_{i-1}}^{\Tau_i} \alpha_s^{-1}ds},
\eea 
which corresponds to the asymptotic ratio of the total number of observations over the number of observations on the block $i$ as we have $N_n(T)/(N_n(\Tau_{i})-N_n(\Tau_{i-1})) \overset{\proba}{\rightarrow} R_{(i)}$. Finally, we define $\calg_T := \sigma \l\{U_i^n, \alpha_s, X_s | (i,n) \in \naturels^2 ,  0 \leq s \leq T \r\}$  and refer to $L_{\calg}$ for stable convergence with respect to $\calg_T$. We provide the CLT for the two approaches in what follows.
\begin{RKjumps}
\label{RKthjumps} (robust CLT for local RK) When $k'(0)^2 + k'(1)^2 =0$, $m \rightarrow \infty$, $c = (c_1,\cdots,c_B)$, and $H = c N_n^{1/2}$, we have
\begin{eqnarray}
\label{RKth2Robust}
N_n^{1/4} \l( \tilde{K} -T\bar{\sigma}_0^2 \r) \overset{L_\mathcal{G}}{\rightarrow}  \calm\caln \Big(0, AVAR^{(RK,rob,c)}\Big),
\end{eqnarray}
where
\begin{eqnarray*}
AVAR^{(RK,rob,c)} &=& \sum_{i=1}^B R_{(i)}^{1/2}AVAR_{[\Tau_{i-1}, \Tau_i]}^{(RK,rob,c_i)}, \\
AVAR_{[\Tau_{i-1}, \Tau_i]}^{(RK,rob,c_i)} &=& 4 \Delta_B \calq_{(i)} \big\{ c_i k_{\bullet}^{0,0} + c_i^{-1} 2 k_{\bullet}^{1,1} \widetilde{\rho}_{\Tau_{i-1},\Tau_i} \widetilde{\xi}_{\Tau_{i-1},\Tau_i}^2 + c_i^{-3} k_{\bullet}^{2,2} \widetilde{\xi}_{\Tau_{i-1},\Tau_i}^4 \big\}.
\end{eqnarray*} 
\end{RKjumps}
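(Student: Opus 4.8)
The plan is to reproduce the aggregation mechanism behind Theorem~\ref{RKth}, but now on the time-changed, jump-augmented experiment: first establish a single-block robust central limit theorem, then show that the $B$ block estimators are, conditionally on $\calg_T$, asymptotically independent so that their variances simply add. I would open with the standard localization, replacing the local boundedness hypotheses of Section~\ref{model} by global ones, so that $b,\sigma,\widetilde\sigma^{(1)},\widetilde\sigma^{(2)},\alpha$ are bounded and bounded away from zero and $J$ has finitely many jumps of bounded size on $[0,T]$; a standard change-of-measure argument shows this is without loss of generality. I would then split the efficient price as $X = X^c + J$, with $X^c$ the continuous It\^{o} semimartingale, and decompose each block kernel $K_i$ into a continuous piece, a pure-jump piece, and the noise cross-terms.

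For the continuous part, the reduction to (\ref{RK00}) is carried out through the random clock. Using the discretization framework of \cite{jacod2011discretization} together with the convergence $N_n(t)/n \to T^{-1}\int_0^t\alpha_s^{-1}ds$ recorded in (\ref{eqNumberJumps}), the effective local spacing on block $i$ is modulated by $\alpha$; this turns the continuous quarticity $\int_{\Tau_{i-1}}^{\Tau_i}\sigma_u^4 du$ into its $\alpha$-weighted version $\Delta_B^{-1}(\int_{\Tau_{i-1}}^{\Tau_i}\alpha_s^{-1}ds)(\int_{\Tau_{i-1}}^{\Tau_i}\sigma_s^4\alpha_s ds)$, which is the continuous part of $\calq_{(i)}$, and replaces $\rho,\xi$ by $\widetilde\rho_{\Tau_{i-1},\Tau_i},\widetilde\xi_{\Tau_{i-1},\Tau_i}$. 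The prefactor $R_{(i)}^{1/2}$ is pure bookkeeping: block $i$ carries $N_n(\Tau_i)-N_n(\Tau_{i-1})\sim N_n/R_{(i)}$ observations, and expressing the intrinsic block CLT — whose rate is the local observation count to the power $1/4$ — in terms of the global rate $N_n^{1/4}$ produces exactly the factor $R_{(i)}^{1/2}$ in front of $AVAR_{[\Tau_{i-1},\Tau_i]}^{(RK,rob,c_i)}$.

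The genuinely new and most delicate ingredient is the jump contribution to the asymptotic variance. Because the flat-top kernel estimates the full quadratic variation, each jump is absorbed into the centering through the $\sum_{\Tau_{i-1}<s\le\Tau_i}\Delta J_s^2$ part of $\Delta_B\bar\sigma_i^2$, and consistency on this point parallels the global analysis of \cite{varneskov2016flat}. The variance, however, receives an extra term from the interaction of each jump with the microstructure noise: in a neighborhood of a jump time $s$, the kernel weights applied to the noisy returns straddling $s$ generate a covariance contribution at the noise scale. Since only finitely many jumps survive localization, I would treat each jump separately, work on a shrinking window around its random arrival time, and compute the limit of the kernel-weighted noise--jump cross terms. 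Tracking the local volatility and the local sampling intensity on either side of $s$ is what yields the weight $\sigma_s^2\alpha_s+\sigma_{s-}^2\alpha_{s-}$, hence the jump part $\sum_{\Tau_{i-1}<s\le\Tau_i}\Delta J_s^2(\sigma_s^2\alpha_s+\sigma_{s-}^2\alpha_{s-})$ of $\calq_{(i)}$. I expect this step — pinning down the constant, the left/right split around each jump, and the coupling with the random clock — to be the main obstacle.

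Finally I would establish the block decoupling. As in Theorem~\ref{RKth}, the blocks are built from observations on disjoint intervals, so the continuous martingale parts are driven by independent Brownian increments and the noise is i.i.d.; the only coupling is through the end-effects at the boundaries $\Tau_i$. The assumption $m\to\infty$ in the jittering renders these boundary terms asymptotically negligible, exactly as in the $B=1$ discussion of \cite{barndorff2008designing}, so the joint law of the normalized block errors converges $\calg_T$-stably to a vector of independent mixed Gaussians with the variances identified above. Summing over $i$, the centering telescopes to $\sum_{i=1}^B\Delta_B\bar\sigma_i^2 = T\bar\sigma_0^2$ and the variances add to $\sum_{i=1}^B R_{(i)}^{1/2}AVAR_{[\Tau_{i-1},\Tau_i]}^{(RK,rob,c_i)}$, which is (\ref{RKth2Robust}).
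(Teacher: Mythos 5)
Your overall architecture — localize, prove a single-block robust CLT through the random clock, then exploit disjointness of the blocks plus the jittering to make the block errors asymptotically independent and sum the variances with the $R_{(i)}^{1/2}$ rate-conversion factor — matches what the paper does (its proof of this theorem simply invokes the same line of reasoning as the QMLE case, whose machinery is the block-by-block martingale CLT with the $\calg_T$-stable/conditional combination of Proposition \ref{propositionConditionalStable}). The bookkeeping for $R_{(i)}^{1/2}$, the $\alpha$-reweighting of the quarticity, and the telescoping of the centering to $T\bar{\sigma}_0^2$ are all correct.

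There is, however, a genuine gap in the step you yourself flag as the main obstacle: you attribute the jump contribution $\sum_{\Tau_{i-1}<s\leq\Tau_i}\Delta J_s^2\l(\sigma_s^2\alpha_s+\sigma_{s-}^2\alpha_{s-}\r)$ in $\calq_{(i)}$ to ``the interaction of each jump with the microstructure noise.'' That is the wrong source, and the computation you propose would not produce this weight. Any jump--noise cross term is scaled by the noise variance $a_0^2$ and cannot generate a factor $\sigma_s^2\alpha_s+\sigma_{s-}^2\alpha_{s-}$; in the decomposition of the statement, the noise-coupled piece is the $c_i^{-1}2k_{\bullet}^{1,1}\widetilde{\rho}\,\widetilde{\xi}^2$ term, and there $\calq_{(i)}\widetilde{\rho}\,\widetilde{\xi}^2=\bar{\sigma}_i^2 a_0^2$, so jumps enter that term only through the quadratic variation $\bar{\sigma}_i^2$ with weight $\Delta J_s^2$ alone. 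The left/right-weighted term lives entirely in the signal term $4\Delta_B\calq_{(i)}c_ik_{\bullet}^{0,0}$: it arises from the cross-products of the jump increment with the \emph{diffusive} increments $\sigma\Delta W$ over the $O(H)$ lags straddling the jump inside the realized autocovariances $\gamma_h$. This is exactly the mechanism isolated in the paper's Lemma \ref{lemmaM2} (Step 3), where the terms $A_n^{\pm}$ couple $\Delta J_{\tau_p}$ with $\Delta\tilde{X}_{j}^n$ for $j$ in a window before and after $\tau_p$; the conditional variance of each such increment is $\approx\sigma^2\alpha\,\Delta$, evaluated from the right on one side of the jump and from the left on the other, which is what produces $\sigma_{\tau_p}^2\alpha_{\tau_p}+\sigma_{\tau_p-}^2\alpha_{\tau_p-}$. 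If you carried out your plan as written — computing kernel-weighted noise--jump cross terms on a shrinking window — you would find them either negligible or absorbed into the $\bar{\sigma}_i^2 a_0^2$ term, and the stated asymptotic variance would not be recovered. The fix is to redirect that analysis to the jump--Brownian cross terms in $\tilde{K}(X,X)$ and establish their joint stable convergence with the continuous martingale part, as in (\ref{vectorStable}).
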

The new optimal bandwidth is given by 
$$\widetilde{c}_i^* = \sqrt{\widetilde{\rho}_{\Tau_{i-1}, \Tau_i}  \frac{k_{\bullet}^{1,1}}{k_{\bullet}^{0,0}} \bigg( 1 + \sqrt{1 + 3 d / \widetilde{\rho}_{\Tau_{i-1}, \Tau_i}^2} \bigg)},$$
with local and global optimal variances respectively defined as
\begin{eqnarray*}
AVAR_{[\Tau_{i-1}, \Tau_i]}^{(RK,rob,\widetilde{c}_i^*)} & = & a_0 \l( \Delta_B \calq_{(i)} \r)^{3/4} g (\widetilde{\rho}_{T_{i-1},T_i}),\\
AVAR_{B}^{(RK,rob)} & = & \sum_{i=1}^B  R_{(i)}^{1/2} AVAR_{[\Tau_{i-1}, \Tau_i]}^{(RK,rob,\widetilde{c}_i^*)}.
\end{eqnarray*}

\smallskip
As for the QMLE, the log likelihood function when $B=1$ keeps the same form (\ref{loglik}) but we replace $n$ by $N_n$ in the definition of $\Omega$ now defined as 
$$\Omega = \left(\begin{matrix}
                    \sigma^2 \widetilde{\Delta}+ 2 a^2 & - a^2 & 0 & \cdots & 0 \\
                    - a^2 & \sigma^2 \widetilde{\Delta}+ 2 a^2 & - a^2 & \ddots & \vdots \\
                    0 & - a^2 & \sigma^2 \widetilde{\Delta}+ 2 a^2 & \ddots & 0\\
                    \vdots & \ddots & \ddots & \ddots & - a^2\\
                    0 & \cdots & 0 & - a^2 & \sigma^2 \widetilde{\Delta}+ 2 a^2
                  \end{matrix}\right) \in \reels^{N_n\times N_n},\\[12pt] $$
where $\widetilde{\Delta} = T/N_n$. Each local QMLE estimator $(\widehat{\sigma}_i^2, \widehat{a}_i^2)$ is now defined as a maximizer of 
\begin{eqnarray}
\label{loglikRobust}
l_{(i)}(\sigma^2, a^2) = - \frac{1}{2} \log \det(\Omega_{(i)}) - \frac{N_{n,(i)}}{2} \log (2\pi) -\frac{1}{2} Y_{(i)}^T \Omega_{(i)}^{-1} Y_{(i)},
\end{eqnarray}
where $Y_{(i)}$ is the vector of price returns on the $i$th block, $N_{n,(i)} := N_n(\Tau_{i}) - N_n(\Tau_{i-1})$, and 
$$\Omega_{(i)} = \left(\begin{matrix}
                    \sigma^2 \widetilde{\Delta}_{(i)}+ 2 a^2 & - a^2 & 0 & \cdots & 0 \\
                    - a^2 & \sigma^2 \widetilde{\Delta}_{(i)}+ 2 a^2 & - a^2 & \ddots & \vdots \\
                    0 & - a^2 & \sigma^2 \widetilde{\Delta}_{(i)}+ 2 a^2 & \ddots & 0\\
                    \vdots & \ddots & \ddots & \ddots & - a^2\\
                    0 & \cdots & 0 & - a^2 & \sigma^2 \widetilde{\Delta}_{(i)}+ 2 a^2
                  \end{matrix}\right) \in \reels^{N_{n,(i)}\times N_{n,(i)}},\\[12pt] $$
with $\widetilde{\Delta}_{(i)} :=\Delta_B /N_{n,(i)}$. If we assume that $T\bar{\sigma}_0^2 \in [\underline{\Sigma}, \overline{\Sigma}]$ we obtain the following theorem.
\begin{QMLEjumps} \label{CLTQMLEjumps} (robust CLT for local QMLE) We have 
$$\left(\begin{matrix} N_n^{1/4} \l(\tilde{Q} - T\bar{\sigma}_0^2 \r) \\ 
N_n^{1/2} \l( B^{-1} \sum_{i=1}^B \widehat{a}_i^2 - a_0^2 \r) 
\end{matrix}\right) \overset{L_{\mathcal{G}}}{\rightarrow} \calm\caln \left( \left(\begin{matrix} 0 \\ 
0 
\end{matrix}\right), \left(\begin{matrix}  AVAR_B^{(QMLE,rob)} & 0\\ 0 & AVAR_B^{(QMLE,\epsilon)}
\end{matrix}\right)  \right),$$
where
\begin{eqnarray*}
AVAR_B^{(QMLE,rob)} &=& \sum_{i=1}^B R_{(i)}^{1/2} AVAR_{[\Tau_{i-1},\Tau_i]}^{(QMLE,rob)} \\
 AVAR_{[\Tau_{i-1},\Tau_i]}^{(QMLE,rob)} & = & \frac{ 5   a_0\Delta_B^{1/2} \calq_{(i)}}{ \bar{\sigma}_{i} } + 3 a_0 \bar{\sigma}_{i}^3\Delta_B^{3/2},\\ 
 AVAR_B^{(QMLE,\epsilon)} & = & \inv{B^2}\sum_{i=1}^B{R_{(i)}} \l\{2 a_0^4 + \textnormal{cum}_4[\epsilon]\r\}.
\end{eqnarray*}
\end{QMLEjumps}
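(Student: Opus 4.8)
The plan is to reduce the aggregate statement to a marginal analysis on each block and then exploit the near-independence of the blocks, exactly as in the proof of Theorem~\ref{CLTQMLE}. Since $\tilde{Q} = \sum_{i=1}^B Q_i$ and $B^{-1}\sum_i \widehat{a}_i^2$ are sums over disjoint blocks that interact only through negligible end-effects, the first step is to show that, conditionally on $\calg_T$, the block pairs $(Q_i, \widehat{a}_i^2)$ are asymptotically independent across $i$: each local QMLE depends on the noise increments and the continuous martingale increments restricted to $(\Tau_{i-1},\Tau_i]$, which are mutually independent in the limit. Granting this, it suffices to establish a single-block robust CLT and then add the limits.

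For the single-block case I would first prove the $B=1$ version of the theorem by combining the quasi-likelihood expansion of \cite{xiu2010quasi} with the treatment of random sampling times in \cite{da2017moving} and the discretization machinery of \cite{jacod2011discretization} (Section~14). Concretely, write the score $\nabla l_{(i)}(\sigma^2,a^2)$ of the misspecified Gaussian quasi-likelihood (\ref{loglikRobust}), prove consistency $(\widehat{\sigma}_i^2,\widehat{a}_i^2)\overset{\proba}{\rightarrow}(\bar{\sigma}_i^2,a_0^2)$, Taylor-expand the score about this limit, apply a stable triangular-array CLT to the normalized score, and show that the observed Fisher information converges to its deterministic mixed limit. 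The quasi-likelihood is misspecified in three ways at once: the volatility is time-varying, the spacings $t_i - t_{i-1} = \Delta\,\alpha_{t_{i-1}} U_i$ are random rather than equal to the average $\widetilde{\Delta}_{(i)} = \Delta_B/N_{n,(i)}$ built into $\Omega_{(i)}$, and the price carries finitely many jumps. The substitute quarticity $\calq_{(i)}$ is designed precisely to absorb the first two effects, the time change entering through the factor $\int_{\Tau_{i-1}}^{\Tau_i}\alpha_s^{-1}ds$ and the weighting $\sigma_s^4 \alpha_s$, consistently with the limit (\ref{eqNumberJumps}).

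To incorporate jumps I would localize on the almost-sure event that $J$ has finitely many jumps in $[0,T]$ and partition according to their locations. On each block the finitely many jumps are absorbed into the quadratic-variation target $\Delta_B\bar{\sigma}_i^2$ and contribute the additional mass $\Delta J_s^2(\sigma_s^2\alpha_s + \sigma_{s-}^2\alpha_{s-})$ appearing in $\calq_{(i)}$; away from the jump times the expansion is identical to the continuous case. The point to verify is that these finitely many jump terms enter the score and the information only at the stated order and do not inflate the convergence rate, so the leading $N_n^{1/4}$-order behaviour reproduces the continuous QMLE formula of \cite{xiu2010quasi} with integrated volatility and quarticity replaced by $\Delta_B\bar{\sigma}_i^2$ and $\calq_{(i)}$, respectively.

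It then remains to assemble the limit and recover the $R_{(i)}$ scalings. The marginal block CLT is naturally normalized by the block count $N_{n,(i)}^{1/4}$ and produces the variance $AVAR_{[\Tau_{i-1},\Tau_i]}^{(QMLE,rob)}$; passing to the global rate $N_n^{1/4}$ multiplies the variance by $(N_n/N_{n,(i)})^{1/2}\to R_{(i)}^{1/2}$ through $N_n(T)/N_{n,(i)}\to R_{(i)}$, and summing the asymptotically independent blocks yields $AVAR_B^{(QMLE,rob)}$. The same conversion applied to $\widehat{a}_i^2$, which converges at rate $N_{n,(i)}^{1/2}$, gives the factor $R_{(i)}$ for each block, so that after the $B^{-1}$ averaging the noise variance is $B^{-2}\sum_i R_{(i)}\{2a_0^4 + \textnormal{cum}_4[\epsilon]\}=AVAR_B^{(QMLE,\epsilon)}$. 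The vanishing off-diagonal entry follows because $\widehat{\sigma}_i^2$ and $\widehat{a}_i^2$ are asymptotically uncorrelated at leading order on each block, as in \cite{xiu2010quasi}, a property preserved under summation. The main obstacle I anticipate is the simultaneous control of random spacings and jumps inside the quasi-likelihood expansion: one must quantify the error committed by approximating the true covariance of the irregularly sampled, jump-contaminated returns by the equally-spaced $\Omega_{(i)}$, and it is here that Assumption~4.4.2 of \cite{jacod2011discretization} on $\alpha_t$ together with the moment convergence $m_q\to m_{q,\infty}$ are needed to keep these errors negligible.
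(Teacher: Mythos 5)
Your plan matches the paper's proof essentially step for step: consistency of each local QMLE via uniform convergence of the score, convergence of the scaled Hessian, a stable CLT for the normalized score (with the jump contribution entering the variance through the coupling of each jump with neighbouring increments, which is exactly how the $\Delta J_s^2(\sigma_s^2\alpha_s+\sigma_{s-}^2\alpha_{s-})$ term in $\calq_{(i)}$ arises), asymptotic orthogonality of the non-overlapping blocks, and the $R_{(i)}^{1/2}$ (resp. $R_{(i)}$) factors from rescaling $N_{n,(i)}$ to $N_n$. The only caveat is that the genuinely hard step --- the martingale CLT for the quadratic score term under random sampling and jumps, which the paper carries out via a joint stable limit for the jump-localized martingales $M_n^{\pm}(\cdot,p)$ --- is correctly identified in your sketch but not executed; as a proof strategy, however, it is the same as the paper's.
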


\subsection{The good case: robustness to stochastic arrival times}
Here we assume a no-jump setting, i.e. $J_t=0$. The following two propositions provide the AVAR asymptotic behavior when $B \rightarrow \infty$ for the two methods. The limit is very similar to that in the regular observation case, and thus the local method is robust to stochastic observation times. Note that the conjectured bound of efficiency is affected by the setting and takes the form 
$$8 a_0 \l(\int_0^T\alpha_s^{-1}ds\r)^{1/2}\int_0^T \alpha_s^{1/2} \sigma_u^3 du.$$
\begin{RKcorRobust} \label{RKcorRobust} (Asymptotic behavior of local RK AVAR when sampling times are stochastic) When $B \to +\infty$, we have 
$$ AVAR_B^{(RK,rob)} \overset{a.s.}{\rightarrow} 8g(1)a_0\l(\int_0^T\alpha_s^{-1}ds\r)^{1/2} \int_0^T{\alpha_s^{1/2}\sigma_s^3}ds.$$
\end{RKcorRobust}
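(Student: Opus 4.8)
The plan is to specialize the robust optimal variance to the no-jump case $J_t=0$ and then run a Riemann-sum argument in the spirit of the proof of Proposition \ref{propAVARRK}, the only genuinely new ingredients being the sampling density $\alpha$ and the block-dependent weights $R_{(i)}^{1/2}$ in place of the uniform $B^{1/2}$. Setting $J_t=0$ removes the jump sum from $\calq_{(i)}$, so that $\Delta_B\calq_{(i)}=\int_{\Tau_{i-1}}^{\Tau_i}\alpha_s^{-1}ds\,\int_{\Tau_{i-1}}^{\Tau_i}\sigma_s^4\alpha_s ds$ and $\Delta_B\bar\sigma_i^2=\int_{\Tau_{i-1}}^{\Tau_i}\sigma_s^2 ds$. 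Plugging in the explicit local optimum gives $AVAR_B^{(RK,rob)}=\sum_{i=1}^B R_{(i)}^{1/2}\,a_0(\Delta_B\calq_{(i)})^{3/4}\,g(\widetilde{\rho}_{\Tau_{i-1},\Tau_i})$, and the whole statement reduces to the asymptotics of the three factors $R_{(i)}^{1/2}$, $(\Delta_B\calq_{(i)})^{3/4}$ and $g(\widetilde{\rho}_{\Tau_{i-1},\Tau_i})$ as $B\to\infty$.

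The key observation is that the sampling density cancels at leading order inside $\widetilde{\rho}$. Since $\alpha$ and $\sigma$ are c\`adl\`ag and, by assumption, locally bounded away from $0$ and $\infty$, at every continuity point of $\sigma$ one has on the length-$\Delta_B$ block containing it $\int_{\Tau_{i-1}}^{\Tau_i}\alpha_s^{-1}ds\approx\Delta_B\alpha_{\Tau_{i-1}}^{-1}$, $\int_{\Tau_{i-1}}^{\Tau_i}\sigma_s^4\alpha_s ds\approx\Delta_B\sigma_{\Tau_{i-1}}^4\alpha_{\Tau_{i-1}}$ and $\int_{\Tau_{i-1}}^{\Tau_i}\sigma_s^2 ds\approx\Delta_B\sigma_{\Tau_{i-1}}^2$, so that $\Delta_B\calq_{(i)}\approx\Delta_B^2\sigma_{\Tau_{i-1}}^4$ and $\widetilde{\rho}_{\Tau_{i-1},\Tau_i}\approx\Delta_B\sigma_{\Tau_{i-1}}^2/(\Delta_B\sigma_{\Tau_{i-1}}^2)=1$; continuity of $g$ then yields $g(\widetilde{\rho}_{\Tau_{i-1},\Tau_i})\to g(1)$. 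The same expansions give $(\Delta_B\calq_{(i)})^{3/4}\approx\Delta_B^{3/2}\sigma_{\Tau_{i-1}}^3$ and $R_{(i)}^{1/2}\approx(\int_0^T\alpha_s^{-1}ds)^{1/2}\Delta_B^{-1/2}\alpha_{\Tau_{i-1}}^{1/2}$, whence the $i$th summand is asymptotic to $a_0 g(1)(\int_0^T\alpha_s^{-1}ds)^{1/2}\,\Delta_B\,\alpha_{\Tau_{i-1}}^{1/2}\sigma_{\Tau_{i-1}}^3$. The powers of $\Delta_B$ balance ($\Delta_B^{-1/2}\cdot\Delta_B^{3/2}=\Delta_B$), which is precisely the weight of a left-endpoint Riemann sum.

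Summing over $i$ and factoring out the $B$-independent constant $a_0 g(1)(\int_0^T\alpha_s^{-1}ds)^{1/2}$ leaves $\sum_{i=1}^B\Delta_B\,\alpha_{\Tau_{i-1}}^{1/2}\sigma_{\Tau_{i-1}}^3$, a Riemann sum for $\int_0^T\alpha_s^{1/2}\sigma_s^3 ds$. Because $s\mapsto\alpha_s^{1/2}\sigma_s^3$ is a.s. c\`adl\`ag and bounded on $[0,T]$, hence Riemann integrable, this converges a.s. to the integral, delivering the limit $a_0 g(1)(\int_0^T\alpha_s^{-1}ds)^{1/2}\int_0^T\alpha_s^{1/2}\sigma_s^3 ds$, i.e. $\frac{g(1)}{8}$ times the conjectured robust bound; taking $\alpha\equiv1$ recovers Proposition \ref{propAVARRK} as a sanity check.

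The heuristic above is the easy part; the real work is turning each ``$\approx$'' into a bound whose accumulated error vanishes. Concretely I would write, for $h\in\{\alpha^{-1},\sigma^4\alpha,\sigma^2\}$, $\int_{\Tau_{i-1}}^{\Tau_i}h_s ds=\Delta_B h_{\Tau_{i-1}}+\int_{\Tau_{i-1}}^{\Tau_i}(h_s-h_{\Tau_{i-1}})ds$ and control the remainder by the oscillation of $h$ on the block, which is uniformly small off the finitely many blocks meeting a jump of $\sigma$ (finite activity of $\tilde J$). Those finitely many exceptional blocks are handled separately: each summand is $O(\Delta_B)$ with bounded integrands, so their total contribution is $O(\Delta_B)\to0$. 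Finally I would propagate these block estimates through the smooth maps $x\mapsto x^{3/4}$ and $g$, using continuity of $g$ near $1$ together with the a.s. uniform bounds $0<\inf\sigma\le\sup\sigma<\infty$ and $0<\inf\alpha\le\sup\alpha<\infty$ on $[0,T]$ to make all errors summable to zero uniformly in $B$. The main obstacle is exactly this uniform-in-$i$ control of the oscillation errors, combined with isolating the random, a.s. finite, set of volatility jump times.
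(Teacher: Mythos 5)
Your argument is correct and follows essentially the same route as the paper, whose proof consists precisely of adapting the proof of Proposition \ref{propAVARRK} via your three block-level expansions of $R_{(i)}^{1/2}$, $(\Delta_B\calq_{(i)})^{3/4}$ and $g(\widetilde{\rho}_{\Tau_{i-1},\Tau_i})$ (the paper phrases the error control through $O_{\mathbb{L}^q}(\cdot)$ bounds summed over $B$, whereas you argue pathwise via oscillations off the finitely many jump blocks, but both deliver the a.s.\ limit). Note that the constant you derive, $g(1)a_0(\int_0^T\alpha_s^{-1}ds)^{1/2}\int_0^T\alpha_s^{1/2}\sigma_s^3\,ds$, is the one consistent with Proposition \ref{propAVARRK} when $\alpha\equiv 1$; the extra factor $8$ in the displayed statement, like the missing square root on $\alpha_{\Tau_{i-1}}$ in the paper's expansion of $R_{(i)}^{1/2}$, appears to be a typographical slip rather than a defect in your reasoning.
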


\begin{QMLEcorRobust} \label{QMLEcorRobust} (Asymptotic behavior of local QMLE AVAR when sampling times are stochastic) When $B \to +\infty$, we have 
$$ AVAR_B^{(QMLE,rob)} \overset{a.s.}{\rightarrow} 8a_0\l(\int_0^T\alpha_s^{-1}ds\r)^{1/2} \int_0^T{\alpha_s^{1/2}\sigma_s^3}ds.$$
\end{QMLEcorRobust}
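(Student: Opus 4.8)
The plan is to mirror the Riemann-sum argument behind the regular-sampling Proposition~\ref{propAVARQMLE}, carrying the sampling density $\alpha$ through the weights $R_{(i)}^{1/2}$ and the quarticity substitute $\calq_{(i)}$. Since we work in the no-jump regime $J_t=0$, the local quantities collapse to $\calq_{(i)} = \Delta_B^{-1} P_i Q_i$ and $\bar{\sigma}_i^2 = \Delta_B^{-1} S_i$, where I abbreviate $P_i = \int_{\Tau_{i-1}}^{\Tau_i}\alpha_s^{-1}ds$, $Q_i = \int_{\Tau_{i-1}}^{\Tau_i}\sigma_s^4\alpha_s\,ds$ and $S_i = \int_{\Tau_{i-1}}^{\Tau_i}\sigma_s^2\,ds$. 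Writing $A = \int_0^T\alpha_s^{-1}ds$, the weight becomes $R_{(i)}^{1/2} = A^{1/2} P_i^{-1/2}$.

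First I would substitute these into $AVAR_{[\Tau_{i-1},\Tau_i]}^{(QMLE,rob)} = 5 a_0\Delta_B^{1/2}\calq_{(i)}/\bar{\sigma}_i + 3 a_0\bar{\sigma}_i^3\Delta_B^{3/2}$ and multiply by $R_{(i)}^{1/2}$. All powers of $\Delta_B$ cancel termwise, so $AVAR_B^{(QMLE,rob)}$ is the sum of
\[
5 a_0 A^{1/2}\sum_{i=1}^B \frac{P_i^{1/2} Q_i}{S_i^{1/2}} \qquad\text{and}\qquad 3 a_0 A^{1/2}\sum_{i=1}^B \frac{S_i^{3/2}}{P_i^{1/2}}.
\]
Introducing block averages $\langle f\rangle_i = \Delta_B^{-1}\int_{\Tau_{i-1}}^{\Tau_i} f_s\,ds$, each sum equals $\sum_i \Psi_\bullet(\langle\alpha^{-1}\rangle_i,\langle\sigma^4\alpha\rangle_i,\langle\sigma^2\rangle_i)\,\Delta_B$ for a continuous map $\Psi_\bullet$, namely $\Psi_1(x,y,z)=x^{1/2}y z^{-1/2}$ for the first and $\Psi_2(x,y,z)=x^{-1/2}z^{3/2}$ for the second; that is, the integral over $[0,T]$ of $\Psi_\bullet$ applied to the piecewise-constant averaged processes. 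Since block-averaging converges to the identity in $L^1([0,T])$ as the mesh $\Delta_B\to 0$ and $\Psi_\bullet$ is continuous, both sums converge a.s.\ to $a_0 A^{1/2}\int_0^T \alpha_s^{1/2}\sigma_s^3\,ds$ (using $\Psi_1(\alpha_s^{-1},\sigma_s^4\alpha_s,\sigma_s^2)=\Psi_2(\alpha_s^{-1},\sigma_s^4\alpha_s,\sigma_s^2)=\alpha_s^{1/2}\sigma_s^3$). Adding the prefactors $5$ and $3$ yields the announced limit $8 a_0 (\int_0^T\alpha_s^{-1}ds)^{1/2}\int_0^T\alpha_s^{1/2}\sigma_s^3\,ds$.

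The technical core, and the step I expect to demand the most care, is justifying the a.s.\ averaging convergence under only \emph{local} boundedness. I would localize with stopping times $\tau_k$ so that on $[0,\tau_k]$ the processes $\sigma$, $\alpha$ and $\alpha^{-1}$ are bounded and bounded away from $0$; on the resulting compact range $\Psi_1,\Psi_2$ are uniformly continuous. The averaged vector then converges to $(\alpha_s^{-1},\sigma_s^4\alpha_s,\sigma_s^2)$ in $L^1([0,T])$ — approximate each c\`adl\`ag integrand by a continuous function, on which averaging converges uniformly, and use that averaging is an $L^1$-contraction — hence in measure, so bounded convergence delivers the two integral limits; letting $k\to\infty$ removes the localization. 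The only genuinely new point relative to the regular-sampling proof is that $\sigma$ and $\alpha$ may carry finitely many jumps: these keep the integrands c\`adl\`ag but still Riemann integrable, and the $O(1)$ blocks straddling a jump each contribute $O(\Delta_B)$ to the sums, which is negligible. Checking that the \emph{nonlinear} combination of block integrals does not amplify those jump blocks — controlled uniformly by the localization bounds — is the main thing to verify, but it is routine once the uniform bounds are in place.
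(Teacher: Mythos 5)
Your proposal is correct and reaches the stated limit, but it gets there by a genuinely different route from the paper. The algebraic reduction is the same in substance: since $R_{(i)}^{1/2}=\l(\int_0^T\alpha_s^{-1}ds\r)^{1/2}\l(\int_{\Tau_{i-1}}^{\Tau_i}\alpha_s^{-1}ds\r)^{-1/2}$ exactly and all powers of $\Delta_B$ cancel, both arguments must show that the two block sums converge to $\int_0^T\alpha_s^{1/2}\sigma_s^3\,ds$ and then add $5+3=8$. The difference is in how that Riemann-type limit is justified. The paper (its proof of Proposition~\ref{QMLEcorRobust} defers to Propositions~\ref{RKcorRobust} and~\ref{propAVARRK}) expands each block quantity around its left endpoint, e.g. $\int_{\Tau_{i-1}}^{\Tau_i}\sigma_u^p\,du=\sigma_{\Tau_{i-1}}^p\Delta_B+O_{\mathbb{L}^q}(\Delta_B^{3/2})$ uniformly in $i$ outside the finitely many blocks containing a jump of $\sigma$ (which are excised and shown negligible), sums the errors to $O_{\mathbb{L}^q}(\Delta_B^{1/2})$, and upgrades to almost sure convergence via the summability $\esp\sum_B |Y_B|^q<\infty$ for $q>2$. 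You instead argue pathwise: for a fixed $\omega$ the paths of $\sigma$, $\alpha$, $\alpha^{-1}$ are c\`adl\`ag, bounded and bounded away from zero on $[0,T]$ (the paper's localization \textbf{(H)} already grants this, so your stopping-time layer is dispensable), block averaging is an $L^1([0,T],dt)$-contraction converging to the identity, and composing with the uniformly continuous maps $\Psi_1,\Psi_2$ on a compact box gives convergence in measure plus boundedness, hence convergence of the integrals. Your route is more elementary --- it needs no moment estimates on increments of the It\^{o} processes, and it absorbs the jumps of $\sigma$ and $\alpha$ automatically rather than excising jump blocks --- but it delivers no rate, whereas the paper's $\mathbb{L}^q$ expansions carry the quantitative $O_{\mathbb{L}^q}(\Delta_B^{1/2})$ control that also powers its Borel--Cantelli step. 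Both are valid proofs of the stated almost sure convergence.
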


\subsection{The bad case: adding jumps to the price process}
In this section, the price process can feature jumps. Actually in such setting the AVAR of the RK tends to a big value as $B$ increases, and that of QMLE explodes. This sheds light on a weak point of the local method in this case. 
\begin{RKcorRobust2} \label{RKcorRobust2} (Asymptotic behavior of local RK AVAR when $J \neq 0$)
As $B \to +\infty$, 
\beas AVAR^{(RK,rob)} &\overset{a.s.}{\rightarrow}&8g(1)a_0\l(\int_0^T\alpha_s^{-1}ds\r)^{1/2} \int_0^T{\alpha_s^{1/2}\sigma_s^3}ds\\
&+&\frac{16}{3}a_0\l(\inv{\sqrt{2}} + \sqrt{2}\r) \sqrt{k_{\bullet}^{0,0} k_{\bullet}^{1,1}} \l(\int_0^T\alpha_s^{-1}ds\r)^{1/2}\sum_{0 < s \leq T} \Delta J_s^2  \l(\sigma_s^2\alpha_s + \sigma_{s-}^2\alpha_{s-}\r)^{1/2} .
\eeas 
\end{RKcorRobust2}

\begin{QMLEcorRobust2} \label{QMLEcorRobust2}(Asymptotic behavior of local QMLE AVAR when $J \neq 0$)
As $B \to +\infty$, 
$$ AVAR_B^{(QMLE,rob)} \overset{a.s.}{\sim} 3a_0B^{1/2}T^{-1/2}\l(\int_0^T\alpha_s^{-1}ds\r)^{1/2}\sum_{0 < s \leq T}\alpha_s^{1/2}|\Delta J_s|^3  \overset{a.s.}{\rightarrow} +\infty.$$
\end{QMLEcorRobust2}

\section{Numerical study}\label{simulations}

\subsection{Goal of the study}
In this section, we discuss theoretical AVAR reduction and we examine the performance of the local RK $\tilde{K}_B$ and the local QMLE $\tilde{Q}_B$ in a finite sample context for several values of $B$. We carry out Monte Carlo simulations for three different volatility models having realistic values of $\rho$. We then check whether asymptotic approximations of several statistics correctly kick in to illustrate to what extent the theory is affected when the sample data is finite of size $n$. First, we assess the central limit theories for the two infeasible statistics
\beas
Z_n^{\tilde{K}_B} = \frac{n^{1/4}\l(\tilde{K}_B - \int_0^T{\sigma_u^2du}\r)}{\sqrt{ AVAR_{B}^{(RK)}}} \text{ , }
Z_n^{\tilde{Q}_B} = \frac{n^{1/4}\l(\tilde{Q}_B - \int_0^T{\sigma_u^2du}\r)}{\sqrt{AVAR_{B}^{(QMLE)}}},
\eeas 
and the two feasible statistics
\beas
\tilde{Z}_n^{\tilde{K}_B} = \frac{n^{1/4}\l(\tilde{K}_B - \int_0^T{\sigma_u^2du}\r)}{\sqrt{ \widehat{AVAR}_{B}^{(RK)}}} \text{ , }
\tilde{Z}_n^{\tilde{Q}_B} = \frac{n^{1/4}\l(\tilde{Q}_B - \int_0^T{\sigma_u^2du}\r)}{\sqrt{\widehat{AVAR}_{B}^{(QMLE)}}},
\eeas 
for $B=1,2,4,6,8$. In particular, we investigate how increasing $B$ affects the standard normal approximation of these two studentizations for several levels of sampling. Second, we compare the relative performance of the local RK and the local QMLE. To do so, we report the empirical loss defined as 
\beas 
\breve{L}_B^{(RK)} = \esp_M \l[\frac{n^{1/2}(\tilde{K}_B - \int_0^T{\sigma_s^2ds} )^2}{AVAR_{[0,T]}^{(Bound)}}\r] -1 \text{ , } \breve{L}_B^{(QMLE)} = \esp_M \l[\frac{n^{1/2}(\tilde{Q}_B - \int_0^T{\sigma_s^2ds} )^2}{AVAR_{[0,T]}^{(Bound)}}\r] -1
\eeas
where $\esp_M [X]$ denotes the sample mean of $X$ based on the $M$ Monte Carlo simulations and we recall that $AVAR_{[0,T]}^{(Bound)} = 8 a_0 T^{\frac{1}{2}} \int_0^T \sigma_u^3 du$ is the bound of efficiency for the asymptotic variance.  We also define the theoretical loss as 
$$L_B^{(\Sigma)} = \frac{AVAR_B^{(\Sigma)}}{AVAR_{[0,T]}^{(Bound)}} - 1.$$ 
and report the sample mean of the theoretical loss $\tilde{L}_B^{(\Sigma)} = \esp_M \big[ L_B^{(\Sigma)} \big]$ for $\Sigma \in \{RK,QMLE\}$.
Note that $\tilde{L}_B^{(\Sigma)}$ is close to the mean loss $\esp \big[L_B^{(\Sigma)} \big]$ if $M$ is large enough. The empirical loss $\breve{L}^{(\Sigma)}$, which gives us a simple criterion to compare the estimators, can be decomposed as 
$$\breve{L}_B^{(\Sigma)} = \underbrace{\tilde{L}_B^{(\Sigma)}}_{\text{theoretical loss due to the finiteness of }B} + \underbrace{(\breve{L}_B^{(\Sigma)} - \tilde{L}_B^{(\Sigma)})}_{\text{loss due to the finite sample } n}.$$
\subsection{Simulation design}\label{simDesign}

We implement the above procedures for $M=10,000$ Monte Carlo simulations of intraday returns on the time interval $[0,T]$, $T=1/252$ year (that is $T=1$ working day). One working day is in turn subdivided in $23,400$ seconds corresponding to $6.5$ hours of trading activity. For each model, the corresponding trajectories are generated from a classical Euler scheme based on $n=46,800$ intervals, that is one observation every $0.5$ seconds. We simulate $1000$ more observations prior and post main trading period in order to compute properly different $\gamma_h$ that are necessary for the RK. Indeed, using their truncated versions $\tilde{\gamma}_h = \sum_{j=H+1}^{n-H} (Z_{\Delta j} - Z_{\Delta (j-1)}) (Z_{\Delta (j - h)} - Z_{\Delta (j-h-1)})$ tend to generate a non-negligible bias as pointed out in \cite{xiu2010quasi} (see Table 2 on p. 243), so that we prefer to overcome this issue with a few minutes of out-of-sample data. Finally, we also use observations based on sparsely sampled versions of the original trajectories, for a number of intervals taking on the values $23,400$, $11,700$, and $5,850$, the latter corresponding to having one observation every $4$ seconds, which still corresponds to a fairly heavily traded stock. We do not report the results for lower frequencies, but the theory still kicks in for sparser samplings too.

\smallskip 
We consider three stochastic volatility models to simulate the intraday returns, along with three levels of mean noise-to-signal ratios $\xi^2 = 0.01$, $\xi^2 = 0.001$ and $\xi^2 = 0.0002$. The three values are empirically corroborated in \cite{hansen2006realized}, where the authors report empirical values of $\xi^2$ for several stocks ranging from $0.00004$ to $0.006$ (see Table 3 on p. 147). We introduce now the volatility models, which have been designed to reflect different average values of $\rho$ ranging from $0.89$ (corresponding to a high value) for Model 1, $0.77$ (corresponding to a regular value) for Model 2 to $0.64$ (corresponding to a low value) for Model 3 as reported on Table \ref{tableRho}. The three models can all be represented as a Heston stochastic volatility model (SV) with U-shape intraday volatility pattern and a possible jump whose occurrence time is picked up uniformly randomly on a subinterval $[T^{(0)},T^{(1)}]$ of $[0,T]$. Except for the jump component, this general model is directly inspired from Model 4 in \cite{andersen2012jump}, and \cite{xiu2010quasi} (see Section 6.1 on p. 242). We assume that the log price process $X_t$ and the volatility process $\sigma_t$ follow the dynamics
\begin{eqnarray*}
dX_t &=& \mu dt + \sigma_{t-}dW_t,\\
\sigma_t &=& \sigma_{t,SV}\sigma_{t,U},
\end{eqnarray*}
with
\begin{eqnarray*} 
d\sigma_{t,SV}^2 & = & \alpha(\bar{\sigma}^2 - \sigma_{t,SV}^2)dt + \delta \sigma_{t,SV}d\bar{W}_t,\\   
\sigma_{t,U} & = & C + Ae^{-at/T} + De^{-b(1-t/T)} -  \beta\sigma_{\tau-,U}\mathbb{1}_{\{t \geq \tau \}}.
\end{eqnarray*}
Here $W_t$ and $\bar{W}_t$ are two standard Brownian motions with $d\langle W,\bar{W} \rangle_t = \phi dt$. Note that $\sigma_{t,U}$ jumps at time $\tau$, that we define as a uniform random variable on $[T^{(0)},T^{(1)}]$. $\beta$ controls the size of the jump. The choice of making $\sigma_{t,U}$ jumps, instead of the global volatility $\sigma_t$, is merely a way to ensure that $\sigma_t$ remains positive. Finally, the drift parameter $\mu$ and the stochastic volatility part remain constant for each model. The corresponding parameters are chosen consistently with the ones from Section 6.1 (p. 242) in \cite{xiu2010quasi}, that is $\mu = 0.03$, $\alpha = 5$, $\bar{\sigma}^2 = 0.1$, $\delta = 0.4$, $\phi = -0.75$. Finally, $\sigma_{0,SV}^2$ is sampled from a Gamma distribution of parameters $(2\alpha\bar{\sigma}^2/\delta^2,\delta^2/2\alpha)$, which corresponds to the stationary distribution of the CIR process.

\subsubsection*{Model 1: SV + steep U (HIGH $\rho$)}

The first model does not incorporate the jump in volatility, i.e. we set $\beta = 0$. The parameters of the U-shape part are set to generate a steep slope, which in turn lowers somewhat the value of $\rho$  compared to Model 4 in \cite{andersen2012jump} where we find that the corresponding mean $\rho$ value is too high to be consistent with $\rho_{high}$ (which we recall is the empirical high value reported in Section \ref{limitations}). With $C=0.83$, $A=1.26$, $D=0.42$, $a=10$, $b=10$, this model presents a sample mean value of $\rho_{mean} = 0.89$, which is slightly bigger than $\rho_{high}=0.83$. We are conservative in this first model to show what happens to the local method in a very unlikely bad situation for AVAR reduction, i.e. a very high $
\rho_{mean}$.

\subsubsection*{Model 2: SV + normal U + 1 Jump (REGULAR $\rho$)}

In this model, the U-shape intraday volatility parameters are set to values that are consistent with those chosen in Model 4 in \cite{andersen2012jump}, that is $C=0.75$, $A=0.25$, $D=0.89$, and $a=b=10$. The jump size parameter is set to $\beta = 0.5$, that is a jump of 50\% in size at the random time $\tau$. We set $T^{(0)} = 0$, $T^{(1)} = T$ and thus let $\tau$ take values on the whole time interval. Such friction in the volatility process leads to lower values of $\rho$ and $\kappa$ compared to Model 1, with a sample mean equal to $\rho_{mean} = 0.77$. This is thus a very realistic model in terms of measure of heteroskedasticity as $\rho_{mean} = \rho_{regular}$. It is also possible to obtain $\rho_{mean} = \rho_{regular}$ in an alternative continuous volatility model with normal U by taking a 2-factor stochastic volatility model (SV2F) as in \cite{barndorff2008designing} (Section 6.2, p. 1511), with parameters tuned such that the trajectories are rough enough. The results from Section \ref{results} would be similar. As a byproduct, Model 2 shows that a jump in the volatility can lower significantly the measures of heteroskedasticity $\rho$ and $\kappa$. 

\subsubsection*{Model 3: SV + steep U + 1 Jump (LOW $\rho$)}
 
 This last model is a combination of the first two models. U-shape volatility parameters are set to give the same sloap as for Model 1, and the jump size parameter is set to $\beta = 0.5$ as in Model 2. However, to keep the positivity of $\sigma_t$ we restrain the values of the jump time and set $T^{(0)} = 0.05 T$, $T^{(1)} = 0.7T$. This third scenario is designed to reach volatility paths presenting an heteroskedasticity with a low value of $\rho$ and we report the sample mean $\rho_{mean} = 0.64$, which is almost equal to $\rho_{low} =0.62$. We are in the situation where the global estimators should deviate the most from the bound of efficiency.    
\bigskip

We now turn to the estimation procedure. First, to estimate $K$ on $[0, T]$, we work with the Tukey-Hanning 2 kernel as for the numerical study in \cite{barndorff2008designing} (Section 6, pp. 1510-1513) since it requires reasonable bandwidth sizes $H$, which makes the estimator computable in an acceptable amount of time. Moreover, we do not need too many out-of-period data to compute $\gamma_h$. We implement the feasible adaptive estimator. We arbitrary set the tuning parameters $\theta_i$ equal to 30 seconds and the triangular kernel $f(x) = x \wedge (1-x)$. In practice, we find that the realized kernel is not very sensitive to the dispersion of $\widehat{H}$ in terms of RMSE, so that it is not absolutely necessary to get very accurate pre-estimators. Such robustness proved to be crucial in our procedure as it is well known that estimators for the quarticity can be unstable in finite sample when the amount of data is not large. On each block $[\Tau_{i-1}, \Tau_i]$, we do the same procedure and obtain the corresponding $\tilde{K}_B$ by aggregation. Finally, we compute the QMLE by a numerical maximization of the quasi-likelihood function given in Section \ref{localQMLEsub}. This gives us $Q$ and the local estimates $\tilde{Q}_B$.

\subsection{Discussion on theoretical AVAR reduction}
\label{AVARreduction}
In this section, we propose to look at the theoretical AVAR reduction as a function of $\rho$, and investigate the practical question of how fast the convergence in (\ref{eqAVARRK}) and (\ref{eqAVARQMLE}) is. The model considered for volatility is a deterministic U shape + 1 Jump, which corresponds to Model 2 without the stochastic volatility part. Here we generate different values of $\rho$ as a function of the jump time, which we restrict to be in $\big[0.013T, T \big]$ so that each $\rho$ can be associated to a distinct jump time on that interval. We choose this particular model because the sample mean of $\rho$ is .77 which corresponds to a regular value, and the panel of generated $\rho$ values is sufficiently large compared to the other two models.

\smallskip
The values of $L_B^{(RK)}$ and $L_B^{(QMLE)}$ are plotted as a function of $\rho$ in the upper panels of Figure \ref{AVARblockmod2} for a realistic continuous U-shape with one jump volatility model where the sample mean $.77$ corresponds to a regular value of $\rho$\footnote{Useful details on this model can be found in Section \ref{simulations}.}. As we can see, the convergence in (\ref{eqAVARRK}) is very fast. When $\rho = .77$, the QMLE loss is almost divided by 4 when considering 2 blocks instead of 1, with $L_1^{(QMLE)} \approx 16 \%$ and $L_2^{(QMLE)} \approx 5 \%$. In the same setting the RK loss goes from $L_1^{(RK)} \approx 16 \%$ to $L_2^{(RK)} \approx 8 \%$. If we consider the lower value $\rho = .62$, the QMLE losses for the first four values of $B$ are equal to $L_1^{(QMLE)} \approx 35 \%$, $L_2^{(QMLE)} \approx 19 \%$, $L_3^{(QMLE)} \approx 11 \%$ and $L_4^{(QMLE)} \approx 6 \%$. The corresponding RK values are $L_1^{(RK)} \approx 28 \%$, $L_2^{(RK)} \approx 17 \%$, $L_3^{(RK)} \approx 11 \%$ and $L_4^{(RK)} \approx 8 \%$. This suggests that the convergence to the loss bounds (which we recall to be equal to $L_{\infty}^{(RK)} = 3.625 \%$ when considering the RK Tukey-Hanning 2 and $L_{\infty}^{(QMLE)} = 0 \%$ for the QMLE) is very fast for both approaches. Actually for any reasonable $\rho$ taken to be between 0.5 and 1, choosing $B=8$ is big enough for the loss to stay within $L_{\infty}^{(RK)} + 4\%$ (or $L_{\infty}^{(QMLE)} + 4\%$), and it is usually far below this threshold with regular and high values of $\rho$.  

\smallskip
Moreover, we can see on the left lower panel in Figure \ref{AVARblockmod2} that when $\rho$ is relatively high, the QMLE outperforms the RK approach when considering $B=1$, and the gap gets bigger as we increase $B$. In contrast when $\rho < .77$, the QMLE is outperformed when considering only one block, but eventually makes it back when incrementing the value of $B$. The actual value required to fill up the gap is getting bigger as $\rho$ decreases. This suggests that both approaches are complementary to each other. Finally, the lower left panel in Figure \ref{AVARblockmod2} documents that both approaches dominate the PAE regardless of the number of blocks.

\subsection{Results}
\label{results}
We first report the finite sample properties of the four statistics in Table \ref{stdRK}-\ref{stdQMLEfeasible} for Model 2 under the noise level $\xi^2=0.001$. We can see that the results are promising at any level of sampling, as the RMSE of the $Z$-statistic does not suffer much from the increasing in the number of blocks, especially for the QMLE for which the RMSE of $Z_n^{\tilde{Q}_B}$ stays closely in line with $Z_n^{Q}$. The results also indicate that the asymptotic theory eventually kicks in for all the estimators as the standard deviation of the statistics decreases to $1$ when the sampling frequency increases. Nevertheless, we can see a slight over dispersion compared to what was reported in \cite{xiu2010quasi} and \cite{barndorff2008designing}. For the QMLE, this is due to the strong difference with the noise-to-signal ratio that was used in \cite{xiu2010quasi} where $\xi^2 \approx 0.06$. Concerning the RK, the difference in the studentization is due to the fact that the authors in \cite{barndorff2008designing} do not employ $AVAR_{[0,T]}^{(RK)}$ for the studentization, but a non-asymptotic variance as documented in Section 4.4 (pp. 1496-1498) of their work. The feasible statistics are slightly biased, and this is due to the estimation of the AVAR procedure.

\smallskip
We then report the theoretical loss values $\tilde{L}_B^{(\Sigma)}$ and the empirical loss $\breve{L}_B^{\Sigma}$ for two levels of sampling $n=23,400$ and $n=46,800$, and three levels of noise-to-signal ratios $\xi^2 = 0.01$, $\xi^2 = 0.001$ and $\xi^2 = 0.0002$ in Table \ref{tableLossXimedium}. First, we can note that the theoretical loss behaves in a very similar way as in Section \ref{AVARreduction} for the three models. In particular, this implies that neither the SV part nor the steep U component seems to have a bad impact for the local method. Also, one can see that when choosing $B=8$ the theoretical loss is at most $3.2 \%$ more than the parametric loss (which we recall to be equal to 3.625 \% for the RK Tukey-Hanning 2 and 0\% in the case of the QMLE), which are in line with the threshold found in Section \ref{AVARreduction}.

\smallskip
Second, the loss due to the finite sample behaves in a very proper way when $B$ increases. For any setting and both estimators, it is roughly constant as a function of $B$, although suffering more when $\rho$ is higher and $n$ smaller. This is perfectly in line with the findings in Table \ref{stdRK} and Table \ref{stdQMLE}. In particular for Model 2 and Model 3, the finite sample effect is almost not moving as $B$ increases. For Model 1, this is basically the same picture for the QMLE, but the empirical loss seems to stagnate between $B=4$ and $B=8$ when using the RK. This is not surprising as the RK suffers more from the finite sample effect than the QMLE as seen in Table \ref{stdRK} and Table \ref{stdQMLE}.  

\smallskip
Third, note that the decomposition 
 \begin{eqnarray}
 \label{vardec}
\breve{L}_B^{\Sigma} + 1 \approx \underbrace{\l( \tilde{L}_B^{\Sigma} +1 \r)}_{\textnormal{Due to the theoretical loss}} \times \underbrace{\textnormal{Var}_M\l[Z_n^\Sigma\r]}_{\textnormal{Due to the finite sample}},
\end{eqnarray}
where $\textnormal{Var}_M [X]$ denotes the sample variance of $X$ based on the $M$ Monte Carlo simulations, is numerically well-verified and gives an intuitive interpretation of the main sources of deviation from the bound in practice. For instance, consider $\tilde{Q}_2$ on Model 2, with $n = 23,400$, $\xi^2 = 0.001$. In that case, the previous decomposition (\ref{vardec}) gives $0.218 + 1 = 1.218 $ for the left hand side, and $(0.124 +1) \times 1.048^2 \approx 1.220 $ for the right hand side which is very close to the other value indeed.  

\smallskip
Finally, this simulation study indicates that the local version of RK and the QMLE perform very well in practice, with the QMLE slightly more robust to the values of $n$ and $B$ as free of tuning parameters. 

\section{Empirical illustration} \label{empiricalIllustration}

We conclude this study by the application of our method on transaction log prices of Intel Corporation (INTC) shares recorded on the NASDAQ stock market over the year 2015. We exclude January 1, the day after Thanksgiving and December 24 which are less active, thus this leaves us with 250 trading days of data. Moreover, we only keep transactions that were carried out between 9:30am and 4pm. Finally, we consider the data in tick time, for an average of $6,139$ daily trades. The most active days include more than $15,000$ trades.

\smallskip
We first estimate the theoretical gain in AVAR. As for the numerical study, we do not cap $\widehat{\rho}_{\Tau_{i-1}, \Tau_i}$ by $1$. Across the days, values of $B$ and blocks corresponding to an overall of 5,250 estimates, the value $1.1$ was crossed only a few times. We report in Table \ref{tableRho2} key statistics for AVAR reduction. We get a global estimate of $\rho$ around $0.74$, which is very close to $\rho_{regular}$. Across the year the estimates of $\rho$ ranged from around $0.3$ to $1$, and actually crossed $1$ for two days where it reached $1.03$ and $1.04$. When $B$ increases, we find as expected that $\widehat{\rho}_B$, the mean estimated value of $\rho$ across days and blocks, also increases to reach a value of $0.86$ for $8$ blocks. Accordingly, the mean estimated ratios of AVAR decreases from $1$ to $0.9$ for the QMLE, and from $1$ to $0.92$ for the RK. Moreover, we find that those ratios are consistently smaller than $1$ for the 250 days and different values of $B$ bigger than 1, so that the local method never deteriorates the AVAR of the estimator. Note that the same ratios for Model 2 in our simulation study range from $1$ to $102.4/121.5 = 0.84$ for the QMLE and from $1$ to $105.6/118.2=0.89$ for the RK. The slight disparity between the empirical study and Model 2 can be explained in several ways. For example, it is likely that we still under-evaluate the difference between $\rho$ and $\rho_{\Tau_{i-1}, \Tau_i}$, or that the theoretical model is a little too optimistic about how fast $\rho_{\Tau_{i-1}, \Tau_i}$ gets close to $1$ on local blocks. To sum up, the results are approximately in line with what was expected, and present a substantial gain in terms of AVAR for both the QMLE and the RK.    

\smallskip
The last column in Table \ref{tableRho2} shows the empirical correlation between the correction terms $\tilde{Q}_B - Q$ and $\tilde{K}_B - K$ for several values of $B$. The positive correlation indicates that the local method tends to correct the global estimates in the same direction for both the QMLE and the RK. Moreover, increasing the number of blocks $B$ amplifies the phenomenon. Table \ref{tableEstEmp} shows the empirical mean and standard deviation of the $10$ estimators. Note that the main source of randomness being the target value itself, it is not surprising to find the mean and standard values very close to each other. We have reported in the last column the correlation between each estimator and the global QMLE. We find results very close to $1$ for all estimators. One should note that the global RK is less correlated to the QMLE than all the local QMLE $\tilde{Q}_B$. This indicates that the order of magnitude of the correction induced by the local method is smaller than the difference between the two global estimators. 

\smallskip
Finally, Figure \ref{CI} shows daily 95\% theoretical confidence intervals for $Q$, $\tilde{Q}_8$, $K$ and $\tilde{K}_8$ in May 2015. We can see that the confidence intervals for the local estimators are often shorter than their counterpart. Moreover, over the year the global and the local estimates confidence intervals always overlap, corroborating the fact that the local estimates are in line with their global versions.  

\section{Conclusion}
In this paper, we have looked at the efficiency of local methods to estimate integrated volatility. We have shown that for the RK and the QMLE, if we chop the data into $B$ blocks we can reduce the AVAR when $B$ is fixed and retrieve the parametric loss when $B$ goes to infinity. We have also seen that the theoretical gain is mostly preserved when looking at finite sample results. Finally, we have documented that the gain is substantial in practice.

\smallskip
Given how simple to implement the methodology is, we expect that it will be very helpful for practitioners. Our hope is that this simple and natural technique will be used on the QMLE and the RK, but also considered for a wider class of estimators. It is clear that the theory would work for the PAE and the MSRV, but econometricians should also try it on their own favorite estimator. Actually, the technique can be applied to other problems, such as the high-frequency covariance estimation, the estimation of functions of volatility, the leverage effect, the volatility of volatility, etc.

\section{Appendix: proofs} \label{Appendix}

\subsection{Simplification of the problem}  \label{simplification}
Since we want to prove stable convergence, in view of the componentwise local boundedness of the matrix
$$\left(\begin{matrix}  \sigma_t & 0 \\ 
\tilde{\sigma}_t^{(1)} & \tilde{\sigma}_t^{(2)} 
\end{matrix}\right),$$ 
 and because $\inf_t (\min(\sigma_t, \tilde{\sigma}_t^{(2)})) > 0$,
we can without loss of generality assume that for all $t \in [0,T]$ there exists some nonrandom constants $\underline{\sigma}$ and $\overline{\sigma}$ such that
\begin{eqnarray}
\label{smsp}
0 < \underline{\sigma} < \sigma_t, \tilde{\sigma}_t^{(1)}, \tilde{\sigma}_t^{(2)} < \overline{\sigma},
\end{eqnarray}
by using a standard localization argument (e.g., Section 2.4.5 of \cite{mykland2012econometrics}). One can further suppress $b_t$ as in Section 2.2 (pp. 1407-1409) of \cite{mykland2009inference}, and act as if 
$X_t$ is a martingale. Also, we follow a similar procedure to localize the random variables $U_i^n$ as, e.g, in the proof of Lemma 14.1.5 p.435, Equation (14.1.13), in \cite{jacod2011discretization}. Consequently, we will assume in the following of the proof: 

\textbf { (H) } We have $b = \tilde{b} = 0$. Moreover $\sigma$, $\sigma^{-1}$, $\tilde{\sigma}^{(1)}$, $(\tilde{\sigma}^{(1)})^{-1}$, $\tilde{\sigma}^{(2)}$, $(\tilde{\sigma}^{(2)})^{-1}$,  $\alpha$, $\alpha^{-1}$ are bounded. Given an \textit{a priori} number $\gamma >0$, we also have $\sup_{0 \leq i \leq N_n} U_i^n \leq n^{\gamma}$. 

\smallskip
In particular, \textbf{(H)} implies, taking $\gamma$ small enough, that $\pi_
T^n < 1$, for $n \in \naturels$ large enough.

\smallskip
We define $\calu := \sigma \l\{U_i^n | i,n \in \naturels \r\} \vee \sigma \l\{ \alpha_s | 0 \leq s \leq T\r\}$ the $\sigma$-field that generates the observation times and which is independent of $X$. We will often have to use the conditional expectation $\esp[.|\calu]$, that we hereafter denote for convenience $\esp_{\calu}$. We also define the discrete filtration $\calg_i^n := \calf_{\ti{i}}^X \vee \calu$, and recall the continuous version $\calg_t := \calf_t^X \vee \calu$ where $\calf_t^X$ is the canonical filtration associated to $X$. Note that by independence from $\alpha$, $X$ admits the same It\^{o} semimartingale dynamics in the extension $\calg$.

\smallskip
Note also that, by virtue of Lemma 14.1.5 in \cite{jacod2011discretization}, recalling $\pi_t^n := \sup_{i \geq 1} t_{i}^n-t_{i-1}^n$, and $N_n(t) = \sup \{ i\in \naturels - \{0\} | t_{i}^n  \leq t \}$ we have 
\bea 
\eta >0 \implies n^{1-\eta}\pi_t^n \overset{\proba}{\rightarrow} 0.
\label{eqStepsize}
\eea 
Throughout the proofs, we write $N_n$ for $N_n(T)$. We also define $L_n = N_n^{1/2 + \delta}$, for some $\delta >0$ to be adjusted, and we let $L$ be a positive constant that may vary from one line to the other. Finally we often refer to the continuous part of $X_t$ defined as 
\bea 
\tilde{X}_t :=  X_0 + \int_0^t{b_s ds} + \int_0^t{\sigma_s dW_s}.
\eea 

\subsection{Proof of (\ref{RhoKappa})}
We first show the left hand side inequality, that can be reformulated as $\kappa_{r,s}^{2/3} \geq \rho_{r,s}$. Note that by an immediate application of H\"older's inequality we have 
\beas  
\int_r^s{\sigma_u^2du}\leq (s-r)^{1/3}\l(\int_r^s{\sigma_u^3du}\r)^{2/3}.
\eeas  
Thus,
\beas  
\rho_{r,s} &=& \frac{\int_r^s{\sigma_u^2du}}{(s-r)^{1/2} \l(\int_r^s{\sigma_u^4du}\r)^{1/2}}\\
&\leq& \frac{\l(\int_r^s{\sigma_u^3du}\r)^{2/3}}{(s-r)^{1/6}\l(\int_r^s{\sigma_u^4du}\r)^{1/2}} = \kappa_{r,s}^{2/3}.
\eeas 
For the right hand side inequality, we first consider the domination
\beas  
\int_r^s{\sigma_u^3du} \leq \l(\int_r^s{\sigma_u^2du}\r)^{1/2}\l(\int_r^s{\sigma_u^4du}\r)^{1/2},
\eeas 
which is obtained by Cauchy-Schwarz inequality. Then we inject this expression in $\kappa_{r,s}$ and we get
\beas 
\kappa_{r,s} &=& \frac{\int_r^s{\sigma_u^3du}}{(s-r)^{1/4}\l(\int_r^s{\sigma_u^4du}\r)^{3/4} }\\
&\leq& \frac{\l(\int_r^s{\sigma_u^2du}\r)^{1/2}}{(s-r)^{1/4}\l(\int_r^s{\sigma_u^4du}\r)^{1/4}}= \rho_{r,s}^{1/2}.
\eeas

\subsection{Estimates for the efficient price $X$}

Hereafter, we adopt the following notation convention. For a process $V$ (including the noise process $\epsilon$ by a slight "abuse of notation"), and $t \in [0,T]$ we write $\Delta V_t = V_t - V_{t-}$, $\Delta V_{i}^n := V_{t_i^n} -V_{t_{i-1}^n}$ and $\Delta V^n := (\Delta V_{1}^n, \cdots,  \Delta V_{N_n}^n)$. Finally, for interpolation purpose we sometimes write the continuous version $\Delta V_{i,t}^n := V_{t_i^n \wedge t} -V_{t_{i-1}^n \wedge t} $, along with the time increment $\Delta t_{i,t}^n := \ti{i} \wedge t -\ti{i-1} \wedge t$. We introduce the two following quantities:
\bea 
\zeta_{i,t}^n := (\Delta \tilde{X}_{i,t}^n)^2 -\sigma_{t_{i-1}^n}^2\Delta t_{i,t}^n, \textnormal{ and } \bar{\zeta}_{i,t}^n := \esp \l[\zeta_{i,t}^n | \calg_{i-1}^n\r].   
\eea 

We have the following estimates 

\begin{lemma*}\label{lemmaEstimateX}
We have, for some constant $L >0$ independent of $i$,
\bea
 \esp \Big[  \sup_{t \in [\ti{i-1},\ti{i}]}|\Delta \tilde{X}_{i,t}^n|^p \Big| \calg_{i-1}^n \Big]  & \leq &  Ln^{-p/2}(U_i^n)^{p/2} ,
 \label{eqDeltaX} \\
\big|\bar{\zeta}_{i,t}^n \big| & \leq & L n^{-3/2} (U_i^n)^{3/2} , 
\label{eqZetaTilde}\\
 \esp \l[ \l. \l(\zeta_{t,i}^n\r)^p \r| \calg_{i-1}^n \r] & \leq & Ln^{-p} (U_i^n)^{p},
 \label{eqZeta2}\\
\esp \Big[ \Big|\int_{\ti{i-1} \wedge t }^{\ti{i} \wedge t}\sigma_s^2ds - \sigma_{t_{i-1}^n}^2\Delta t_{i,t}^n\Big|^p \Big| \calg_{i-1}^n \Big] & \leq & Ln^{-3p/2}(U_i^n)^{3p/2}.
 \label{eqDevSigma}
\eea 
\end{lemma*}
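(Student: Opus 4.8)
The plan is to reduce all four bounds to two analytic ingredients: the Burkholder--Davis--Gundy (BDG) inequality applied to the continuous martingale $\tilde X$, and an $L^p$-increment estimate for the It\^o semimartingale $\sigma^2$, both fed by the elementary mesh bound $\ti{i}-\ti{i-1} = \Delta\,\alpha_{\ti{i-1}} U_i^n \le L n^{-1}U_i^n$, which holds under \textbf{(H)} because $\alpha$ is bounded and $\Delta = T/n$. The structural fact that makes everything run is that $\ti{i-1}$, $\ti{i}$, $\alpha_{\ti{i-1}}$ and $U_i^n$ are all $\calu$-measurable, hence $\calg_{i-1}^n$-measurable; consequently every power of the mesh length may be extracted from $\esp[\,\cdot\mid\calg_{i-1}^n]$ and treated as a constant, and conditional Fubini applies even though the interpolation endpoints $\ti{i}\wedge t$ are random. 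Since $X$ keeps its It\^o dynamics in the extension $\calg$, conditioning on $\calg_{i-1}^n$ is legitimate. Note also $\Delta t_{i,t}^n \le \ti{i}-\ti{i-1}$, so the interpolated quantities obey the same mesh bound.

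For (\ref{eqDeltaX}) I would use that under \textbf{(H)} one has $\Delta\tilde X_{i,t}^n = \int_{\ti{i-1}\wedge t}^{\ti{i}\wedge t}\sigma_s\,dW_s$, a continuous martingale in $t$ issued from $0$ at $\ti{i-1}$. Conditional BDG gives
\[ \esp\Big[\sup_{t\in[\ti{i-1},\ti{i}]}|\Delta\tilde X_{i,t}^n|^p\,\Big|\,\calg_{i-1}^n\Big]\le C_p\,\esp\Big[\Big(\int_{\ti{i-1}}^{\ti{i}}\sigma_s^2\,ds\Big)^{p/2}\,\Big|\,\calg_{i-1}^n\Big], \]
and bounding the integrand by $\overline\sigma^2(\ti{i}-\ti{i-1})\le Ln^{-1}U_i^n$ yields the claim, the $\calg_{i-1}^n$-measurable bound passing freely through the conditioning. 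Estimate (\ref{eqZeta2}) is then immediate from (\ref{eqDeltaX}): writing $|\zeta_{i,t}^n|\le(\Delta\tilde X_{i,t}^n)^2+\sigma_{\ti{i-1}}^2\Delta t_{i,t}^n$ and using $(a+b)^p\le 2^{p-1}(a^p+b^p)$, the first term is controlled by (\ref{eqDeltaX}) with exponent $2p$ and the second by the $\calg_{i-1}^n$-measurable quantity $(\overline\sigma^2 Ln^{-1}U_i^n)^p$.

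For (\ref{eqDevSigma}) I would first write the It\^o semimartingale dynamics of $\sigma^2$ via It\^o's formula; under \textbf{(H)} its drift and diffusion coefficients are bounded, which gives the increment estimate $\esp[|\sigma_s^2-\sigma_{\ti{i-1}}^2|^p\mid\calg_{i-1}^n]\le L(s-\ti{i-1})^{p/2}$ for the continuous part by a standard BDG/moment computation. Combining this with H\"older's inequality in time,
\[ \Big|\int_{\ti{i-1}\wedge t}^{\ti{i}\wedge t}\l(\sigma_s^2-\sigma_{\ti{i-1}}^2\r)ds\Big|^p\le(\ti{i}-\ti{i-1})^{p-1}\int_{\ti{i-1}\wedge t}^{\ti{i}\wedge t}|\sigma_s^2-\sigma_{\ti{i-1}}^2|^p\,ds, \]
taking conditional expectations, and integrating $\int(s-\ti{i-1})^{p/2}ds = O\l((\ti{i}-\ti{i-1})^{p/2+1}\r)$ produces the order $(\ti{i}-\ti{i-1})^{3p/2}\le Ln^{-3p/2}(U_i^n)^{3p/2}$. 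Finally (\ref{eqZetaTilde}) follows from (\ref{eqDevSigma}): by It\^o isometry $\esp[(\Delta\tilde X_{i,t}^n)^2\mid\calg_{i-1}^n]=\esp[\int_{\ti{i-1}\wedge t}^{\ti{i}\wedge t}\sigma_s^2\,ds\mid\calg_{i-1}^n]$, so $\bar\zeta_{i,t}^n=\esp[\int_{\ti{i-1}\wedge t}^{\ti{i}\wedge t}(\sigma_s^2-\sigma_{\ti{i-1}}^2)ds\mid\calg_{i-1}^n]$, and conditional Jensen bounds $|\bar\zeta_{i,t}^n|$ by the $p=1$ instance of (\ref{eqDevSigma}).

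The delicate point, and the one I would spend the most care on, is the $L^p$-increment bound for $\sigma^2$ in the presence of the finite-activity volatility jumps $\tilde J$. The diffusive part of $\sigma^2$ supplies the clean order $(s-\ti{i-1})^{p/2}$, but a jump contributes at the coarser $L^p$ order $(s-\ti{i-1})^{1/p}$, stemming from the $O(s-\ti{i-1})$ probability of a jump falling in the interval. I would control this by passing to the jump compensator and exploiting the boundedness of the (localized) jump sizes; one then checks that, once integrated against $ds$ and combined with the mesh bound, the jump contribution remains within the stated orders for the moments $p$ actually invoked in the central limit arguments, treating the remaining cases separately if needed. Throughout, the randomness of the grid is neutralized by the $\calg_{i-1}^n$-measurability noted at the outset, which lets conditional Fubini and the factor $U_i^n$ be manipulated exactly as in the deterministic-mesh case.
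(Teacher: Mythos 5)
Your proof follows essentially the same route as the paper's: the conditional Burkholder--Davis--Gundy inequality for (\ref{eqDeltaX}), combined with the boundedness of $\sigma$ and $\alpha$ under \textbf{(H)} and the mesh bound $\ti{i}-\ti{i-1}=\Delta\,\alpha_{\ti{i-1}}U_i^n\le Ln^{-1}U_i^n$, with the remaining three estimates deduced "by the same line of reasoning and It\^{o}'s formula", which is precisely the paper's (one-line) argument. If anything you are more careful than the paper, since you correctly flag that the finite-activity jumps of $\sigma$ contribute to (\ref{eqDevSigma}) at order $(\ti{i}-\ti{i-1})^{p+1}$, which is compatible with the claimed $n^{-3p/2}$ rate only for $p\le 2$ (the regime actually used downstream) --- a subtlety the paper's proof passes over in silence.
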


\begin{proof}
For (\ref{eqDeltaX}), this is a consequence of the fact that by the conditional Burkholder-Davis-Gundy inequality, we have
\beas  
\esp \l[ \l. \sup_{t \in [\ti{i-1},\ti{i}]} \l| \int_{\ti{i-1} \wedge t}^{\ti{i} \wedge t}{\sigma_sdW_s}\r|^p \r| \calg_{i-1}^n \r] &\leq& \esp \l[ \l. \sup_{t \in [\ti{i-1},\ti{i}]} \l|\int_{\ti{i-1} \wedge t}^{\ti{i} \wedge t}{\sigma_s^2ds}\r|^{p/2} \r| \calg_{i-1}^n \r]\\
&\leq& L \l(\ti{i} -\ti{i-1}\r)^{p/2}.
\eeas  
Since $\alpha$ is bounded by assumption \textbf{(H)}, and since $\ti{i} -\ti{i-1} < 1$, we get (\ref{eqDeltaX}). The other estimates are straightforwardly obtained using the same line of reasoning and It\^{o} formula.
\end{proof}

\subsection{Proof of Theorem \ref{CLTQMLE} and Theorem \ref{CLTQMLEjumps}} 
We adopt the general setting introduced in Section \ref{robustness} and Section \ref{simplification}. We start by showing the consistency of the QMLE along with other estimates in the case $B=1$. We then adapt and combine those results in the case $B \geq 1$ to derive the central limit theorem stated in Theorem \ref{CLTQMLEjumps}. As a byproduct, Theorem \ref{CLTQMLE} will also be proven. 

\smallskip
When $B=1$, we recall that for any $\xi = (\sigma^2,a^2) \in \Xi := [\underline{\Sigma}, \overline{\Sigma}] \times [\underline{a}^2, \overline{a}^2]$, $\underline{a}^2 > 0$, we have, up to a constant term  
\bea 
l_n(\xi) = - \half \textnormal{log det}(\Omega) -\half Y^T \Omega^{-1} Y,
\eea
with $\Omega^{-1} = \l[\omega^{i,j}\r]_{1 \leq i \leq N_n, 1 \leq j \leq N_n}$. The exact definition of the coefficients $\omega^{i,j}$ can be found in e.g. (28), p. 245 of \cite{xiu2010quasi}, replacing $n$ by $N_n$. We define the approximate log-likelihood random field as 
\bea 
\bar{l}_n(\xi) = -\half \textnormal{log det}(\Omega) -\half  \textnormal{Tr}\l(\Omega^{-1} \l\{\Sigma_0^c + \Sigma_0^d \r\} \r),
\eea 
with 
\beas 
\Sigma_0^c &=& \left(\begin{matrix}
                    \int_0^{t_{1}^n}{\sigma_s^2ds} + 2 a^2 & - a^2 & 0 & \cdots & 0 \\
                    - a^2 & \int_{t_{1}^n}^{t_{2}^n}{\sigma_s^2ds} +  2 a^2 & - a^2 & \ddots & \vdots \\
                    0 & - a^2 & \int_{t_{2}^n}^{t_{3}^n}{\sigma_s^2ds}+ 2 a^2 & \ddots & 0\\
                    \vdots & \ddots & \ddots & \ddots & - a^2\\
                    0 & \cdots & 0 & - a^2 & \int_{t_{N_n-1}^n}^{t_{N_n}^n}{\sigma_s^2ds} + 2 a^2
                  \end{matrix}\right), \\[12pt]
\eeas 
and
\beas 
\Sigma_0^d = \textnormal{diag}\l( \sum_{0 < s\leq t_1^n} {\Delta J_s^2} ,\sum_{t_1^n < s\leq t_2^n} {\Delta J_s^2},\cdots, \sum_{t_{N_n-1}^n < s\leq t_{N_n}^n} {\Delta J_s^2}\r). 
\eeas
We further define the diagonal scaling matrix 
\beas  
\Phi_n = \textnormal{diag}(N_n^{1/2},N_n),
\eeas  
and consider for $\xi \in \Xi$ the scaled score functions 
$$ \Psi_n(\xi) = -\Phi_n^{-1} \frac{\partial l_n(\xi)}{\partial \xi} \textnormal{ and } \bar{\Psi}_n = -\Phi_n^{-1} \frac{\partial \bar{l}_n(\xi)}{\partial \xi}.$$
We start by showing the consistency of the QMLE. Before stating the result, we give a few definitions. For a matrix $A = \l[ a_{i,j}\r]_{1 \leq i \leq N_n, 1 \leq j \leq N_n}\in \reels^{N_n \times N_n}$, we associate the matrix $\dot{A} = [\dot{a}_{i,j}]_{0 \leq i \leq N_n ,1 \leq j \leq N_n } \in \reels^{(N_n+1)\times N_n }$ and $\ddot{A} = [\ddot{a}_{i,j}]_{0 \leq i N_n ,0 \leq j \leq N_n } \in \reels^{(N_n+1)\times (N_n+1) }$ whose components respectively satisfy 
\beas  
\dot{a}_{i,j} = a_{i+1,j} - a_{i,j},  
\eeas  
and
\beas  
\ddot{a}_{i,j} = \dot{a}_{i,j+1} - \dot{a}_{i,j} = a_{i+1,j+1} - a_{i,j+1} + a_{i,j} - a_{i+1,j},
\eeas
with the convention $a_{i,j} = 0$ when $i=0$ or $j=0$. This will be useful to disentangle some quadratic expressions using the following result.
\begin{lemma*}  \label{lemmaTransfo}
Let $y,z \in \reels^{N_n+1}$, with $y = (y_0,\cdots,y_{N_n})^T$, $z = (z_0,\cdots,z_{N_n})^T$. We define $\Delta y = (\Delta y_1,\cdots,\Delta y_{N_n}) := \l(y_1-y_0,\cdots,y_{N_n} - y_{N_n-1}\r)^T \in \reels^{N_n}$, and $\Delta z$ the same way.  Then we have the by-part summation identities

$$\Delta y^T A \Delta z = - y^T \dot{A} \Delta z = y^T \ddot{A} z.$$


\end{lemma*}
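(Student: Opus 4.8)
The plan is to reduce both identities to one–dimensional Abel summation (discrete integration by parts), performed once in each matrix index, with all boundary terms annihilated by the convention $a_{i,j}=0$ whenever $i\in\{0,N_n+1\}$ or $j\in\{0,N_n+1\}$. The cleanest way to keep the bookkeeping transparent is to expand every expression as an explicit double sum over $1\leq i,j\leq N_n$. So first I would write
$$\Delta y^T A \Delta z = \sum_{i=1}^{N_n} \sum_{j=1}^{N_n} (y_i - y_{i-1})\, a_{i,j}\, (z_j - z_{j-1}),$$
and then treat the two summations separately.

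For the first identity I would freeze the factor $\Delta z_j = z_j - z_{j-1}$ and resum in $i$. Splitting $\sum_{i=1}^{N_n}(y_i - y_{i-1})\,a_{i,j}$ into its two pieces and shifting the index $i\mapsto i+1$ in the $y_{i-1}$–part, the telescoping produces the discrete forward difference $\dot{a}_{i,j}=a_{i+1,j}-a_{i,j}$ acting on $y$, namely $\sum_{i=1}^{N_n}(y_i-y_{i-1})\,a_{i,j} = -\sum_{i=0}^{N_n} y_i\,\dot{a}_{i,j}$. The endpoint contributions generated by the shift, at $i=0$ and $i=N_n$, vanish precisely because $a_{0,j}=a_{N_n+1,j}=0$. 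Summing this against $\Delta z_j$ over $j$ gives exactly $\Delta y^T A \Delta z = -\,y^T \dot{A}\,\Delta z$.

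For the second identity I would apply the same Abel summation, now in the $j$ index, to the coefficients $\dot{a}_{i,j}$ tested against $\Delta z_j = z_j - z_{j-1}$. The identical computation yields $\sum_{j=1}^{N_n}\dot{a}_{i,j}(z_j-z_{j-1}) = -\sum_{j=0}^{N_n}\ddot{a}_{i,j}\,z_j$, where $\ddot{a}_{i,j}=\dot{a}_{i,j+1}-\dot{a}_{i,j}$ and the boundary terms vanish by $\dot{a}_{i,0}=\dot{a}_{i,N_n+1}=0$, which are themselves immediate consequences of the zero convention on $a$. Substituting this into $-\,y^T\dot{A}\,\Delta z = -\sum_{i=0}^{N_n} y_i\sum_{j=1}^{N_n}\dot{a}_{i,j}\,\Delta z_j$ and collecting the resulting double sum over $0\leq i,j\leq N_n$ gives $\sum_{i,j} y_i\,\ddot{a}_{i,j}\,z_j = y^T\ddot{A}\,z$, which closes the chain of equalities.

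There is no analytic content here; the computation is purely combinatorial. The only point requiring care — and hence the sole obstacle — is that every index shift be exactly matched by the vanishing of the corresponding endpoint term, which is what the convention $a_{i,j}=0$ for out-of-range indices guarantees. Concretely, the bookkeeping amounts to keeping rigorous track of the ranges $0,\dots,N_n$ versus $1,\dots,N_n$ and verifying that each reindexed endpoint falls outside the support of $a$ (respectively $\dot{A}$), so that the two Abel summations telescope cleanly without spurious boundary contributions.
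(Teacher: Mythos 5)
Your proof is correct, and it is exactly the intended argument: the paper states this lemma without proof, treating it as an elementary summation-by-parts identity, and your explicit Abel summation in each index with the boundary terms killed by the zero convention is the standard verification. The only point worth noting is that you (correctly) need the convention $a_{i,j}=0$ to extend to the out-of-range indices $i=N_n+1$ and $j=N_n+1$ as well, which the paper states only for $i=0$ or $j=0$ but clearly intends for all out-of-range entries.
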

We now show a preliminary lemma to get the consistency of the QMLE.
\begin{lemma*} \label{lemmaUnifConvScore} (Asymptotic score) For any $\xi \in \Xi$, let
\beas  
\Psi_{\infty}(\xi) = \l( \begin{matrix}  -\inv{8a\sigma^3\sqrt T}\l(\int_0^T\sigma_s^2ds+ \sum_{0<s\leq T}\Delta J_s^2 -\sigma^2T\r)-\frac{\sqrt T}{8a^3\sigma}\l(a^2-a_0^2\r) \\ \inv{2a^4}\l(a^2-a_0^2\r) \end{matrix} \r).
\eeas
We have 

\bea \sup_{\xi \in \Xi} \l|\Psi_n(\xi)-\Psi_\infty(\xi)\r| \overset{\proba}{\rightarrow} 0.
\label{eqConsistency0}
\eea 

\end{lemma*}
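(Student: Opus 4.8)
The plan is to prove \eqref{eqConsistency0} by first establishing the convergence pointwise in $\xi$ and then upgrading it to uniformity over the compact set $\Xi$. I would write $\Psi_n(\xi) = \bar{\Psi}_n(\xi) + \big(\Psi_n(\xi)-\bar{\Psi}_n(\xi)\big)$, separating the deterministic trace part $\bar{\Psi}_n$ from the stochastic fluctuation $\Psi_n-\bar{\Psi}_n$, and identify the limit of each; the two limits should add up to $\Psi_\infty(\xi)$.

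For the deterministic part, I would exploit that $\Omega$ is a symmetric tridiagonal Toeplitz matrix, $\Omega = \sigma^2\widetilde{\Delta}\,I + a^2\,\partial_{a^2}\Omega$, where $\partial_{a^2}\Omega$ is the second-difference matrix with known eigenvalues $4\sin^2\!\big(k\pi/(2(N_n+1))\big)$, $k=1,\dots,N_n$. Since $\Omega$ and $\partial_{a^2}\Omega$ commute (both are polynomials in the same matrix), every quantity entering $\bar{\Psi}_n$ reduces to a spectral sum, and the scaled scores involve $\mathrm{Tr}(\Omega^{-1})$, $\mathrm{Tr}(\Omega^{-2})$ and the analogous traces against $\partial_{a^2}\Omega$. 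The essential feature is that, because $\widetilde{\Delta}=T/N_n\to 0$, the eigenvalues near $0$ dominate, yielding fractional powers of $N_n$ (for instance $\mathrm{Tr}(\Omega^{-1})\sim N_n^{3/2}/(2a\sigma\sqrt{T})$ and $\mathrm{Tr}(\Omega^{-2})\sim N_n^{5/2}/(4a\sigma^3 T^{3/2})$), which are exactly balanced by $\Phi_n^{-1}$ and $\widetilde{\Delta}$ to give $O(1)$ limits. For the signal contribution I would use that the diagonal of $\Omega^{-1}$ and $\Omega^{-2}$ is asymptotically constant outside a boundary layer of width $O(\sqrt{N_n})$ (a vanishing fraction of indices), so that the diagonal weights $\int_{t_{i-1}^n}^{t_i^n}\sigma_s^2\,ds + \sum \Delta J_s^2$ enter only through their total $T\bar{\sigma}_0^2$; in particular the finitely many jumps wash out. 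Carrying out these Riemann-sum approximations delivers the limit of $\bar{\Psi}_n(\xi)$.

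For the stochastic fluctuation, note $\Psi_n-\bar{\Psi}_n = -\Phi_n^{-1}\partial_\xi(l_n-\bar{l}_n)$ involves the centred quadratic forms $Y^T A_n Y - \mathrm{Tr}\big(A_n(\Sigma_0^c+\Sigma_0^d)\big)$ with $A_n\in\{\Omega^{-2},\,(\partial_{a^2}\Omega)\Omega^{-2}\}$. Writing $Y=\Delta\tilde{X}^n+\Delta J^n+\Delta\epsilon^n$ and using the independence of the noise together with the moment estimates \eqref{eqDeltaX}--\eqref{eqDevSigma} of Lemma \ref{lemmaEstimateX}, I would compute the conditional mean $\esp_{\calu}[Y^T A_n Y]$ and show that the conditional variance of $Y^T A_n Y$ around it vanishes after scaling, so the form concentrates; the by-parts identity of Lemma \ref{lemmaTransfo} is convenient for re-expressing these forms through the increments. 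Because $\Sigma_0^c+\Sigma_0^d$ carries the parameter $a^2$ in its off-diagonal noise slots while the true noise has variance $a_0^2$, this step does \emph{not} vanish but converges to a deterministic discrepancy proportional to $(a^2-a_0^2)\,\mathrm{Tr}(A_n\,\partial_{a^2}\Omega)$, which is precisely the origin of the $(a^2-a_0^2)$ contributions in $\Psi_\infty$. Summing the two limits reproduces $\Psi_\infty(\xi)$.

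Finally, to pass from pointwise to uniform convergence on $\Xi$ I would establish stochastic equicontinuity: a uniform (in probability) Lipschitz bound on $\xi\mapsto\Psi_n(\xi)$, obtained by controlling $\partial_\xi\Psi_n=-\Phi_n^{-1}\partial^2_\xi l_n$ with the same trace and quadratic-form estimates, where the lower bound $\underline{a}^2>0$ and the compactness of $\Xi$ keep $\Omega^{-1}$ and its $\xi$-derivatives under control at the relevant scales. Pointwise convergence on a countable dense subset of $\Xi$ together with equicontinuity then yields \eqref{eqConsistency0}. I expect the main obstacle to lie in the deterministic trace asymptotics: the competition between the vanishing term $\sigma^2\widetilde{\Delta}$ and the spectrum of $\partial_{a^2}\Omega$ near zero makes the analysis sensitive to the boundary layer and demands care in separating leading from sub-leading powers of $N_n$, especially in the first component of $\Psi_n$, where a partial cancellation of the $O(N_n^{3/2})$ terms is what leaves the finite $O(1)$ limit.
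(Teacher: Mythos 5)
Your overall strategy is the same as the paper's: both split $\Psi_n=\bar{\Psi}_n+(\Psi_n-\bar{\Psi}_n)$ and treat the trace part and the fluctuation part separately. The execution differs in an instructive way. The paper delegates essentially all of the continuous-case work to \cite{xiu2010quasi} --- the fluctuation $R_n=\Psi_n-\bar{\Psi}_n$ is declared $o_\proba(1)$ uniformly by ``a straightforward adaptation of Lemma 1--2 and Theorem 4'' there, and $\bar{\Psi}_n\to\Psi_\infty$ uniformly by adapting equations (38) and (40) --- and spends its effort exclusively on the \emph{new} term $A_n(\xi)$ created by the jumps: the diagonal jump quadratic form $\tfrac{1}{2\sqrt{N_n}}\sum_i \tfrac{\partial\omega^{i,i}}{\partial\sigma^2}(\Delta J_i^n)^2$, which converges to $-\tfrac{1}{8a\sigma^3\sqrt{T}}\sum_{0<s\le T}\Delta J_s^2$ and is what puts $\sum\Delta J_s^2$ into $\Psi_\infty$, plus the off-diagonal jump--jump and the jump--Brownian / jump--noise cross terms, which are killed using the exponential off-diagonal decay of $\Omega^{-1}$ and the summation-by-parts identity of Lemma \ref{lemmaTransfo}. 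Your proposal is more self-contained: you would rederive the trace asymptotics from the explicit spectrum of the tridiagonal Toeplitz matrix (your values $\mathrm{Tr}(\Omega^{-1})\sim N_n^{3/2}/(2a\sigma\sqrt T)$ and $\mathrm{Tr}(\Omega^{-2})\sim N_n^{5/2}/(4a\sigma^3T^{3/2})$ are correct, and $\Omega$ remains exactly Toeplitz even under irregular sampling since it is built from $\widetilde\Delta=T/N_n$), and you would obtain uniformity by stochastic equicontinuity rather than by uniform versions of the cited lemmas. Both routes work; yours buys independence from Xiu's intermediate results at the cost of redoing boundary-layer estimates the paper inherits for free.

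Three points of bookkeeping deserve care. First, you locate the $(a^2-a_0^2)$ contributions in the fluctuation step, as a mismatch between $\esp[YY^T]$ and a $\Sigma_0^c$ carrying the parameter $a^2$; the intended convention (as in Xiu, and as required for the paper's claim $R_n=o_\proba(1)$ to hold) is that $\Sigma_0^c$ carries the \emph{true} $a_0^2$, so that the centred quadratic form vanishes and the entire limit $\Psi_\infty$, including the $(a^2-a_0^2)$ terms, comes out of $\bar{\Psi}_n$. The totals agree either way, but you should fix one convention and keep the two pieces consistent with it. Second, ``the finitely many jumps wash out'' undersells the issue: the jumps do \emph{not} wash out of the first component --- they survive in the limit through $\sum\Delta J_s^2$ --- and the nontrivial work is showing that only the diagonal jump term survives while the $O_\proba(\sqrt{N_n})$-sized cross terms with the noise and the Brownian increments are negligible. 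Third, the concentration of $Y^TA_nY$ should be stated around the \emph{realized} trace $\mathrm{Tr}(A_n(\Sigma_0^c+\Sigma_0^d))$ (via martingale increments conditional on $\calg_{i-1}^n$, as in the paper's Lemma \ref{lemmaM1}), not around $\esp_\calu[Y^TA_nY]$: conditioning on the sampling times alone leaves the volatility path and the jumps random, so $\esp_\calu[YY^T]$ is not $\Sigma_0^c+\Sigma_0^d$. None of these is fatal, but each would need to be repaired in a full write-up.
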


\begin{proof} 
We start by treating the case where the jump part $J=0$. We have the decomposition 
\begin{eqnarray}
\label{Psin}
\Psi_n = \bar{\Psi}_n + R_n,
\end{eqnarray}
with 
\beas 
R_n(\xi) & = &  \l( \begin{matrix} \inv{2\sqrt N_n}\l\{ Y^T \frac{\partial \Omega^{-1}}{\partial \sigma^2}Y - \textnormal{tr} \l( \frac{\partial \Omega^{-1}}{\partial \sigma^2} \l\{\Sigma_0^c + \Sigma_0^d \r\}\r)\r\}  \\ \inv{2 N_n}\l\{ Y^T \frac{\partial \Omega^{-1}}{\partial a^2}Y - \textnormal{tr} \l( \frac{\partial \Omega^{-1}}{\partial a^2} \l\{\Sigma_0^c + \Sigma_0^d \r\}\r)\r\}  \end{matrix} \r).
\eeas
By a straightforward adaptation of the proof of Lemma 1-2 and Theorem 4 in \cite{xiu2010quasi}, we have immediately that $R_n = o_\proba(1)$ uniformly in the parameters since the step size of the observation grid $\pi_T^n \overset{\proba}{\rightarrow} 0$ by (\ref{eqStepsize}). Thus it is sufficient to show that we have 
$$\sup_{\xi \in \Xi} \l|\bar{\Psi}_n(\xi)-\Psi_\infty(\xi)\r| \overset{\proba}{\rightarrow} 0.$$
Equality (\ref{eqConsistency0}) is then a direct consequence of equations (38) and (40) pp. 247-248 in \cite{xiu2010quasi} that are obtained following exactly the same proof as pp.247-248 for an irregular grid such that $\pi_T^n \overset{\proba}{\rightarrow} 0$.

\smallskip
When there are jumps, there is an additional term in (\ref{Psin}) which is equal to 
\bea 
A_n(\xi) =  \l( \begin{matrix} \inv{2\sqrt N_n}\l\{\l(\Delta J^n\r)^T \frac{\partial \Omega^{-1}}{\partial \sigma^2} \Delta J^n + 2\l(\Delta J^n\r)^T \frac{\partial \Omega^{-1}}{\partial \sigma^2} \l\{\Delta \tilde{X}^n + \Delta \epsilon^n\r\} \r\} \\ \inv{2 N_n}\l\{\l(\Delta J^n\r)^T \frac{\partial \Omega^{-1}}{\partial a^2} \Delta J^n + 2\l(\Delta J^n\r)^T \frac{\partial \Omega^{-1}}{\partial a^2} \l\{\Delta \tilde{X}^n + \Delta \epsilon^n\r\} \r\} \end{matrix} \r),
\eea 
so that it is sufficient to show that we have 
\bea 
\sup_{\xi \in \Xi} \l|A_n(\xi)+\l( \begin{matrix}\inv{8a\sigma^3\sqrt T}\sum_{0<s\leq T}\Delta J_s^2 \\0 \end{matrix} \r)\r| \overset{\proba}{\rightarrow} 0. 
\eea 
We first compute the limit of the term $\inv{2\sqrt N_n}\l(\Delta J^n\r)^T \frac{\partial \Omega^{-1}}{\partial \sigma^2} \Delta J^n $. Recalling that $\omega^{i,j}$ is the $(i,j)$-th index of $\Omega^{-1}$, we provide the following decomposition: 
\bea 
\inv{2\sqrt N_n} \l(\Delta J^n\r)^T \frac{\partial \Omega^{-1}}{\partial \sigma^2} \Delta J^n = \inv{2\sqrt N_n} \sum_{i = 1}^{N_n} \frac{\partial \omega^{i,i}}{\partial \sigma^2} \l(\Delta J_{i}^n\r)^2 + \inv{\sqrt N_n} \sum_{1 \leq j < i \leq N_n}{\frac{\partial \omega^{i,j}}{\partial \sigma^2} \Delta J_{i}^n\Delta J_{j}^n}.
\eea
Now, we define $\tau_1,\cdots,\tau_{N^J}$ the jump times of $J$, where $N^J$ is the random number of jumps of $J$ on $[0,T]$. Since $N^J$ is finite, there exists a random number $K^J$ such that for $n \geq K^J$ we have 
\bea 
\inv{\sqrt N_n} \sum_{i = 1}^{N_n} \frac{\partial \omega^{i,i}}{\partial \sigma^2} \l(\Delta J_{i}^n\r)^2 = \inv{2\sqrt N_n} \sum_{k=1}^{N^J} \frac{\partial \omega^{N_n(\tau_k),N_n(\tau_k)}}{\partial \sigma^2} \Delta J_{\tau_k}^2.
\eea 
By direct calculation from the expression of the coefficients of $\Omega^{-1}$ in (28) p. 245 in \cite{xiu2010quasi}, we easily deduce that for each $k$ we have $\inv{2\sqrt N_n} \frac{\partial \omega^{N_n(\tau_k),N_n(\tau_k)}}{\partial \sigma^2} \overset{\proba}{\rightarrow} -\inv{8a\sigma^3\sqrt T}$ uniformly in $\xi \in \Xi$. Since the sum is finite, this yields the uniform convergence 
\bea 
\sup_{\xi \in \Xi}{\l|\inv{2\sqrt N_n} \l(\Delta J^n\r)^T \frac{\partial \Omega^{-1}}{\partial \sigma^2} \Delta J^n + \inv{8a\sigma^3\sqrt T}  \sum_{0<s\leq T}\Delta J_s^2  \r|}\overset{\proba}{\rightarrow} 0.
\eea
By a similar argument, we also have for $k\neq l $ that $\frac{\partial \omega^{N_n(\tau_k),N_n(\tau_l)}}{\partial \sigma^2} \overset{\proba}{\rightarrow} 0$ exponentially so that we have $\inv{\sqrt N_n} \sum_{1 \leq j < i \leq N_n}{\frac{\partial \omega^{i,j}}{\partial \sigma^2} \Delta J_{i}^n\Delta J_{j}^n} \overset{\proba}{\rightarrow} 0$ uniformly. As for $\inv{\sqrt N_n}\l(\Delta J^n\r)^T \frac{\partial \Omega^{-1}}{\partial \sigma^2} \l\{\Delta \tilde{X}^n + \Delta \epsilon^n\r\}$, on the one hand the same computation yields that the leading term of $\inv{\sqrt N_n}\l(\Delta J^n\r)^T \frac{\partial \Omega^{-1}}{\partial \sigma^2} \Delta \tilde{X}^n$ is
\beas 
 \sum_{k=1}^{N^J} \underbrace{\inv{\sqrt N_n}\frac{\partial \omega^{N_n(\tau_k),N_n(\tau_k)}}{\partial \sigma^2}}_{O_\proba(1)} \Delta J_{\tau_k} \underbrace{\Delta \tilde{X}_{\ti{i_k}} }_{o_\proba(1)},
\eeas 
where 
$ \ti{i_k} \leq \tau_k \leq \ti{i_k+1}$, so that as the sum is finite the expression is negligible. On the other hand, we also have by Lemma \ref{lemmaTransfo} that the leading term of the noise part is 
\beas 
 -\sum_{k=1}^{N^J} \underbrace{\inv{\sqrt N_n}\frac{\partial \dot{\omega}^{N_n(\tau_k),N_n(\tau_k)}}{\partial \sigma^2}}_{O_\proba(N_n^{-1/2})} \Delta J_{\tau_k} \underbrace{\epsilon_{\ti{i_k}} }_{O_\proba(1)}
\eeas 
since $\frac{\partial \dot{\omega}^{i,j}}{\partial \sigma^2} = O_\proba(1)$ by direct calculation. Finally, similar reasoning shows that the second component of $A_n$ is negligible because of the scaling in $N_n^{-1}$ instead of $N_n^{-1/2}$, and we are done.  
\end{proof} 
Now we turn to the consistency of the QMLE. 
\begin{theorem*} (consistency).
If $\widehat{\xi}_n = (\widehat{\sigma}_n^2,\widehat{a}_n^2)$ is the QMLE, we have 

\bea 
\widehat{\xi}_n \overset{\proba}{\rightarrow} \xi_0:= \l(\overline{\sigma}_0^2, a_0^2\r), 
\label{eqConsistency}
\eea 
where we recall that $T\overline{\sigma}_0^2 = \int_0^T{\sigma_s^2ds} + \sum_{0<s\leq T}\Delta J_s^2$.
\label{thmConsistencyH1}
\end{theorem*}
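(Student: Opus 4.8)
The plan is to obtain (\ref{eqConsistency}) by the standard estimating-equation (Z-estimator) argument, using the uniform score convergence of Lemma \ref{lemmaUnifConvScore} as the only substantive analytic input and reducing everything else to a soft compactness argument together with one algebraic computation. Since $\widehat{\xi}_n$ maximizes $l_n$ over the compact set $\Xi$ and $\frac{\partial l_n}{\partial \xi} = -\Phi_n \Psi_n$ with $\Phi_n$ invertible, at any interior maximizer we have $\Psi_n(\widehat{\xi}_n)=0$. Modulo the interiority point discussed below, $\widehat{\xi}_n$ is thus an exact zero of $\Psi_n$, while Lemma \ref{lemmaUnifConvScore} supplies $\sup_{\xi\in\Xi}|\Psi_n(\xi)-\Psi_\infty(\xi)|\overset{\proba}{\rightarrow}0$ with $\Psi_\infty$ continuous. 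It then remains to identify the zero set of $\Psi_\infty$ and to turn ``zeros of $\Psi_n$ are near zeros of $\Psi_\infty$'' into the stated convergence.

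First I would carry out the algebraic step of locating the zero of $\Psi_\infty$. Its second coordinate $\frac{1}{2a^4}(a^2-a_0^2)$ vanishes (as $a>0$ on $\Xi$) exactly when $a^2=a_0^2$. Substituting $a^2=a_0^2$ into the first coordinate annihilates its second summand and leaves $-\frac{T}{8a_0\sigma^3\sqrt{T}}\left(\overline{\sigma}_0^2-\sigma^2\right)$, where I use $T\overline{\sigma}_0^2=\int_0^T\sigma_s^2 ds+\sum_{0<s\leq T}\Delta J_s^2$; this vanishes exactly when $\sigma^2=\overline{\sigma}_0^2$. Hence $\Psi_\infty$ admits the single zero $\xi_0=(\overline{\sigma}_0^2,a_0^2)$ on $\Xi$, which is precisely the asserted limit.

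With this in hand I would conclude by a well-separation plus contradiction argument. Since $\Psi_\infty$ is continuous on the compact set $\Xi$ with unique zero $\xi_0$, for every $\epsilon>0$ the quantity $\delta(\epsilon):=\inf\{|\Psi_\infty(\xi)| : \xi\in\Xi,\ |\xi-\xi_0|\geq\epsilon\}$ is strictly positive. On the event $\{|\widehat{\xi}_n-\xi_0|\geq\epsilon\}$ one has $|\Psi_\infty(\widehat{\xi}_n)|\geq\delta(\epsilon)$, whereas $\Psi_n(\widehat{\xi}_n)=0$ together with the uniform convergence gives $|\Psi_\infty(\widehat{\xi}_n)|=|\Psi_\infty(\widehat{\xi}_n)-\Psi_n(\widehat{\xi}_n)|\leq\sup_{\xi\in\Xi}|\Psi_n-\Psi_\infty|\overset{\proba}{\rightarrow}0$. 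Combining the two bounds forces $P(|\widehat{\xi}_n-\xi_0|\geq\epsilon)\to 0$, which is (\ref{eqConsistency}).

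The one genuinely delicate point, and the main obstacle in this otherwise soft argument, is the interiority of the maximizer, i.e. justifying $\Psi_n(\widehat{\xi}_n)=0$. I would handle it by noting that $\xi_0$ lies in the interior of $\Xi$ under the setup of Section \ref{simplification} (so that $\overline{\sigma}_0^2\in(\underline{\Sigma},\overline{\Sigma})$ and $a_0^2\in(\underline{a}^2,\overline{a}^2)$), whence $|\Psi_\infty|$ is bounded below by a positive constant on $\partial\Xi$; by the uniform convergence the gradient $-\Phi_n\Psi_n$ of $l_n$ then points strictly inward near $\partial\Xi$ with probability tending to one, so the maximizer falls in the interior eventually. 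Alternatively, I would invoke the consistency theorem for estimators solving the estimating equation only approximately, which requires merely $\Psi_n(\widehat{\xi}_n)\overset{\proba}{\rightarrow}0$; this approximate identity holds up to a boundary contribution that is controlled in the same way. Either route keeps the argument short, since the heavy lifting, the uniform control of the randomly-spaced, jump-perturbed score, has already been done in Lemma \ref{lemmaUnifConvScore}.
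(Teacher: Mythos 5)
Your proposal is correct and follows essentially the same route as the paper: it combines the uniform score convergence of Lemma \ref{lemmaUnifConvScore} with the identification of $\xi_0$ as the unique zero of $\Psi_\infty$ (first forcing $a^2=a_0^2$ from the second coordinate, then $\sigma^2=\overline{\sigma}_0^2$ from the first) and concludes by the classical well-separation argument, which the paper delegates to Theorem 5.9 of \cite{van2000asymptotic}. Your additional discussion of the interiority of the maximizer merely fills in a detail that the cited theorem absorbs by requiring only $\Psi_n(\widehat{\xi}_n)=o_\proba(1)$.
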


\begin{proof} 
To get (\ref{eqConsistency}), it is sufficient to have 
\bea 
\sup_{\xi \in \Xi} \l|\Psi_n(\xi)-\Psi_\infty(\xi)\r| \overset{\proba}{\rightarrow} 0,
\eea 
which has been proven in Lemma \ref{lemmaUnifConvScore}, and for any $\epsilon >0$
\bea 
\inf_{\xi\in\Xi:\|\xi-\xi_0\|\geq\epsilon} \| \Psi_\infty(\xi)\|^2>0 = \| \Psi_\infty(\xi_0 )\|^2  \textnormal{  }\proba\textnormal{-a.s},
\label{conditionIdentifiability}
\eea 
by a classical statistical argument (see e.g. \cite{van2000asymptotic}, Theorem 5.9). Given the form of $\Psi_\infty$, the equality $\Psi_\infty(\xi_0) = 0$ is immediate. Note also that the left hand side inequality of (\ref{conditionIdentifiability}) will be automatically satisfied if we show that $\| \Psi_\infty(\xi) \|^2 > 0$ as soon as $\xi \neq \xi_0$ by a continuity argument since $\Xi$ is compact. Let us then take $\xi \in \Xi - \{\xi_0\}$ such that $\Psi_\infty(\xi ) = 0$, and assume first that $a^2 \neq a_0^2$. In that case, we have  
$$ 0 = \| \Psi_\infty(\xi ) \|^2 \geq \inv{4a^8}\l(a^2-a_0^2\r)^2,$$
which leads to a contradiction. Similarly, the first component of $\Psi_\infty$ leads to the domination 
$$ 0 = \| \Psi_\infty(\sigma^2,a_0^2 ) \|^2 \geq \frac{T}{64a_0^2\sigma^6 }\l(\overline{\sigma}_0^2-\sigma^2\r)^2,$$
so that we can conclude $\sigma^2 = \overline{\sigma}_0^2$.

\end{proof} 
We now turn to the convergence of the Fisher information related to our likelihood field. Let $H_n$ and $\bar{H}_n$ be the scaled Hessian matrices of the likelihood fields, defined for any $\xi \in \Xi$ as  

\bea 
H_n(\xi) = -\Phi_n^{-1/2} \frac{\partial^2 l_n(\xi)}{\partial \xi^2} \Phi_n^{-1/2} \textnormal{ and } \overline{H}_n(\xi) = -\Phi_n^{-1/2} \frac{\partial^2 \bar{l}_n(\xi)}{\partial \xi^2} \Phi_n^{-1/2}.
\label{defFisherMatrix}
\eea 

\begin{lemma*} (Asymptotic Fisher information) \label{lemmaFisherConsistency}
Let $\Gamma(\xi_0)$ be the matrix 
\bea 
\Gamma(\xi_0) = \l( \begin{matrix}  \frac{\sqrt T}{8a_0 \overline{\sigma}_0^3}  & 0\\ 
0 & \inv{2a_0^4} \end{matrix} \r).
\eea 
We have, for any ball $V_n$ centred on $\xi_0$, shrinking to $\{\xi_0\}$,
\bea 
\sup_{\xi_n \in V_n} \l\| H_n(\xi_n) - \Gamma(\xi_0) \r\| \overset{\proba}{\rightarrow} 0.
\eea 

\end{lemma*}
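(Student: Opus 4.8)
The plan is to follow the two-step scheme already used for the score in Lemma~\ref{lemmaUnifConvScore}: first replace the genuine Hessian by the deterministic proxy built from $\bar{l}_n$, and then identify the limit of that proxy. Set $R_n^H := H_n - \overline{H}_n$. Besides the deterministic $\log\det$ contribution, each entry of $\partial^2 l_n/\partial\xi^2$ carries a random quadratic form $-\half\, Y^T \frac{\partial^2 \Omega^{-1}}{\partial\xi^2} Y$, whereas the matching entry of $\partial^2 \bar{l}_n/\partial\xi^2$ carries instead $-\half\,\mathrm{Tr}\big(\frac{\partial^2\Omega^{-1}}{\partial\xi^2}\{\Sigma_0^c+\Sigma_0^d\}\big)$. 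Consequently, after conjugation by $\Phi_n^{-1/2}=\mathrm{diag}(N_n^{-1/4},N_n^{-1/2})$, $R_n^H$ is a centred quadratic form in $Y$ of exactly the same type as the score remainder $R_n$. I would therefore establish $\sup_{\xi\in\Xi}\|R_n^H(\xi)\|\overset{\proba}{\rightarrow}0$ by the same adaptation of Lemmas~1--2 of \cite{xiu2010quasi} to an irregular grid, the two inputs being $\pi_T^n\overset{\proba}{\rightarrow}0$ from (\ref{eqStepsize}) and the conditional moment bounds of Lemma~\ref{lemmaEstimateX}, which control the variance of the quadratic form uniformly in $\xi\in\Xi$.

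It then remains to prove $\sup_{\xi\in\Xi}\|\overline{H}_n(\xi)-\Gamma(\xi)\|\overset{\proba}{\rightarrow}0$ for a continuous matrix-valued map $\Gamma$ whose value at $\xi_0$ is the matrix in the statement. Writing out the scaled second derivatives of $\bar{l}_n$ in terms of the explicit coefficients $\omega^{i,j}$ from (28) on p.~245 of \cite{xiu2010quasi}, each diagonal entry becomes a Riemann-type sum over the $N_n$ grid cells that converges by the same computation as Xiu's equations~(38)--(40), now read on an irregular grid since $\pi_T^n\overset{\proba}{\rightarrow}0$. The asymmetric scaling is what produces the diagonal limit: the mixed partial $\partial^2 l_n/\partial\sigma^2\partial a^2$ is of order $O_\proba(N_n^{1/2})$, so the off-diagonal entry of $\overline{H}_n$ carries the factor $N_n^{-3/4}$ and vanishes, matching the zero off-diagonal of $\Gamma(\xi_0)$. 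Regarding the jump part, $\Sigma_0^d$ contributes only the finite sum $\sum_k \frac{\partial^2\omega^{N_n(\tau_k),N_n(\tau_k)}}{\partial\xi^2}\Delta J_{\tau_k}^2$; since there are a.s.\ finitely many jumps while the $\Sigma_0^c$ contribution is a full sum of order $N_n^{1/2}$, this jump term is negligible after scaling, exactly as the analogous jump term was shown negligible in the proof of Lemma~\ref{lemmaUnifConvScore}.

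Finally, I would combine the pieces: for $\xi_n\in V_n$ write $\|H_n(\xi_n)-\Gamma(\xi_0)\|\le \|R_n^H(\xi_n)\|+\|\overline{H}_n(\xi_n)-\Gamma(\xi_n)\|+\|\Gamma(\xi_n)-\Gamma(\xi_0)\|$; the first two terms are bounded by the suprema over $\Xi$ just shown to be $o_\proba(1)$, and the last vanishes by continuity of $\Gamma$ together with $V_n\downarrow\{\xi_0\}$. The \emph{main obstacle} I anticipate is the uniform control of the remainder $R_n^H$: unlike the score, the Hessian involves $\partial^2\Omega^{-1}/\partial\xi^2$, whose entries must be shown to obey the appropriate $O_\proba$ bounds after scaling, and one must verify that the concentration of the associated quadratic form in $Y$ holds uniformly over $\Xi$ on the irregular grid. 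Once those bounds are in place, the identification of the limit $\Gamma(\xi_0)$ reduces to a bookkeeping adaptation of the computations in \cite{xiu2010quasi}.
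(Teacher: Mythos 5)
Your overall architecture is the same as the paper's: split $H_n$ into $\overline{H}_n$ plus a remainder controlled by adapting Lemmas 1--2 of \cite{xiu2010quasi} with second-order derivatives of $\Omega^{-1}$ (using $\pi_T^n \overset{\proba}{\rightarrow} 0$), then identify the uniform limit of $\overline{H}_n$ by adapting equations (38)--(41) of \cite{xiu2010quasi}, and finish by continuity on the shrinking ball $V_n$. That part is fine.

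There is, however, a genuine error in your treatment of the jump part, and it also rests on a misreading of Lemma \ref{lemmaUnifConvScore}. In that lemma the term $A_n(\xi)$ coming from $\Sigma_0^d$ is \emph{not} shown to be negligible: it converges to $-\frac{1}{8a\sigma^3\sqrt{T}}\sum_{0<s\leq T}\Delta J_s^2$, which is precisely why $\overline{\sigma}_0^2$ (and not $T^{-1}\int_0^T \sigma_s^2\,ds$) appears in $\Psi_\infty$. The same mechanism operates here. Your heuristic "finitely many jump terms versus a full sum of order $N_n^{1/2}$" ignores the sizes of the summands: each cell of $\Sigma_0^c$ contributes $\frac{\partial^2 \omega^{i,i}}{\partial(\sigma^2)^2}\int_{t_{i-1}^n}^{t_i^n}\sigma_s^2\,ds = O_\proba(N_n^{1/2}\cdot n^{-1})$, so the $N_n$ of them sum to $O_\proba(N_n^{1/2})$, while a single jump cell contributes $\frac{\partial^2 \omega^{N_n(\tau_k),N_n(\tau_k)}}{\partial(\sigma^2)^2}\Delta J_{\tau_k}^2 = O_\proba(N_n^{1/2})$ on its own, since $\Delta J_{\tau_k}^2 = O(1)$. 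After the $N_n^{-1/2}$ scaling of the $(\sigma^2,\sigma^2)$ entry, the jump part therefore converges to a nonzero constant times $\sum_{0<s\leq T}\Delta J_s^2$; in the paper's notation it is exactly what turns the correction term of $\overline{H}_\infty(\xi)$ into $\frac{3T(\overline{\sigma}_0^2-\sigma^2)}{16a\sigma^5\sqrt{T}}$ with $T\overline{\sigma}_0^2 = \int_0^T\sigma_s^2\,ds + \sum_{0<s\leq T}\Delta J_s^2$. Consequently your step 2 fails as stated: with a jump-free $\Gamma$, the uniform convergence $\sup_{\xi\in\Xi}\|\overline{H}_n(\xi)-\Gamma(\xi)\|\overset{\proba}{\rightarrow}0$ is false, and the limit you would actually obtain at $\xi_0$ would be $\frac{\sqrt{T}}{8a_0\overline{\sigma}_0^3} - \frac{3}{16a_0\overline{\sigma}_0^5\sqrt{T}}\sum_{0<s\leq T}\Delta J_s^2$ in the first entry, not the stated $\Gamma(\xi_0)$. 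The fix is simply to keep the jump contribution in the limit, which produces the paper's $\overline{H}_\infty(\xi)$; the correction term then vanishes at $\xi=\xi_0$ because $\sigma^2 = \overline{\sigma}_0^2$ there, and the rest of your argument (uniform convergence plus continuity on $V_n$) goes through unchanged.
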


\begin{proof}
First note that a small adaptation of Lemma 1-2 with second order derivatives of $\Omega^{-1}$ from \cite{xiu2010quasi} yields $\sup_{\xi \in \Xi }\l\{H_n(\xi) - \overline{H}_n(\xi) \r\} \overset{\proba}{\rightarrow} 0$ since $\pi_T^n \overset{\proba}{\rightarrow} 0$. Now, \cite{xiu2010quasi}, bottom of p. 247 and after equation (41) on p. 248, can be easily adapted to our case replacing $\int_0^T{\sigma_s^2ds}$ by $T\overline{\sigma}_0^2$ as in the previous lemma, so that we have
\bea 
\sup_{\xi \in \Xi} \l\{\overline{H}_n(\xi) -  \overline{H}_\infty(\xi)\r\} \overset{\proba}{\rightarrow} 0,
\eea 
with 
\bea 
\overline{H}_\infty(\xi) := \l( \begin{matrix}  \frac{\sqrt T}{8a\sigma^3} + \frac{(a^2-a_0^2)\sqrt T}{16a^3\sigma^3} +\frac{3T(\overline{\sigma}_0^2-\sigma^2)}{16a\sigma^5\sqrt T} &0 \\ 0&\frac{a_0^2}{a^6}- \inv{2a^4} \end{matrix} \r).
\eea 
It is immediate to check that 
\bea 
\overline{H}_\infty(\xi_0) =  \l( \begin{matrix}  \frac{\sqrt T}{8a_0\overline{\sigma}_0^3}  &0 \\ 0& \inv{2a_0^4}\end{matrix} \r).
\eea 
\end{proof}
We now adopt similar notations to \cite{xiu2010quasi} in the proof of Lemma 3 (p. 248) and define the processes involved in the derivation of the central limit theorem. For $(\beta) \in \{(\sigma^2),(a^2)\}$, and $t \in [0,T]$, we define 
\bea 
M_{1}^{(\beta)}(t) & := & \sum_{i=1}^{N_n(t)}{\frac{ \partial \omega^{i,i}}{\partial \beta}\l\{\l(\Delta X_{i,t}^n\r)^2 - \int_{t_{i-1}^n \wedge t }^{t_{i}^n \wedge t}{\sigma_s^2ds} - \sum_{t_{i-1}^n \wedge t <s \leq t_{i}^n \wedge t} \Delta J_s^2\r\}},
\label{eqM1}\\
 M_{2}^{(\beta)}(t) & := & \sum_{i=1}^{N_n(t)} \l\{\sum_{1 \leq j < i} \frac{ \partial \omega^{i,j}}{\partial \beta} \Delta X_{j,t}^n\r\} \Delta X_{i,t}^n,  
\label{eqM2}\\
 M_{3}^{(\beta)}(t) & := &- 2\sum_{i=0}^{N_n(t)} \l\{\sum_{j = 1}^{N_n(t)} \frac{ \partial \dot{\omega}^{i,j}}{\partial \beta} \Delta X_{j,t}^n\r\} \epsilon_{t_i^n}, 
\label{eqM3}\\
 M_{4}^{(\beta)}(t) & := & \sum_{i=0}^{N_n(t)}{\frac{ \partial \ddot{\omega}^{i,i}}{\partial \beta}\l\{\epsilon_{t_i^n}^2 - a_0^2\r\} } + 2\sum_{i=0}^{N_n(t)} \l\{\sum_{0 \leq j < i} \frac{ \partial \ddot{\omega}^{i,j}}{\partial \beta} \epsilon_{t_j^n}\r\} \epsilon_{t_i^n},
\label{eqM4}
\eea 
where in all the definitions (\ref{eqM1})-(\ref{eqM4}), the terms involving the parameters such as $\Omega^{-1}$, $\dot{\Omega}^{-1}$, $\ddot{\Omega}^{-1}$, $\cdots$ are evaluated at point $\xi := (\sigma^2,a_0^2)$, for some $\sigma^2 \in [\underline{\Sigma}^2,\overline{\Sigma}^2]$. We also define the two-dimensional vectors $M_i(t) := \l(M_{i}^{(\sigma^2)}(t),M_{i}^{(a^2)}(t)\r)$ for $i \in \{1, \cdots ,4\}$. Note that we have the key decomposition 
\beas 
2\Phi_n^{1/2}\l\{\Psi_n(\xi) - \bar{\Psi}_n(\xi)\r\} = \Phi_n^{-1/2} \l\{M_1(T)+2M_2(T)+M_3(T)+M_4(T) \r\}.
\eeas
In the next few lemmas we investigate the limit of each one of those terms. In the presence of jumps and random observation times, we will see that they are not mere extensions of Lemma 3 in \cite{xiu2010quasi} and that additional variance terms appear in the limits. We start by $M_1(T)$. 

\begin{lemma*} \label{lemmaM1}
We have

\beas  
\Phi_n^{-1/2} M_1(T) \overset{\proba}{\rightarrow} 0.  
\eeas  
\end{lemma*}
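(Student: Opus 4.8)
The plan is to write $M_1^{(\beta)}(T)$ as a weighted sum $\sum_{i=1}^{N_n} w_i^{(\beta)} D_i^n$ of the single-increment deviations $D_i^n := (\Delta X_i^n)^2 - \int_{t_{i-1}^n}^{t_i^n}\sigma_s^2 ds - \sum_{t_{i-1}^n < s \leq t_i^n}\Delta J_s^2$, with $\calu$-measurable weights $w_i^{(\beta)} := \partial\omega^{i,i}/\partial\beta$, and to show that each contribution is negligible after scaling by $\Phi_n^{-1/2} = \mathrm{diag}(N_n^{-1/4},N_n^{-1/2})$. Since $J$ has finite activity, for $n$ larger than some a.s.\ finite random $K^J$ every interval $(t_{i-1}^n,t_i^n]$ contains at most one jump, whence $\sum_{t_{i-1}^n < s \leq t_i^n}\Delta J_s^2 = (\Delta J_i^n)^2$ exactly; writing $\Delta X_i^n = \Delta\tilde{X}_i^n + \Delta J_i^n$ and expanding the square yields the decomposition $D_i^n = \zeta_i^n - s_i^n + 2\Delta\tilde{X}_i^n\Delta J_i^n$, where $\zeta_i^n$ is as in the excerpt and $s_i^n := \int_{t_{i-1}^n}^{t_i^n}\sigma_s^2 ds - \sigma_{t_{i-1}^n}^2\Delta t_i^n$. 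First I would record the uniform order of the weights: since $\partial\Omega/\partial\sigma^2 = \widetilde{\Delta} I$ and $\partial\Omega/\partial a^2$ is a discrete Laplacian, the explicit coefficients (28) in \cite{xiu2010quasi} and the computation already carried out in Lemma \ref{lemmaUnifConvScore} give $|w_i^{(\sigma^2)}| \leq L N_n^{1/2}$ and $|w_i^{(a^2)}| \leq L N_n^{1/2}$ uniformly in $i$, including the edge indices. I also use $N_n \asymp n$, which follows from $N_n/n \to T^{-1}\int_0^T\alpha_s^{-1}ds \in (0,\infty)$.

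For the $\zeta$-contribution I would split $\zeta_i^n = (\zeta_i^n - \bar{\zeta}_i^n) + \bar{\zeta}_i^n$. The centred terms are $\calg^n$-martingale differences and the weights are $\calu$-measurable, so by conditional orthogonality $\esp_\calu\big[(\sum_i w_i^{(\beta)}(\zeta_i^n - \bar{\zeta}_i^n))^2\big] = \sum_i (w_i^{(\beta)})^2\esp_\calu[(\zeta_i^n - \bar{\zeta}_i^n)^2]$, and by the conditional-variance inequality together with (\ref{eqZeta2}) one has $\esp_\calu[(\zeta_i^n - \bar{\zeta}_i^n)^2] \leq \esp_\calu[(\zeta_i^n)^2] \leq L n^{-2}(U_i^n)^2$; combining the weight bound with the law of large numbers $\sum_{i=1}^{N_n}(U_i^n)^2 = O_\proba(N_n)$ (granted by $m_{2,\infty}<\infty$) gives $O_\proba(N_n\cdot n^{-2}\cdot N_n)=O_\proba(1)$. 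The conditional-mean terms are controlled directly by (\ref{eqZetaTilde}): $|\sum_i w_i^{(\beta)}\bar{\zeta}_i^n| \leq L N_n^{1/2}n^{-3/2}\sum_i(U_i^n)^{3/2} = O_\proba(1)$. The $s$-contribution is handled the same way through (\ref{eqDevSigma}) with $p=1$, so $\esp_\calu|\sum_i w_i^{(\beta)} s_i^n| = O_\proba(1)$. Finally, the jump cross-term involves only the finitely many indices $i = N_n(\tau_k)$, $k=1,\dots,N^J$; for each of these $w_i^{(\beta)}\cdot 2\Delta\tilde{X}_i^n\Delta J_{\tau_k} = O(N_n^{1/2})\cdot O_\proba(n^{-1/2})\cdot O(1) = O_\proba(1)$ by (\ref{eqDeltaX}), so the finite sum is $O_\proba(1)$. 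Assembling the four pieces gives $M_1^{(\sigma^2)}(T) = O_\proba(1)$ and $M_1^{(a^2)}(T) = O_\proba(1)$, hence $N_n^{-1/4}M_1^{(\sigma^2)}(T) \overset{\proba}{\rightarrow} 0$ and $N_n^{-1/2}M_1^{(a^2)}(T) \overset{\proba}{\rightarrow} 0$, that is $\Phi_n^{-1/2}M_1(T) \overset{\proba}{\rightarrow} 0$.

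I expect the main obstacle to be the uniform control of the weights $w_i^{(\beta)}$ across all $i$: the inverse $\Omega^{-1}$ is only asymptotically Toeplitz and its rows behave differently near the two edges, so the clean bulk asymptotics used in Lemma \ref{lemmaUnifConvScore} must be supplemented by edge estimates to secure the uniform bound $L N_n^{1/2}$. The other delicate point is the interplay between the conditional (given $\calu$) moment estimates for the efficient price and the almost-sure control of the random grid sums $\sum_i(U_i^n)^q$, which is where the moment assumption $m_{q,\infty}<\infty$ and the finite-activity threshold $K^J$ enter; once these are in place the scaling bookkeeping is routine.
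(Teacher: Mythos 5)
Your proof is correct and follows essentially the same route as the paper: uniform $O_\proba(N_n^{1/2})$ control of the diagonal weights $\partial\omega^{i,i}/\partial\beta$, a martingale-orthogonality second-moment bound for the continuous quadratic-variation deviations, and the finite-activity argument reducing the jump contribution to finitely many cross terms $\partial\omega^{i,i}/\partial\beta\cdot\Delta J_{\tau_k}\Delta\tilde{X}^n_{i_k}$, each $O_\proba(1)$ before the $N_n^{-1/4}$ scaling. The only cosmetic difference is that the paper keeps $(\Delta \tilde{X}_i^n)^2-\int_{t_{i-1}^n}^{t_i^n}\sigma_s^2\,ds$ together as an exact $\calg_i^n$-martingale difference and invokes Lemma 2.2.11 of \cite{jacod2011discretization}, whereas you split off the $\sigma_{t_{i-1}^n}^2\Delta t_i^n$ approximation and control $\zeta_i^n$, $\bar{\zeta}_i^n$ and $s_i^n$ separately via the estimates of Lemma \ref{lemmaEstimateX} — both are valid and rest on the same ingredients.
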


\begin{proof}
We have to show $N_n^{-1/4} M_1^{(\sigma^2)}(T) \overset{\proba}{\rightarrow} 0$ and $N_n^{-1/2} M_1^{(a^2)}(T) \overset{\proba}{\rightarrow} 0$. We start with the case where $J=0$. We are going to show that for any $(\beta) \in \{(\sigma^2),(a^2)\}$ we actually have $ N_n^{-1/4} M_1^{(\beta)}(T) \overset{\proba}{\rightarrow} 0$. To do so, note that we can write 
\beas  
M_1^{(\beta)}(T) = \sum_{i=1}^{N_n}{\chi_i^n}, 
\eeas  
where 
\beas 
\chi_i^n = \frac{ \partial \omega^{i,i}}{\partial \beta} \l\{\l(\Delta X_{i}^n\r)^2 - \int_{t_{i-1}^n }^{t_{i}^n}{\sigma_s^2ds} \r\}. 
\eeas 
Now, since $\frac{ \partial \omega^{i,i}}{\partial \beta} \in \calu \subset \calg_{i-1}^n$, $\chi_i^n \in  \calg_{i}^n$. Moreover, $\esp \l[\chi_i^n | \calg_{i-1}^n \r] = 0$, thus by Lemma 2.2.11 in \cite{jacod2011discretization}, it is sufficient to show that $N_n^{-1/2}\sum_{i=1}^{N_n}{\esp \l[(\chi_i^n)^2 | \calg_{i-1}^n \r]}\overset{\proba}{\rightarrow}0$. By Burkholder-Davis-Gundy inequality, we have 
\beas  
N_n^{-1/2}\sum_{i=1}^{N_n}\esp \l[(\chi_i^n)^2 | \calg_{i-1}^n \r]  &\leq& 4N_n^{-1/2}\sum_{i=1}^{N_n}\l(\frac{ \partial \omega^{i,i}}{\partial \beta}\r)^2 \int_{\ti{i-1} }^{\ti{i} }{\esp\l[\l.\l(\Delta X_{i,s}^n\r)^2\sigma_s^2 \r|\calg_{i-1}^n\r]ds}\\
&\leq& K N_n^{1/2}n^{-1+\gamma}\sum_{i=1}^{N_n} (\ti{i}  - \ti{i-1} )\\
&\leq& K N_n^{1/2}n^{-1+\gamma} \to^{\proba} 0,
\eeas
where we have used the fact that $\frac{ \partial \omega^{i,i}}{\partial \beta} = O_\proba(N_n^{1/2})$ uniformly in $i$. In the presence of jumps, it remains to show that the additional terms $$N_n^{-1/4}\sum_{i=1}^{N_n}\frac{ \partial \omega^{i,i}}{\partial \beta}\l\{ \l(\Delta J_{i}^n\r)^2 - \sum_{t_{i-1}^n < s \leq t_{i}^n} \Delta J_s^2\r\}$$ 
and $$2N_n^{-1/4}\sum_{i=1}^{N_n}\frac{ \partial \omega^{i,i}}{\partial \beta} \Delta J_{i}^n\Delta \tilde{X}_{i}^n$$ 
are negligible. From the finite activity property, note that the first one is identically $0$ for $n$ sufficiently large. Again, for $n$ sufficiently large, defining $N^J$ the finite number of jumps of $J$ on $[0,T]$, we can write the second term as 
\beas 
2N_n^{-1/4}\sum_{k=1}^{N^J}\underbrace{\frac{ \partial \omega^{N_n(\tau_k),N_n(\tau_k)}}{\partial \beta}}_{O_\proba(N_n^{1/2})} \Delta J_{\tau_k} \underbrace{\Delta \tilde{X}_{i_k}^n}_{O_\proba(n^{-1/2+1/2\gamma})} \overset{\proba}{\rightarrow} 0.
\eeas 
where $i_k$ is such that $\ti{i_k} \leq \tau_k \leq \ti{i_k+1}$, and where we have used \textbf{(H)}. This concludes the proof.

\end{proof}

\begin{lemma*} \label{lemmaM2}
We have $\calg_T$-stably in law that 

\beas  
N_n^{-1/4} M_{2}^{(\sigma^2)}(T) \to \calm\caln\l( 0, \frac{5}{64T^{3/2} \sigma^7 a_0}\int_0^T{\alpha_s^{-1}ds}\l\{\int_0^T{\sigma_s^4 \alpha_sds} + \sum_{0 \leq s \leq T} \Delta J_s^2 (\sigma_s^2\alpha_s + \sigma_{s-}^2\alpha_{s-}) \r\}\r) 
\eeas  
and
\beas  
N_n^{-1/2} M_{2}^{(a^2)}(T) \overset{\proba}{\rightarrow} 0. 
\eeas  
\end{lemma*}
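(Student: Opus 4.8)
The plan is to recognize $N_n^{-1/4}M_2^{(\sigma^2)}(T)$ as a normalized sum of martingale differences and to apply a stable central limit theorem for triangular arrays (see, e.g., \cite{jacod2011discretization}, Section 2.2). Writing $\zeta_i^n := N_n^{-1/4}\l\{\sum_{1\leq j<i}\frac{\partial\omega^{i,j}}{\partial\sigma^2}\Delta X_{j}^n\r\}\Delta X_{i}^n$, the inner bracket is $\calg_{i-1}^n$-measurable, because it involves only increments strictly before $\ti{i-1}$ and the coefficients satisfy $\frac{\partial\omega^{i,j}}{\partial\sigma^2}\in\calu$, while $\Delta X_{i}^n$ has vanishing $\calg_{i-1}^n$-conditional mean under \textbf{(H)}. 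Hence $\esp[\zeta_i^n|\calg_{i-1}^n]=0$ and $\sum_i\zeta_i^n$ is a martingale. Establishing the three ingredients of the stable CLT --- convergence of the predictable quadratic variation to the announced random limit, a Lyapunov-type negligibility of conditional fourth moments, and asymptotic orthogonality to $W$ and $\tilde{W}$ --- will yield the $\calg_T$-stable mixed normal convergence.

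The core computation is the predictable quadratic variation $\sum_i\esp[(\zeta_i^n)^2|\calg_{i-1}^n]$. Expanding the square, the off-diagonal products $\Delta X_j^n\Delta X_k^n$ with $j\neq k$ contribute nothing to the conditional expectation, so only the diagonal $j=k$ survives and one is left with $N_n^{-1/2}\sum_i v_i^n\sum_{j<i}\l(\frac{\partial\omega^{i,j}}{\partial\sigma^2}\r)^2(\Delta X_j^n)^2$, where $v_i^n=\esp[(\Delta X_i^n)^2|\calg_{i-1}^n]$. From the explicit inverse of the tridiagonal matrix $\Omega$ (as in (28), p. 245 of \cite{xiu2010quasi}), the coefficients $\frac{\partial\omega^{i,j}}{\partial\sigma^2}$ decay geometrically in $|i-j|$, which localizes the inner sum near $j\approx i$ and lets me replace $\sigma_{\ti{j-1}}$ by $\sigma_{\ti{i-1}}$ and $\Delta\ti{j}$ by $\Delta\ti{i}$. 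Summing the squared coefficients over the lag produces the explicit scalar carrying the factor $\frac{5}{64}$ and the powers $\sigma^{-7}a_0^{-1}$; a Riemann-sum argument together with $N_n/n\to T^{-1}\int_0^T\alpha_s^{-1}ds$ and $\Delta\ti{i}\approx\Delta\alpha_{\ti{i-1}}$ then converts the continuous part into $\frac{5}{64T^{3/2}\sigma^7 a_0}\int_0^T\alpha_s^{-1}ds\int_0^T\sigma_s^4\alpha_s ds$.

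For the two remaining conditions I would invoke the moment bounds of Lemma \ref{lemmaEstimateX} together with the same geometric decay: the conditional fourth moments are of order $N_n^{-1}\sum_i(\cdots)\to 0$, and the bracket of $\sum_i\zeta_i^n$ with $W$ (resp. $\tilde{W}$) vanishes because $\zeta_i^n$ pairs $\Delta X_i^n$ with a mean-zero past-measurable weight, so $\esp[\zeta_i^n\Delta W_i^n|\calg_{i-1}^n]$ sums to an $o_\proba(1)$ term by the $N_n^{-1/4}$ scaling and the fluctuation of the weights. The $M_2^{(a^2)}$ statement follows along identical lines: the coefficients $\frac{\partial\omega^{i,j}}{\partial a^2}$ are of smaller order than their $\sigma^2$ counterparts, so after the $N_n^{-1/2}$ normalization the predictable quadratic variation is $o_\proba(1)$ and the limit is degenerate.

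The step I expect to be the main obstacle is the jump contribution to the predictable quadratic variation. A jump $\Delta J_{\tau_k}$ falls in a single interval $[\ti{i_k-1},\ti{i_k}]$ and enters the $M_2$ sum twice: once as the terminal increment $i=i_k$ paired with earlier $j$, and once as an earlier increment $j=i_k$ paired with later $i$. Using that the continuous part $\Delta\tilde{X}^n_{i_k}$ is negligible (as exploited in Lemma \ref{lemmaUnifConvScore}), each occurrence couples $\Delta J_{\tau_k}^2$ with the local continuous volatility and spacing on the corresponding side of $\tau_k$, which is exactly what produces the symmetric term $\Delta J_s^2(\sigma_s^2\alpha_s+\sigma_{s-}^2\alpha_{s-})$. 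Verifying that these jump--continuous cross terms concentrate correctly, and that distinct jumps do not interact (again by the geometric decay of the coefficients), is the delicate point.
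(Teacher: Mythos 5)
Your overall strategy coincides with the paper's: both treat $M_2^{(\sigma^2)}$ as a martingale, prove a stable CLT by computing the bracket and checking orthogonality to $W$ (and to bounded martingales orthogonal to $W$), and both obtain the symmetric jump term from the two one-sided couplings of $\Delta J_{\tau_k}$ with the continuous increments before and after the jump. However, there are two concrete gaps. First, your claim that in the predictable quadratic variation ``the off-diagonal products $\Delta X_j^n\Delta X_k^n$ with $j\neq k$ contribute nothing to the conditional expectation'' is false: the inner sum $A_i^n=\sum_{j<i}\frac{\partial \omega^{i,j}}{\partial \sigma^2}\Delta X_j^n$ is entirely $\calg_{i-1}^n$-measurable, so $\esp[(\zeta_i^n)^2|\calg_{i-1}^n]=N_n^{-1/2}(A_i^n)^2\,\esp[(\Delta X_i^n)^2|\calg_{i-1}^n]$ retains every cross term $j\neq k$. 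These terms are indeed asymptotically negligible, but establishing this requires a dedicated second-moment estimate (the paper's $L_n^{(2)}$ and $P_n^{(2)}$ analysis), preceded by a step that truncates the inner sum to a window of length $L_n$ and freezes the volatility, replacing $\Delta X_j^n$ by $\sigma_{\ti{i-L_n-1}}\Delta W_{j,t}^n$, so that the surviving diagonal part converges in probability rather than merely in expectation. Your proposal skips this entirely, and it is where most of the technical work of the lemma lies.

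Second, the jump term cannot be read off as a ``jump contribution to the predictable quadratic variation'' of a single discrete martingale array: on the interval containing $\tau_k$, the conditional variance $\esp[(\Delta X_{i_k}^n)^2|\calg_{i_k-1}^n]$ involves the conditional law of the jump, not the realized value $\Delta J_{\tau_k}^2$ that appears in the stated limit. The paper resolves this by enlarging the filtration with the jump times of the driving Poisson measure, writing the jump part as $\sum_p \Delta J_{\tau_p}(M_n^+(T,p)+M_n^-(T,p))$ where $M_n^{\pm}$ are continuous martingales built from the increments on either side of $\tau_p$, proving joint $\calg_T$-stable convergence of $(\tilde{M}_2^{(\sigma^2)},(M_n^{\pm}(\cdot,p))_{p})$ with asymptotically vanishing cross-brackets, and only then multiplying by the $\calg_T$-measurable $\Delta J_{\tau_p}$. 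You correctly flagged this as the delicate point and correctly located the source of the $\sigma_s^2\alpha_s+\sigma_{s-}^2\alpha_{s-}$ symmetry, but the mechanism that makes the realized jump sizes appear in the mixed normal variance is missing from your argument.
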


\begin{proof}
As usual, we start by the case with no jumps, that is $J=0$. We show the result for $M_2^{(\sigma^2)}$. The proof is conducted in three steps. 

\textbf{Step 1.} We consider $\beta_{i,k,t}^n := \sigma_{\ti{k}} \Delta W_{i,t}^n$, and we define $\tilde{M}_2^{(\sigma^2)}$ as 
\bea 
\tilde{M}_2^{(\sigma^2)}(t) := \sum_{i=1}^{N_n} \l\{\sum_{(i- L_n) \wedge 1 \leq j < i} \frac{ \partial \omega^{i,j}}{\partial \sigma^2} \beta_{j,i-L_n-1,t}^n\r\} \beta_{i,i-L_n-1,t}^n,  
\label{eqM2tilde}
\eea 
that is when the increments are replaced by variables of the form $\sigma_{\ti{i-L_n-1}} \Delta W_{j,t}^n$, where $\sigma_{\ti{i-L_n-1}}$ is the value of the volatility process at the beginning of the truncated sum. We show that we have $N_n^{-1/4}\l\{M_2^{(\sigma^2)}(T) -\tilde{M}_2^{(\sigma^2)}(T) \r\} \overset{\proba}{\rightarrow} 0$. We decompose 
\bea 
N_n^{-1/4}\l\{M_2^{(\sigma^2)} -\tilde{M}_2^{(\sigma^2)} \r\} = R_n^{(1)}+R_n^{(2)}+R_n^{(3)},
\eea 
with 
\bea 
R_n^{(1)} & = & N_n^{-1/4} \sum_{i=1}^{N_n}\sum_{1 \leq j < i - L_n}\frac{ \partial \omega^{i,j}}{\partial \sigma^2} \Delta X_{j,t}^n \Delta X_{i,t}^n,
\label{eqR1}\\
R_n^{(2)} & = & N_n^{-1/4} \sum_{i=1}^{N_n}\sum_{(i- L_n) \wedge 1 \leq j < i }\frac{ \partial \omega^{i,j}}{\partial \sigma^2} \Delta X_{j,t}^n (\Delta X_{i,t}^n - \beta_{i,i-L_n-1,t}^n),
\label{eqR2}\\
R_n^{(3)} & = & N_n^{-1/4} \sum_{i=1}^{N_n}\sum_{(i- L_n) \wedge 1 \leq j < i}\frac{ \partial \omega^{i,j}}{\partial \sigma^2} ( \Delta X_{j,t}^n - \beta_{j,i-L_n-1,t}^n ) \beta_{i,i-L_n-1,t}^n.
\label{eqR3}
\eea 
Now, proving that $R_n^{(1)}$ is negligible is immediate because when $|i-j| \geq L_n$, we have the domination $\frac{ \partial \omega^{i,j}}{\partial \sigma^2} \leq L \sqrt{N_n}e^{-N_n^\delta}$ for some $L >0$ so that by an easy application of Cauchy-Schwarz inequality and estimates from Lemma \ref{lemmaEstimateX} we get $\esp_\calu \l| R_n^{(1)} \r| \overset{\proba}{\rightarrow} 0$. Now we show the negligibility of $R_n^{(2)}$. Assume first that $\sigma$ has no jumps, i.e $\tilde{J} = 0$. $R_n^{(2)}$ being a sum of martingale increments, it is sufficient to show that 
$$N_n^{-1/2}\sum_{i=1}^{N_n}\esp_\calu[(A_i^n)^2\l(\Delta X_{i,t}^n - \beta_{i,i-L_n-1,t}^n\r)^2] \overset{\proba}{\rightarrow} 0,$$
where $A_i^n = \sum_{j=(i-L_n) \wedge 1}^{i-1}\frac{ \partial \omega^{i,j}}{\partial \sigma^2} \Delta X_{j,t}^n$. Introducing $v_{i,k,t} := \sigma_t-\sigma_{\ti{i-k-1}} $, $\delta_{i,k,t} := \int_{\ti{i-1} \wedge t}^{\ti{i} \wedge t}{v_{i,k,s}^2ds}$, we thus need to show that
$$N_n^{-1/2}\sum_{i=1}^{N_n}\esp_\calu[(A_i^n)^2\delta_{i,L_n,t}] \overset{\proba}{\rightarrow} 0.$$
It\^{o}'s formula applied to $v_{i,k,t}^2$ when $\tilde{J} = 0$ yields 
$$ v_{i,k,t}^2 = \underbrace{\int_{\ti{i-k-1}}^{t}{2v_{i,k,s}\tilde{\sigma}_s^{(1)}dW_s}}_{u_{i,k,t}^{(1)}}+\underbrace{\int_{\ti{i-k-1}}^{t}{2v_{i,k,s}\tilde{\sigma}_s^{(2)}d\tilde{W}_s}}_{u_{i,k,t}^{(2)}}+\underbrace{\int_{\ti{i-k-1}}^{t}{\l\{\l(\tilde{\sigma}_s^{(1)}\r)^2+\l(\tilde{\sigma}_s^{(2)}\r)^2\r\}ds}}_{u_{i,k,t}^{(3)}} ,$$
so that defining $\delta_{i,k,t}^{(l)} := \int_{\ti{i-1} \wedge t}^{\ti{i} \wedge t}{u_{i,k,s}^{(l)}ds}$ for $l\in \{1,2,3\}$, we now show
\bea 
N_n^{-1/2}\sum_{i=1}^{N_n}\esp_\calu [(A_i^n)^2\delta_{i,L_n,t}^{(l)}] \overset{\proba}{\rightarrow} 0.
\label{convDeltaL}
\eea 
For $l=3$, we have $|\delta_{i,L_n,t}^{(3)}| \leq L\Delta t_{i,t}^n (\ti{i}-\ti{i-L_n-1}) \leq L n^{-2+2\gamma}L_n$ by \textbf{(H)}, and thus (\ref{convDeltaL}) boils down to showing that  
\bea 
N_n^{-1/2}n^{-2+2\gamma}L_n\sum_{i=1}^{N_n}\esp_\calu[(A_i^n)^2] \overset{\proba}{\rightarrow} 0.
\eea 
Using $\esp_\calu[\Delta X_k^n \Delta X_j^n] = 0$ for $j\neq k$, we deduce 
\beas  
N_n^{-1/2}n^{-2+2\gamma}L_n\sum_{i=1}^{N_n}\esp_\calu[(A_i^n)^2] &=& N_n^{-1/2}n^{-2+2\gamma}L_n\sum_{i=1}^{N_n} \sum_{j=(i-L_n)\wedge1}^{i-1}{\l(\frac{ \partial \omega^{i,j}}{\partial \sigma^2}\r)^2 \esp_\calu \l[\l(\Delta X_{j}^n\r)^2\r]}\\
&\leq& L N_n^{-1/2}n^{-3+3\gamma}L_n\sum_{i=1}^{N_n} \sum_{j=(i-L_n)\wedge 1}^{i-1}{\l(\frac{ \partial \omega^{i,j}}{\partial \sigma^2}\r)^2}\\
&\leq& L N_n^{2} n^{-3+3\gamma}L_n \to 0, 
\eeas 
where we have used that by direct calculation we have $\sum_{i=1}^{N_n} \sum_{(i- L_n) \wedge 1 \leq j < i}  \l(\frac{ \partial \omega^{i,j}}{\partial \sigma^2}\r)^2 = \bop{N_n^{5/2}}$, and that $L_n = N_n^{1/2+\delta}$. For $l = 1$, we split (\ref{convDeltaL}) into two terms

\bea 
N_n^{-1/2}\sum_{i=1}^{N_n}\esp_\calu[(A_i^n)^2\delta_{i,L_n,t}^{(l)}] = P_n^{(1)}+ P_n^{(2)}, 
\label{decompositionP1P2}
\eea 
where 
\beas 
P_n^{(1)} &= & N_n^{-1/2}\esp_\calu\sum_{i=1}^{N_n}  \sum_{(i- L_n) \wedge 1 \leq j < i}  \l(\frac{ \partial \omega^{i,j}}{\partial \sigma^2}\r)^2 \l(\Delta X_{j}^n\r)^2  \delta_{i,L_n,t}^{(1)},\\
P_n^{(2)} & = & N_n^{-1/2}\esp_\calu\sum_{i=1}^{N_n} \sum_{(i- L_n) \wedge 1 \leq j \neq k < i}  \frac{ \partial \omega^{i,j}}{\partial \sigma^2}\frac{ \partial \omega^{i,k}}{\partial \sigma^2} \Delta X_{j}^n\Delta X_{k}^n \delta_{i,L_n,t}^{(1)}.
\eeas
We have by Cauchy-Schwarz inequality
\beas 
P_n^{(1)} &\leq&  N_n^{-1/2}\sum_{i=1}^{N_n} \sum_{(i- L_n) \wedge 1 \leq j < i}  \l(\frac{ \partial \omega^{i,j}}{\partial \sigma^2}\r)^2 \l(\esp_{\calu} \l[ \l(\Delta X_{j}^n\r)^4 \r] \esp_{\calu} [(\delta_{i,L_n,t}^{(1)})^2]\r)^{1/2},\\
&\leq& L N_n^{-1/2}n^{-3+3\gamma}L_n  \sum_{i=1}^{N_n} \sum_{(i- L_n) \wedge 1 \leq j < i}  \l(\frac{ \partial \omega^{i,j}}{\partial \sigma^2}\r)^2\\
&\leq& L N_n^{2}n^{-3+3\gamma}L_n \overset{\proba}{\rightarrow} 0,
\eeas
as $\esp_{\calu} \l[ \l(\Delta X_{j}^n\r)^4 \r] \leq L n^{-2+2\gamma}$ by (\ref{eqDeltaX}), and $$\esp_{\calu} \l[\l(\delta_{i,L_n,t}^{(1)}\r)^2\r] \leq 
L \Delta t_{i,t}^n (\ti{i}-\ti{i-L_n-1})\esp_\calu\l[\sup_{s \in [\ti{i-L_n-1},\ti{i}]} v_{i,L_n,s}^2\r] \leq L n^{-4+4\gamma}L_n^2,$$ 
by the same estimate as for (\ref{eqDeltaX}) for the It\^{o} semimartingale $v_{i,L_n,s}$. For $P_n^{(2)}\overset{\proba}{\rightarrow}0$, we first note that for $k < j$ we have
\beas 
\l|\esp_\calu \l[\Delta X_k^n \Delta X_j^n \delta_{i,L_n,t}^{(1)} \r]\r| &\leq&  \esp_\calu \l[|\Delta X_k^n|  \l|\int_{\ti{i-1}\wedge t}^{\ti{i}\wedge t} \esp \l[\l. \Delta X_j^n u_{i,L_n,s}^{(1)} \r| \calg_{j-1}^n\r]ds \r| \r] \\
&\leq& L \esp_\calu \l[|\Delta X_k^n|  \int_{\ti{i-1} \wedge t}^{\ti{i} \wedge t} \l|\esp \l[\l. \int_{\ti{j-1}}^{\ti{j}} v_{i,L_n,u} \tilde{\sigma}_u^{(1)} \sigma_u du \r| \calg_{j-1}^n\r]ds\r| \r] \\
&\leq& L n^{-3 + 3\gamma} L_n^{1/2},
\eeas
where the last step is obtained using \textbf{(H)} as for the previous estimates. Overall, we get
\beas 
P_n^{(2)} &\leq& L N_n^{-1/2}n^{-3 + 3\gamma} L_n^{1/2}  \sum_{i=1}^{N_n} \sum_{(i-L_n) \wedge 1 \leq j \neq k < i}  \frac{ \partial \omega^{i,j}}{\partial \sigma^2}\frac{ \partial \omega^{i,k}}{\partial \sigma^2}  \\
&\leq& L N_n^{-1/2}n^{-3 + 3\gamma} L_n^{3/2}  \sum_{i=1}^{N_n} \sum_{(i-L_n)\wedge 1 \leq j < i}  \l(\frac{ \partial \omega^{i,j}}{\partial \sigma^2}\r)^2 \\
&\leq& L N_n^{2}n^{-3 + 3\gamma} L_n^{3/2} \overset{\proba}{\rightarrow} 0.
\eeas 
Finally, when $l=2$, we write the same decomposition as (\ref{decompositionP1P2}), and we note that the exact same calculation as in the case $l=1$ for $P_n^{(1)}$ remains valid. Moreover, following closely the calculation above, we get $P_n^{(2)} = 0$ by orthogonality of the Brownian motions $W$ and $\tilde{W}$. When $\sigma$ has jumps of finite activity, we easily show as for previous calculations that an additional negligible term appears in $R_n^{(2)}$, and thus combining all those results we have $R_n^{(2)} \overset{\proba}{\rightarrow} 0$. Finally, $R_n^{(3)} \overset{\proba}{\rightarrow} 0$ is proven following the same line of reasoning as for $R_n^{(2)}$. \\
\medskip 

\textbf{Step 2.}
We are going to apply Theorem 2-1 p. 238 from \cite{jacod1997} to the continuous martingale  $N_n^{-1/4}\tilde{M}_2^{(\sigma^2)}$. Condition (2.8) is automatically satisfied with $B_t=0$. We now show the variance condition (2.9). This boils down to showing that there exists an increasing limit process $C_t$ such that for any $t \in [0,T]$

\bea 
\l\langle \tilde{M}_2^{(\sigma^2)},\tilde{M}_2^{(\sigma^2)} \r\rangle_t \overset{\proba}{\rightarrow} C_t,
\label{convBracketM2}
\eea  
and $C_T = \frac{5}{64T^{3/2} \sigma^7 a_0}\int_0^T{\alpha_s^{-1}ds}\int_0^T{\sigma_s^4 \alpha_sds}$. We introduce 
\beas  
L_n^{(1)} & := & N_n^{-1/2}\sum_{i=1}^{N_n}\sum_{ (i-L_n) \wedge 1 \leq j < i} {\l(\frac{ \partial \omega^{i,j}}{\partial \sigma^2}\r)^2 \sigma_{\ti{i-L_n-1}}^4\l(\Delta W_{j,t}^n\r)^2\Delta \ti{i,t}}, \\
L_n^{(2)} & := & N_n^{-1/2}\sum_{i=1}^{N_n} \sum_{ (i-L_n) \wedge 1 \leq j  \neq k < i} {\frac{ \partial \omega^{i,j}}{\partial \sigma^2}\frac{ \partial \omega^{i,k}}{\partial \sigma^2}\sigma_{\ti{i-L_n-1}}^4\Delta W_{j,t}^n\Delta W_{k,t}^n \Delta \ti{i,t}}. \\
\eeas  
 we have $\l\langle \tilde{M}_2^{(\sigma^2)},\tilde{M}_2^{(\sigma^2)} \r\rangle_t = L_n^{(1)} + L_n^{(2)}$, so that our strategy to show (\ref{convBracketM2}) will be to prove that 
\bea 
L_n^{(1)} & \overset{\proba}{\rightarrow} & C_t
\label{convLi1}\\
L_n^{(2)} & \overset{\proba}{\rightarrow} & 0.\label{convLi2}
\eea 
For $L_n^{(2)}$, we have directly that $\esp_{\calu}\l[\l(L_n^{(2)}\r)^2 \r]$ is equal to
\beas  
 N_n^{-1} \sum_{\underset{1 \leq i_1,i_2 \leq N_n}{ |i_1 - i_2| \leq L_n}}\sum_{\underset{j \neq k}{(i_1 \vee i_2) - L_n \leq j, k < (i_1 \wedge i_2)}}{\frac{ \partial \omega^{i_1,j}}{\partial \sigma^2}\frac{ \partial \omega^{i_2,j}}{\partial \sigma^2}\frac{ \partial \omega^{i_1,k}}{\partial \sigma^2}\frac{ \partial \omega^{i_2,k}}{\partial \sigma^2}} \esp_{\calu}\l[\sigma_{\ti{i_1-L_n-1}}^4\sigma_{\ti{i_2-L_n-1}}^4\r] \Delta t_{j,t}^n \Delta t_{k,t}^n \Delta t_{i_1,t}^n\Delta t_{i_2,t}^n,
\eeas 
where we have used that for $l < \textnormal{min}(j_1,j_2,k_1,k_2)$, we have $\esp[\Delta W_{t,j_1}^n\Delta W_{t,j_2}^n\Delta W_{t,k_1}^n\Delta W_{t,k_2}^n | \calg_{l}^n] = \Delta t_{j,t}^n\Delta t_{k,t}^n $ when $j_1=j_2=j$ and $k_1=k_2=k$, and the expectation is null otherwise. Now, using the boundedness of $\sigma$ and the fact that $\Delta t_{j,t}^n \leq L n^{-1+\gamma}$ by assumption \textbf{(H)}, we obtain 
\beas 
\esp_{\calu}\l[\l(L_n^{(2)}\r)^2 \r] \leq L N_n^{-1}n^{-4+4\gamma}\sum_{\underset{1 \leq i_1,i_2 \leq N_n}{ |i_1 - i_2| \leq L_n}}\sum_{\underset{j \neq k}{(i_1 \vee i_2) - L_n \leq j, k < (i_1 \wedge i_2)}}{\frac{ \partial \omega^{i_1,j}}{\partial \sigma^2}\frac{ \partial \omega^{i_2,j}}{\partial \sigma^2}\frac{ \partial \omega^{i_1,k}}{\partial \sigma^2}\frac{ \partial \omega^{i_2,k}}{\partial \sigma^2}},
\eeas 
which by direct calculation on the coefficients yields 
\beas 
\esp_{\calu}\l[\l(L_n^{(2)}\r)^2 \r] &\leq& L N_n^{-1} n^{-4+4\gamma} N_n^4 L_n \\
&\leq& N_n^{7/2 +\delta}n^{-4+4\gamma}\to 0,
\eeas 
for $\gamma$ and $\delta$ small enough. Now we turn to (\ref{convLi1}). We define $C_t := \frac{5}{64\sqrt T \sigma^7 a_0}\int_0^T{\alpha_s^{-1}ds}\int_0^t{\sigma_s^4 \alpha_sds} $, and we further decompose $L_n^{(1)} - C_t$ into 
\bea 
L_n^{(1)} - C_t = \sum_{i=1}^{6} B_n^{(i)},
\eea 
with 
\beas 
B_n^{(1)} & = & N_n^{-1/2} \sum_{i=1}^{N_n} \sum_{ (i-L_n) \wedge 1 \leq j < i} {\l(\frac{ \partial \omega^{i,j}}{\partial \sigma^2}\r)^2 \sigma_{\ti{i-L_n-1}}^4 \l(\l(\Delta W_{j,t}^n\r)^2 - \Delta t_{j,t}^n\r)\l(\ti{i}\wedge t - \ti{i-1}\wedge t\r)},\\
B_n^{(2)} & = &N_n^{-1/2} \Delta_n \sum_{i=1}^{N_n} \sum_{ (i-L_n) \wedge 1 \leq j < i} {\l(\frac{ \partial \omega^{i,j}}{\partial \sigma^2}\r)^2 \sigma_{\ti{i-L_n-1}}^4 \l(\alpha_{\ti{j-1}}-\alpha_{\ti{i-L_n-1}}\r)U_j^n\l(\ti{i}\wedge t - \ti{i-1}\wedge t\r)},\\
B_n^{(3)} & = & N_n^{-1/2} \Delta_n \sum_{i=1}^{N_n} \sum_{ (i-L_n) \wedge 1 \leq j < i} {\l(\frac{ \partial \omega^{i,j}}{\partial \sigma^2}\r)^2 \sigma_{\ti{i-L_n-1}}^4 \alpha_{\ti{i-L_n-1}} \l(U_j^n - 1 \r)\l(\ti{i}\wedge t - \ti{i-1}\wedge t\r)},\\
B_n^{(4)} & = & \sum_{i=1}^{N_n} \l\{N_n^{-1/2} \Delta_n\sum_{ (i-L_n) \wedge 1 \leq j < i} {\l(\frac{ \partial \omega^{i,j}}{\partial \sigma^2}\r)^2 - \frac{5}{64T^{3/2}\sigma^7a_0}\int_0^T{\alpha_s^{-1}ds}}\r\} \sigma_{\ti{i-L_n-1}}^4 \alpha_{\ti{i-L_n-1}} \l(\ti{i}\wedge t - \ti{i-1}\wedge t\r),\\
B_n^{(5)} & = & \frac{5}{64T^{3/2}\sigma^7a_0}\int_0^T{\alpha_s^{-1}ds}\sum_{i=1}^{N_n}{\l\{\sigma_{\ti{i-L_n-1}}^4 \alpha_{\ti{i-L_n-1}} - \sigma_{\ti{i}}^4 \alpha_{\ti{i}} \r\} \l(\ti{i}\wedge t - \ti{i-1}\wedge t\r)},\\
B_n^{(6)} & = & \frac{5}{64T^{3/2}\sigma^7a_0}\int_0^T{\alpha_s^{-1}ds}\l\{\sum_{i=1}^{N_n}{\sigma_{\ti{i}}^4 \alpha_{\ti{i}}  \l(\ti{i}\wedge t - \ti{i-1}\wedge t\r)} - \int_0^t{\sigma_s^4\alpha_sds}\r\}.
\eeas
Using that $\esp_{\calu}\l[\l(\l(\Delta W_{j,t}^n\r)^2 - \Delta t_{j,t}^n\r)\l(\l(\Delta W_{k,t}^n\r)^2 - \Delta t_{k,t}^n\r)\r] = 0$ if $j \neq k$, and $2\l(\Delta t_{j,t}^n\r)^2$ otherwise, we obtain the estimate 
\beas  
\esp_{\calu}\l[\l(B_n^{(1)}\r)^2\r] \leq L  N_n^{3}L_n n^{-4+4\gamma} \overset{\proba}{\rightarrow} 0.    
\eeas 
Moreover, by the same deviation inequality as (\ref{eqDeltaX}) for $\alpha$ (recall that $\alpha$ is an It\^{o} semimartingale) we have $\esp|\alpha_{\ti{j-1}}-\alpha_{\ti{i-L_n-1}}| \leq L n^{-1/2}L_n^{1/2}$ so that we obtain easily $\esp |B_n^{(2)}| \leq L N_n^2 n^{-5/2+2\gamma} L_n^{1/2} \to 0$. Similar computation to that of $B_n^{(1)}$ shows that $\esp\big[\big(B_n^{(3)}\big)^2\big] \to 0$ since $\esp [U_j^n-1] = 0$ and $\esp [(U_j^n-1)(U_i^n-1)] = 0$ when $i \neq j$. $B_n^{(4)} \overset{\proba}{\rightarrow} 0$ is a direct consequence of the fact that by a direct calculation we have uniformly in $i$ that $N_n^{-3/2} \Delta_n\sum_{ (i-L_n) \wedge 1 \leq j < i} {\l(\frac{ \partial \omega^{i,j}}{\partial \sigma^2}\r)^2} \overset{\proba}{\rightarrow} \frac{5}{64T^{3/2}\sigma^7a_0}$ and that $N_n\Delta_n \overset{\proba}{\rightarrow} \int_0^T{\alpha_s^{-1}ds}$ by (\ref{eqNumberJumps}), recalling that $\Delta_n = T/n$. $B_n^{(5)} \overset{\proba}{\rightarrow} 0$ is, again a simple consequence of the deviation inequality (\ref{eqDeltaX}) for the It\^{o} semimartingale $\sigma^4\alpha$, and finally $B_n^{(6)} \overset{\proba}{\rightarrow} 0$ is just the convergence of the Riemann sum toward the integral limit, and we are done. We show condition (2.10), i.e. that
\bea 
N_n^{-1/4} \langle \tilde{M}_2^{(\sigma^2)}, W \rangle_t \overset{\proba}{\rightarrow} 0.
\eea 
Note that 
\bea
N_n^{-1/4} \langle \tilde{M}_2^{(\sigma^2)}, W \rangle_t = N_n^{-1/4} \sum_{i=1}^{N_n} \sum_{ (i-L_n) \wedge 1 \leq j < i} \frac{ \partial \omega^{i,j}}{\partial \sigma^2} \sigma_{\ti{i-L_n-1}}^2 \Delta W_{j,t}^n \Delta t_{i,t}^n,
\eea 
so that by a straightforward calculation on the Brownian motion increments we have \bea 
\esp_\calu \l[\l(\langle \tilde{M}_2^{(\sigma^2)}, W \rangle_t\r)^2\r] \leq L N_n^2 L_n n^{-3+3\gamma} \overset{\proba}{\rightarrow} 0.
\eea 
Moreover, condition (2.11) is satisfied because $\tilde{M}_2^{(\sigma^2)}$ is continuous. Finally we show condition (2.12). But note that for any bounded martingale $\overline{N}$ orthogonal to $W$ we have directly 
\bea 
\langle \tilde{M}_2^{(\sigma^2)}, \overline{N} \rangle_t = 0
\eea 
by (\ref{eqM2tilde}), so that all the conditions required for the theorem hold.
\smallskip

\textbf{Step 3.} In the presence of jumps, for $n$ large enough, an additional term appears in $M_2^{(\sigma^2)}(T)$. First, since $J$ is of finite activity and by the Grigelionis decomposition for It\^{o}-semimartingales (see e.g. Theorem 2.1.2 in \cite{jacod2011discretization}), we can assume without loss of generality that the jump times of $J$ are a subset of the support of a Poisson random measure $\mu$ on $\reels_+ \times E$ for $E$ some arbitrary Polish space, adapted to $\calf_t$, and with finite intensity measure $\nu$. Let thus $\tau_1$, $\cdots$ ,$\tau_p$, $\cdots$ be an exhausting sequence for the jumps of $\mu$. Since $J$ is of finite activity, for $n$ sufficiently large we cannot have more than a single jump on intervals of the form $[\ti{i-L_n}, \ti{i}]$ because $\sup_{ L_n < i \leq N_n } \ti{i} - \ti{i-L_n} \to^{a.s} 0$ by assumption \textbf{(H)}. Therefore, if $n$ is large enough, after a simple rearrangement of the terms that contain jumps, and by the previous calculation in the continuous case, we can write $M_{2}^{(\sigma^2)}(t)$ under the form 
\bea  
 M_{2}^{(\sigma^2)}(t) = \tilde{M}_2^{(\sigma^2)}(t) + A_n^+(t) + A_n^-(t) + o_\proba(1),  
\label{eqM2J}
\eea 
with 
\bea 
A_n^+(t) = \sum_{p \geq 1}{\Delta J_{\tau_p}\sum_{j = i_p + 1}^{i_p + L_n} \frac{ \partial \omega^{i_p,j}}{\partial \sigma^2} \Delta \tilde{X}_{j,t}^n } \textnormal{ and } A_n^-(t) = \sum_{p \geq 1}{\Delta J_{\tau_p}\sum_{j = i_p - L_n }^{i_p - 1 } \frac{ \partial \omega^{i_p,j}}{\partial \sigma^2} \Delta \tilde{X}_{j,t}^n },
\eea 
where $i_p$ is such that $\ti{i_p-1} < \tau_p \leq \ti{i_p }$. We define

\bea 
M_n^+(t,p) =  \sum_{j = i_p + 1}^{i_p + L_n} \frac{ \partial \omega^{i_p,j}}{\partial \sigma^2} \Delta \tilde{X}_{j,t}^n  \textnormal{ and }  M_n^-(t,p) =  \sum_{j = i_p - L_n }^{i_p - 1 } \frac{ \partial \omega^{i_p,j}}{\partial \sigma^2} \Delta \tilde{X}_{j,t}^n,
\eea 
along with the following infinite dimensional vector $(G, (R_\infty^+(p), R_\infty^-(p))_{p \geq 0})$ such that $G$, $R_\infty^+(p)$ and $ R_\infty^-(p)$ are i.i.d standard normal random variables. We can assume that $\Omega$ and $\calg_T$ are rich enough to include such random variables information without loss of generality, since we can always construct a very good filtered extension as explained in pp. 36-37 of \cite{jacod2011discretization}. Now define 
\bea 
V_\infty := \frac{5}{64T^{3/2}\sigma^7a_0}\int_0^T{\alpha_s^{-1}ds},
\eea 
\beas 
M_\infty^+(p) :=  \sigma_{\tau_p}\alpha_{\tau_p}^{1/2}V_\infty^{1/2}R_\infty^+(p) \textnormal{ and } M_\infty^-(p) := \sigma_{\tau_p-}\alpha_{\tau_p-}^{1/2}V_\infty^{1/2} R_\infty^-(p),
\eeas
and 
\beas 
\tilde{G} := C_T^{1/2} G,
\eeas 
where $C_T$ was defined in (\ref{convBracketM2}). We are going to show that $\calg_T$-stably in law, we have the convergence 
\bea 
N_n^{-1/4} (\tilde{M}_{2}^{(\sigma^2)}(T), (M_n^+(T,p),M_n^-(T,p))_{p \geq 1}) \to (\tilde{G}, (M_\infty^+(p),M_\infty^-(p))_{p \geq 1}).
\label{vectorStable}
\eea 
 As the subset of finite dimensional cylinders is a convergence determining class for the product topology of $\reels^{\naturels}$, it is sufficient to show that the above convergence holds for all finite families of the form $(\tilde{M}_{2}^{(\sigma^2)}(T),M_n^+(T,p_1),M_n^-(T,p_1), \cdots,M_n^+(T,p_k),M_n^-(T,p_k))$, $k \geq 1$. Now, let us consider the filtration $\tilde{\calg}_t$ which is the smallest filtration containing $\calg_t$ and the jump times of $\mu$, $(\tau_p)_{p \geq 1}$. By independence of $\mu$ and the Wiener process $W$, $\tilde{X}$ is also a continuous It\^{o} process with respect to the filtration $\tilde{\calg}_t$, so that $(\tilde{M}_{2}^{(\sigma^2)}(t),M_n^+(t,p_1),M_n^-(t,p_1),\cdots,M_n^+(t,p_k),M_n^-(t,p_k))_{t \in [0,T]}$ is a multi-dimensional continuous $\tilde{\calg}_t$-martingale. Now, for $n$ large enough and by the finite activity property, we have for any $1 \leq i \neq j \leq k $,
\beas
\langle M_n^+(.,p_i), M_n^+(.,p_j) \rangle_t = \langle M_n^-(.,p_i), M_n^-(.,p_j) \rangle_t = 0 \textnormal{ a.s,}
\eeas 
and 
\beas 
\langle M_n^+(.,p_i), M_n^-(.,p_i) \rangle_t = 0 \textnormal{ a.s.}
\eeas 
Moreover
\beas 
N_n^{-1/2}\langle M_n^+(.,p_i),M_n^+(.,p_i)\rangle_t = N_n^{-1/2} \sum_{j = i_p +1}^{i_p + L_n} \l(\frac{ \partial \omega^{i_p,j}}{\partial \sigma^2}\r)^2 \int_{\ti{j-1} \wedge t}^{\ti{j} \wedge t}{\sigma_s^2ds},
\eeas
since the random index $i_p$ is $\tilde{\calg}_0$-measurable. By a similar (but easier) calculation than for $L_n^{(1)}$ above, we have 
\beas 
N_n^{-1/2}\langle M_n^+(.,p_i),M_n^+(.,p_i)\rangle_T \overset{\proba}{\rightarrow} \frac{5}{64T^{3/2}\sigma^7a_0} \sigma_{\tau_{p_i}}^2\alpha_{\tau_{p_i}} \int_0^T{\alpha_s^{-1}ds},
\eeas 
and also 
\beas 
N_n^{-1/2}\langle M_n^-(.,p_i),M_n^-(.,p_i)\rangle_T \overset{\proba}{\rightarrow} \frac{5}{64T^{3/2}\sigma^7a_0} \sigma_{\tau_{p_i}-}^2\alpha_{\tau_{p_i}-} \int_0^T{\alpha_s^{-1}ds}.
\eeas
Finally we show the negligibility of $N_n^{-1/2} \langle \tilde{M}_2^{(\sigma^2)}, M_n^+(.,p_i) \rangle_t$ and $N_n^{-1/2} \langle \tilde{M}_2^{(\sigma^2)}, M_n^-(.,p_i) \rangle_t$. We have 

\bea 
N_n^{-1/2} \langle \tilde{M}_2^{(\sigma^2)} ,M_n^+(.,p_i)\rangle_t = N_n^{-1/2} \sum_{j=i_p + 1}^{i_p + L_n} \frac{ \partial \omega^{i_p,j}}{\partial \sigma^2} \sum_{k=(j-L_n) \wedge 1 }^{j-1} \frac{ \partial \omega^{j,k}}{\partial \sigma^2} \Delta \tilde{X}_{k,t}^n \sigma_{j-L_n-1}^2 \Delta t_{j,t}^n,
\eea 
so that by Assumption \textbf{(H)} we have $N_n^{-1} \esp_\calu \l[ \langle \tilde{M}_2^{(\sigma^2)} ,M_n^+(.,p_i)\rangle_t^2 \r]$ bounded by
\beas 
 & & L N_n^{-1} n^{-2+2\gamma} \sum_{j_1,j_2 = i_p + 1}^{i_p + L_n}  \frac{ \partial \omega^{i_p,j_1}}{\partial \sigma^2} \frac{ \partial \omega^{i_p,j_2}}{\partial \sigma^2} \sum_{k = (j_1 \vee j_2) - L_n \wedge 1 }^{j_1 \wedge j_2 -1}  \frac{ \partial \omega^{j_1,k}}{\partial \sigma^2}\frac{ \partial \omega^{j_2,k}}{\partial \sigma^2} \esp_\calu \l[ \l(\Delta \tilde{X}_{k,t}^n\r)^2\r]  \\
&\leq& L N_n^{-1} n^{-3+3\gamma}\sum_{j_1,j_2 = i_p + 1}^{i_p + L_n}  \frac{ \partial \omega^{i_p,j_1}}{\partial \sigma^2} \frac{ \partial \omega^{i_p,j_2}}{\partial \sigma^2} \sum_{k = (j_1 \vee j_2) - L_n \wedge 1 }^{j_1 \wedge j_2 -1}  \frac{ \partial \omega^{j_1,k}}{\partial \sigma^2}\frac{ \partial \omega^{j_2,k}}{\partial \sigma^2}\\
&\leq& L N_n^2 L_n n^{-3+3\gamma} \overset{\proba}{\rightarrow} 0, 
\eeas  
and thus the bracket is negligible. By a similar calculation we get that the bracket involving $M_n^-(p_i,.)$ is also negligible. Moreover, the convergence of $\langle \tilde{M}_2^{(\sigma^2)} , \tilde{M}_2^{(\sigma^2)}\rangle_t$ was shown in (\ref{convBracketM2}). Finally, as above we easily check the bracket of each martingale with either $W$ or a bounded martingale orthogonal to $W$ is negligible so that by another application of Theorem 2-1 in \cite{jacod1997} we have (\ref{vectorStable}). From the representation 
\bea 
N_n^{-1/4}M_2^{(\sigma^2)}(T) = N_n^{-1/4} \l\{\tilde{M}_2^{(\sigma^2)}(T) + \sum_{p \geq 1} \Delta J_{\tau_p} (M_n^+(T,p)+M_n^+(T,p))  \r\} + o_\proba(1),
\eea 
along with the fact that $\{ p | \Delta J_{\tau_p} \neq 0\}$ is finite, we deduce by the stable convergence (\ref{vectorStable}) that $\tilde{\calg}$ (and \textit{a fortiori} $\calg$) stably in law 
\bea 
N_n^{-1/4}M_2^{(\sigma^2)}(T) \to \tilde{G} + \sum_{p \geq 1} \Delta J_{\tau_p} (M_\infty^+(p)+M_\infty^-(p)),
\eea
which is equal to the claimed distribution.
\medskip 

Finally, to show the convergence $N_n^{-1/2} M_{2}^{(a^2)}(T) \overset{\proba}{\rightarrow} 0$, note that $\frac{ \partial \omega^{i,j}}{\partial \sigma^2}$ and $\frac{ \partial \omega^{i,j}}{\partial a^2}$ are equivalent up to a constant term so that all the above computations apply, and thus the scaling in $N_n^{-1/2}$ instead of $N_n^{-1/4}$ yields the negligibility of this term.

\end{proof}

Before turning to the limiting distribution of the other terms, we recall that for a $\sigma$-field $\calh$, a random vector $Z$ and a sequence of random vectors $Z_n$ in $\reels^b$ , we say that $Z_n$ converges in law toward $Z$ conditioned on $\calh$ if we have for any $u \in \reels^b$ 

\bea 
\esp \l[\l.e^{iu^TZ_n} \r| \calh \r] \overset{\proba}{\rightarrow} \esp \l[\l.e^{iu^T Z} \r| \calh \r].
\eea 
Moreover, we recall in the following proposition a key result to combine stable convergence and conditional convergence. The proof of the result can be consulted in \cite{barndorff2008designing} (proof of Proposition 5 on p. 1524).

\begin{proposition*} \label{propositionConditionalStable}
Let $\calh$ be a given sub-$\sigma$-field, and let $(Y_n)$ and $(Z_n)$ be sequences of random vectors, such that each $Y_n$ is $\calh$-measurable and the sequence converges $\calh$-stably toward a limiting distribution $Y$, and $(Z_n)$ converges in law conditioned on $\calh$ to some $Z$. Then $(Y_n,Z_n) \to (Y,Z)$ $\calh$-stably in distribution. 
\end{proposition*}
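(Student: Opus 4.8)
The plan is to reduce joint $\calh$-stable convergence to a statement about conditional characteristic functions, exploiting crucially that $Y_n$ is itself $\calh$-measurable. Recall that a sequence converges $\calh$-stably to $Y$ if and only if $\esp[W e^{i u^T Y_n}] \to \esp[W e^{i u^T Y}]$ for every bounded $\calh$-measurable $W$ and every $u$, and that joint $\calh$-stable convergence of $(Y_n,Z_n)$ to $(Y,Z)$ is equivalent to $\esp[W e^{i(u^T Y_n + v^T Z_n)}] \to \esp[W e^{i(u^T Y + v^T Z)}]$ for all such $W$ and all $(u,v)$, where the limit is realized on an extension on which, conditionally on $\calh$, $Z$ is independent of $Y$ with conditional characteristic function $\phi(v) := \esp[e^{i v^T Z}\,|\,\calh]$. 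The whole proof then amounts to verifying this last display.

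First I would fix $W$, $u$, $v$, and use that $W e^{i u^T Y_n}$ is bounded and $\calh$-measurable, since $Y_n$ is $\calh$-measurable. By the tower property,
$$\esp\l[W e^{i u^T Y_n} e^{i v^T Z_n}\r] = \esp\l[W e^{i u^T Y_n}\,\phi_n(v)\r], \qquad \phi_n(v) := \esp\l[e^{i v^T Z_n}\,\big|\,\calh\r],$$
and the assumed conditional convergence in law gives $\phi_n(v) \overset{\proba}{\rightarrow} \phi(v)$. I would then split
$$\esp\l[W e^{i u^T Y_n}\phi_n(v)\r] = \esp\l[W e^{i u^T Y_n}\phi(v)\r] + \esp\l[W e^{i u^T Y_n}\bigl(\phi_n(v)-\phi(v)\bigr)\r].$$
The second term is bounded in modulus by $\|W\|_\infty\,\esp[|\phi_n(v)-\phi(v)|]$, and since $|\phi_n(v)-\phi(v)| \le 2$ and converges to $0$ in probability, dominated convergence drives it to $0$. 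For the first term, note that $W\phi(v)$ is again bounded and $\calh$-measurable, so the $\calh$-stable convergence of $Y_n$ applied with this weight yields $\esp[W e^{i u^T Y_n}\phi(v)] \to \esp[W e^{i u^T Y}\phi(v)]$.

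Combining the two pieces gives $\esp[W e^{i(u^T Y_n + v^T Z_n)}] \to \esp[W e^{i u^T Y}\phi(v)]$. To finish I would identify the right-hand side with $\esp[W e^{i(u^T Y + v^T Z)}]$: since $W\phi(v)$ is $\calh$-measurable, the tower property gives $\esp[W\phi(v)e^{i u^T Y}] = \esp[W\phi(v)\,\esp[e^{i u^T Y}|\calh]]$, which by the conditional-independence structure of the limit equals $\esp[W\,\esp[e^{i(u^T Y + v^T Z)}|\calh]] = \esp[W e^{i(u^T Y + v^T Z)}]$. This is precisely joint $\calh$-stable convergence. Because $Y_n$ is $\calh$-measurable the computation separates cleanly and no genuinely hard estimate arises; the only delicate points are the dominated-convergence argument for the conditional characteristic functions, where boundedness by $2$ together with convergence in probability is what is needed, and, more conceptually, the bookkeeping on the extended probability space that pins down the joint limit as the conditionally independent coupling of $Y$ and $Z$. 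I expect this last identification, rather than any analytic bound, to be the subtlest step.
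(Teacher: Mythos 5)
Your proof is correct. The paper does not actually prove this proposition itself --- it defers to the proof of Proposition 5 (p.~1524) in \cite{barndorff2008designing} --- and your argument is essentially that standard one: condition on $\calh$, use that $Y_n$ is $\calh$-measurable to pull $W e^{iu^TY_n}$ through the tower property, pass to the limit in the conditional characteristic function of $Z_n$ by bounded convergence, and then apply the $\calh$-stable convergence of $Y_n$ with the bounded $\calh$-measurable weight $W\phi(v)$, identifying the joint limit as the conditionally independent coupling of $Y$ and $Z$ given $\calh$.
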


\begin{lemma*} \label{lemmaM3}
We have conditioned on $\calg_T$ the convergence in distribution  
\bea 
N_n^{-1/4}  M_{3}^{(\sigma^2)}(T) \to \calm \caln \l( 0, \frac{\sqrt{T}\overline{\sigma}_0^2}{8\sigma^5a_0} \r), 
\eea 
and
\bea 
N_n^{-1/2} M_{3}^{(a^2)}(T) \overset{\proba}{\rightarrow} 0,
\label{eqM3a}
\eea 
where we recall the definition $\overline{\sigma}_0^2 = T^{-1}\l\{\int_0^T\sigma_s^2ds + \sum_{0\leq s\leq T}\Delta J_s^2\r\}$.
\end{lemma*}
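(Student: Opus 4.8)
The plan is to exploit the fact that, once we condition on $\calg_T$, the term $M_3^{(\sigma^2)}(T)$ becomes a weighted sum of independent noise variables. Writing
$$M_{3}^{(\sigma^2)}(T) = -2 \sum_{i=0}^{N_n} a_i^n\, \epsilon_{t_i^n}, \qquad a_i^n := \sum_{j=1}^{N_n} \frac{\partial \dot{\omega}^{i,j}}{\partial \sigma^2} \Delta X_{j,T}^n,$$
the weights $a_i^n$ are $\calg_T$-measurable while the $\epsilon_{t_i^n}$ are i.i.d., centred, with variance $a_0^2$ and finite fourth moment, and independent of $\calg_T$. Hence, conditionally on $\calg_T$, $N_n^{-1/4} M_{3}^{(\sigma^2)}(T)$ is a triangular array of independent centred summands, and I would establish the stated mixed-normal limit by a conditional Lindeberg--Feller central limit theorem. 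This reduces the problem to two ingredients: (i) the conditional variance converges in probability to $\frac{\sqrt{T}\overline{\sigma}_0^2}{8\sigma^5 a_0}$, and (ii) a conditional Lyapunov condition holds.

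The heart of the argument is the variance computation (i). The conditional variance equals
$$V_n := \var\!\l[N_n^{-1/4} M_{3}^{(\sigma^2)}(T)\,\big|\,\calg_T\r] = 4 a_0^2\, N_n^{-1/2} \sum_{i=0}^{N_n} (a_i^n)^2.$$
Expanding the square gives a diagonal part ($j=k$) and an off-diagonal part ($j \neq k$) in the price increments. For the off-diagonal part I would control its $\esp_\calu$-second moment, using $\esp_\calu[\Delta X_j^n \Delta X_k^n] = 0$ for $j \neq k$ together with the exponential decay $\frac{\partial \dot{\omega}^{i,j}}{\partial \sigma^2} \leq L\sqrt{N_n}e^{-N_n^\delta}$ away from the diagonal, showing it is $\lop{1}$. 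For the diagonal part, the estimates of Lemma \ref{lemmaEstimateX} and the finite activity of $J$ let me replace $(\Delta X_{j,T}^n)^2$ by $\int_{\ti{j-1}}^{\ti{j}}\sigma_s^2 ds + \sum_{\ti{j-1} < s \leq \ti{j}} \Delta J_s^2$ at negligible cost. A direct calculation on the coefficients of $\Omega^{-1}$, adapting the computations on pp.~245--248 of \cite{xiu2010quasi} to the irregular grid, then shows that $N_n^{-1/2}\sum_{i}\l(\frac{\partial \dot{\omega}^{i,j}}{\partial \sigma^2}\r)^2$ converges, uniformly in $j$, to the appropriate local constant; crucially the random sampling density $\alpha$ entering this coefficient sum cancels against the local interval lengths, which is why no $\alpha$ survives in the limit (in contrast with $M_2$). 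Summing over $j$ is then a Riemann-sum argument converging to $\int_0^T \sigma_s^2 ds + \sum_{0 < s \leq T} \Delta J_s^2 = T\overline{\sigma}_0^2$, whence $V_n \overset{\proba}{\rightarrow} \frac{\sqrt{T}\overline{\sigma}_0^2}{8\sigma^5 a_0}$.

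For (ii), I would verify the conditional Lyapunov condition $N_n^{-1}\sum_i (a_i^n)^4\, \esp[\epsilon_0^4] \overset{\proba}{\rightarrow} 0$, which follows from the finite fourth moment of the noise together with the bounds on $a_i^n$ coming from the off-diagonal decay of the coefficients and the increment estimates of Lemma \ref{lemmaEstimateX}; this yields conditional asymptotic normality with variance $V_n$, hence the first claim in its mixed-normal form. The convergence $N_n^{-1/2} M_{3}^{(a^2)}(T) \overset{\proba}{\rightarrow} 0$ in \eqref{eqM3a} follows from the same expansion: since $\frac{\partial \dot{\omega}^{i,j}}{\partial a^2}$ is equivalent up to a constant to $\frac{\partial \dot{\omega}^{i,j}}{\partial \sigma^2}$, the conditional variance of $N_n^{-1/2} M_{3}^{(a^2)}(T)$ is of order $N_n^{-1/2}$ and vanishes, so the extra half-power of $N_n$ in the normalisation forces convergence to $0$ in probability. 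I expect the main obstacle to be the variance computation: extracting the precise asymptotics of the coefficient sum $N_n^{-1/2}\sum_i (\partial_{\sigma^2}\dot{\omega}^{i,j})^2$ from the structure of $\Omega^{-1}$, verifying the cancellation of $\alpha$, and correctly folding the jump contributions into the quadratic-variation limit $T\overline{\sigma}_0^2$.
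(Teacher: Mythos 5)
Your proposal is correct and follows essentially the same route as the paper: conditioning on $\calg_T$ turns $M_3^{(\sigma^2)}(T)$ into a triangular array of conditionally independent centred summands with $\calg_T$-measurable weights, a conditional CLT (the paper invokes a conditional version of Theorem 5.12 in Kallenberg) is applied after computing the conditional variance via the same diagonal/off-diagonal split and coefficient asymptotics, and the Lindeberg condition is checked through conditional fourth moments exactly as you propose. The treatment of $M_3^{(a^2)}$ via the extra $N_n^{-1/4}$ in the scaling also matches the paper.
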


\begin{proof} 
We start with $M_{3}^{(\sigma^2)}(T)$. We apply a conditional version of Theorem 5.12 from \cite{KallenbergFoundation2002}(p. 92). Accordingly, we note that $M_{3}^{(\sigma^2)}(T)$ can be written as \bea 
N_n^{-1/4}M_{3}^{(\sigma^2)}(T) = \sum_{i=0}^{N_n} \tilde{\chi}_i^n,
\eea 
where $\tilde{\chi}_i^n = -2N_n^{-1/4}\l\{\sum_{j = 1}^{N_n} \frac{ \partial \dot{\omega}^{i,j}}{\partial \sigma^2} \Delta X_j^n\r\} \epsilon_{t_i^n}$, are rowwise conditionally independent and centered given $\calg_T$. To get the theorem, it is thus sufficient to show that 
\bea 
\sum_{i=1}^{N_n} \esp\l[\l. \l(\tilde{\chi}_i^n\r)^2 \r|\calg_T\r] \overset{\proba}{\rightarrow} \frac{1}{8\sqrt{T}\sigma^5a_0}\l\{\int_0^T\sigma_s^2ds + \sum_{0\leq s\leq T}\Delta J_s^2\r\},
\label{convVarianceM3}
\eea
and the Lindeberg condition, for any $\epsilon >0$,
\bea 
\sum_{i=0}^{N_n} \esp\l[\l. \l(\tilde{\chi}_i^n\r)^2 \mathbb{1}_{\{|\tilde{\chi}_i^n| \geq \epsilon \}} \r|\calg_T\r] \overset{\proba}{\rightarrow} 0.
\eea
For (\ref{convVarianceM3}), we can write $\sum_{i=0}^{N_n} \esp\l[\l. \l(\tilde{\chi}_i^n\r)^2 \r|\calg_T\r] = T_n^{(1)} + T_n^{(2)}$ with 
\bea 
T_n^{(1)} = 4a_0^2 N_n^{-1/2} \sum_{i=0}^{N_n}\sum_{j=1}^{N_n} \l(\frac{ \partial \dot{\omega}^{i,j}}{\partial \sigma^2}\r)^2\l(\Delta X_j^n\r)^2,
\eea 
and
\bea 
 T_n^{(2)} =4a_0^2 N_n^{-1/2} \sum_{i=0}^{N_n}\sum_{j\neq k=1}^{N_n} \frac{ \partial \dot{\omega}^{i,j}}{\partial \sigma^2}\frac{ \partial \dot{\omega}^{i,k}}{\partial \sigma^2}\Delta X_j^n\Delta X_k^n,
\eea 
and using same techniques as for the proof of Lemma \ref{lemmaM2} we easily get by direct calculation on the coefficients $\frac{ \partial \dot{\omega}^{i,j}}{\partial \sigma^2}$ that we have $T_n^{(1)} \overset{\proba}{\rightarrow} \frac{1}{8\sqrt{T}\sigma^5a_0}\l\{\int_0^T\sigma_s^2ds + \sum_{0\leq s\leq T}\Delta J_s^2\r\}$, and $T_n^{(2)} \overset{\proba}{\rightarrow} 0$. As for the Lindeberg condition, it is sufficient to notice that by independence of the Brownian increments and similar computation we have $\sum_{i=0}^{N_n} \esp\big[ \l(\tilde{\chi}_i^n\r)^4 \big| \calg_T\big] \overset{\proba}{\rightarrow} 0$. Finally, for $M_{3}^{(a^2)}(T)$, all the previous calculation holds but now the scaling in $N_n^{-1/2}$ implies that $N_n^{-1/2}M_{3}^{(a^2)}(T) \overset{\proba}{\rightarrow} 0$.
\end{proof}

\begin{lemma*} \label{lemmaM4}
We have conditioned on $\calg_T$ the convergence in distribution  
\bea 
\Phi_n^{-1/2} M_4(T) \to  \caln \l(0,\l(\begin{matrix} \frac{\sqrt{T}}{16a_0\sigma^3} &0\\ 0& \frac{2}{a_0^4}+ \frac{\textnormal{cum}_4[\epsilon]}{a_0^8}\end{matrix}\r)\r). 
\eea 

\end{lemma*}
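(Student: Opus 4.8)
The plan is to exploit that $M_4(T)$ is, conditionally on $\calg_T$, a pure functional of the i.i.d.\ noise $(\epsilon_{t_i^n})_{0 \le i \le N_n}$: every coefficient $\frac{\partial \ddot{\omega}^{i,j}}{\partial \beta}$ is $\calu$-measurable, hence $\calg_T$-measurable, and neither the continuous price nor the jumps enter $M_4$. Thus each component $M_4^{(\beta)}(T)$ is, given $\calg_T$, a centered quadratic form in independent variables. I would prove the result by a conditional martingale central limit theorem, using the same conditional version of Theorem 5.12 in \cite{KallenbergFoundation2002} employed in Lemma~\ref{lemmaM3}, and obtain the bivariate statement via the Cram\'er--Wold device applied to $\lambda_1 N_n^{-1/4} M_4^{(\sigma^2)}(T) + \lambda_2 N_n^{-1/2} M_4^{(a^2)}(T)$.

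First I would introduce $\mathcal{H}_i := \calg_T \vee \sigma(\epsilon_{t_0^n}, \dots, \epsilon_{t_i^n})$ and write, for $(\beta) \in \{(\sigma^2),(a^2)\}$,
\[
M_4^{(\beta)}(T) = \sum_{i=0}^{N_n} \zeta_i^{(\beta),n}, \qquad \zeta_i^{(\beta),n} = \frac{\partial \ddot{\omega}^{i,i}}{\partial \beta}\l(\epsilon_{t_i^n}^2 - a_0^2\r) + 2\,\epsilon_{t_i^n} \sum_{0 \le j < i} \frac{\partial \ddot{\omega}^{i,j}}{\partial \beta}\,\epsilon_{t_j^n}.
\]
Since $\esp[\epsilon_{t_i^n}] = 0$ and $\esp[\epsilon_{t_i^n}^2] = a_0^2$, each $\zeta_i^{(\beta),n}$ is $\mathcal{H}_i$-measurable, centered, and satisfies $\esp[\zeta_i^{(\beta),n} \mid \mathcal{H}_{i-1}] = 0$, so the array forms a conditional martingale difference sequence. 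The proof then reduces to (i) convergence of the scaled conditional variance $\sum_i \esp[(\zeta_i^{(\beta),n})^2 \mid \mathcal{H}_{i-1}]$, (ii) a Lindeberg condition, which I would derive from a Lyapunov fourth-moment bound, and (iii) vanishing of the scaled cross conditional covariance between the two components, which produces the off-diagonal zeros.

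For (i), writing $c_{i,j} := \frac{\partial \ddot{\omega}^{i,j}}{\partial \sigma^2}$, a direct expansion gives
\[
\esp\l[(\zeta_i^{(\sigma^2),n})^2 \,\big|\, \mathcal{H}_{i-1}\r] = c_{i,i}^2\l(\textnormal{cum}_4[\epsilon] + 2a_0^4\r) + 4a_0^2\l(\sum_{j<i} c_{i,j}\,\epsilon_{t_j^n}\r)^2 + 4c_{i,i}\,\esp[\epsilon_0^3]\sum_{j<i} c_{i,j}\,\epsilon_{t_j^n}.
\]
The last (skewness) term has null $\calg_T$-conditional mean and I would check that its sum is negligible; the middle term concentrates around its $\calg_T$-conditional expectation $4a_0^4\sum_{j<i} c_{i,j}^2$ since the variance of the quartic-in-$\epsilon$ fluctuation is of smaller order. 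Hence, up to $o_\proba(1)$, the scaled conditional variance equals
\[
N_n^{-1/2}\l\{\l(\textnormal{cum}_4[\epsilon] + 2a_0^4\r)\sum_{i} c_{i,i}^2 + 4a_0^4\sum_{i}\sum_{j<i} c_{i,j}^2\r\}.
\]
The crux, and the step I expect to be the main obstacle, is the explicit asymptotic evaluation of these coefficient sums. Using the closed form of $\Omega^{-1}$ in \cite{xiu2010quasi} (equation~(28), with $n$ replaced by $N_n$), the double-difference transform of Lemma~\ref{lemmaTransfo}, the mesh $\widetilde{\Delta} = T/N_n$, and a Riemann-sum argument, I would show $N_n^{-1/2}\sum_i c_{i,i}^2 \overset{\proba}{\rightarrow} 0$ (so that $\textnormal{cum}_4[\epsilon]$ drops out) and $4a_0^4 N_n^{-1/2}\sum_i\sum_{j<i} c_{i,j}^2 \overset{\proba}{\rightarrow} \frac{\sqrt{T}}{16 a_0 \sigma^3}$.

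The analogous computation with $d_{i,j} := \frac{\partial \ddot{\omega}^{i,j}}{\partial a^2}$ under the $N_n^{-1}$ scaling retains both contributions and yields $\frac{2}{a_0^4} + \frac{\textnormal{cum}_4[\epsilon]}{a_0^8}$. For (iii), the same coefficient bookkeeping shows $N_n^{-3/4}\sum_i \esp[\zeta_i^{(\sigma^2),n}\zeta_i^{(a^2),n}\mid \mathcal{H}_{i-1}] \overset{\proba}{\rightarrow} 0$, giving the diagonal limiting matrix, while the Lyapunov bound in (ii) follows from the same fourth-moment estimates on the noise together with the finiteness of $\esp[\epsilon_0^4]$. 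These coefficient asymptotics parallel Lemma 3 of \cite{xiu2010quasi} but have to be re-derived for the random observation grid, which is where the genuine work lies.
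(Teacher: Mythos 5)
Your proposal is correct and amounts to carrying out in detail exactly what the paper's proof does in one line, namely adapting equations (45) and (47) of \cite{xiu2010quasi} by conditioning on $\calg_T$ and exploiting the independence of the noise from $\calg_T$; your conditional variance computations, the vanishing of the $\textnormal{cum}_4[\epsilon]$ contribution in the $\sigma^2$-component, and the diagonal limiting matrix all match the intended adaptation. The one small caveat is that, unlike in Lemma \ref{lemmaM3}, the summands $\zeta_i^{(\beta),n}$ are conditionally martingale differences rather than conditionally independent given $\calg_T$, so the relevant tool is a conditional martingale CLT rather than the conditional Lindeberg--Feller theorem (Theorem 5.12 of \cite{KallenbergFoundation2002}); you name both, and the conditions you verify are the right ones for the martingale version.
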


\begin{proof}

This is an immediate adaptation of (45) and (47) pp.248-249 in \cite{xiu2010quasi} conditioned on $\calg_T$ in lieu of $\sigma(X)$, since $\epsilon$ is independent of $\calg_T$.
\end{proof}

We consider now the general case $B \geq 1$, and accordingly we define for $i \in \{1, \cdots,B\}$ the local QMLE $\widehat{\xi}_{n,(i)} = (\widehat{\sigma}_{n,(i)}^2,\widehat{a}_{n,(i)}^2)$, and $\Psi_{n,(i)}$, $\bar{\Psi}_{n,(i)}$ the score functions on the block $i$ where all quantities are taken in the time interval $(\Tau_{i-1}, \Tau_{i}]$. We also introduce the notation 
$$\boldsymbol{\widehat{\xi}}_n := (\widehat{\sigma}_{n,(1)}^2,\widehat{a}_{n,(1)}^2,\cdots,\widehat{\sigma}_{n,(B)}^2,\widehat{a}_{n,(B)}^2),$$ $\boldsymbol{\Psi}_n := (\Psi_{n,(1)},\cdots,\Psi_{n,(B)})$, and $\boldsymbol{\bar{\Psi}}_n := (\bar{\Psi}_{n,(1)},\cdots,\bar{\Psi}_{n,(B)})$. The next lemma states the limit distribution of the vector $\boldsymbol{\Psi}_n - \boldsymbol{\bar{\Psi}}_n$. Finally we introduce the scaling factors $N_{n,(i)}:=N_n\l(\Tau_i\r)- N_n\l(\Tau_{i-1}\r)$ along with the global scaling matrix $\boldsymbol{\Phi}_n = \textnormal{diag}(N_{n,(1)}^{1/2},N_{n,(1)}, \cdots ,N_{n,(B)}^{1/2},N_{n,(B)}) \in \reels^{2B \times 2B}$.

\begin{lemma*} \label{lemmaCLTPsi}
We have for any $\boldsymbol{\sigma}^2 := (\sigma_{(1)}^2,\cdots,\sigma_{(B)}^2) \in [\underline{\sigma}^2, \overline{\sigma}^2]^{B}$, taking $\boldsymbol{\xi} := (\sigma_{(1)}^2,a_0^2, \cdots,\sigma_{(B)}^2,a_0^2)$, stably in $\calg_T$, the convergence in distribution

\beas
\boldsymbol{\Phi}_n^{1/2}\l\{\boldsymbol{\Psi}_n(\boldsymbol{\xi}) - \boldsymbol{\bar{\Psi}}_n(\boldsymbol{\xi})\r\} \to  \calm\caln\l(0,\l( \begin{matrix}  V_{(1)} & 0 & \cdots & 0 \\ 0 & V_{(2)} & 0 & \vdots \\ \vdots &0&\ddots&\vdots\\0&\cdots&\cdots&V_{(B)}\end{matrix} \r)\r),
\eeas 
where for $i \in \{1, \cdots ,B\}$, $V_{(i)}$ is the two dimensional matrix defined by
\beas  
V_{(i)} := \l( \begin{matrix}  \inv{4a_0}\l(\frac{5\calq_{(i)}}{16\sigma_{(i)}^7\Delta_B^{1/2}}+ \frac{\bar{\sigma}_{i}^2 \sqrt {\Delta_B}}{8\sigma_{(i)}^5} + \frac{\sqrt {\Delta_B}}{16 \sigma_{(i)}^3}\r) & 0 \\ 0 & \inv{2a_0^{4}} + \frac{\textnormal{cum}_4[\epsilon]}{4a_0^8} \end{matrix} \r),
\eeas 
with
$$\Delta_B\bar{\sigma}_{i}^2 := \int_{\Tau_{i-1}}^{\Tau_i } \sigma_s^2ds + \sum_{\Tau_{i-1} < s \leq \Tau_i}\Delta J_s^2,$$
and we recall that
$$\calq_{(i)} = \qtermlocal{\Tau_{i-1}}{\Tau_i }.$$
\end{lemma*}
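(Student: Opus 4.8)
The plan is to reduce the multi-block statement to the single-block computations of Lemmas \ref{lemmaM1}--\ref{lemmaM4}, carried out separately on each block, and then to assemble the $B$ blocks into a single $2B$-dimensional stable limit theorem whose covariance turns out to be block diagonal. The starting point is the decomposition, valid on each block $i$ once every object ($\Omega_{(i)}$, the coefficients $\omega_{(i)}^{p,q}$ and the four martingales $M_{k,(i)}$) is built only from the returns supported on $(\Tau_{i-1},\Tau_i]$ and evaluated at $(\sigma_{(i)}^2,a_0^2)$:
\begin{eqnarray*}
2\Phi_{n,(i)}^{1/2}\l\{\Psi_{n,(i)}-\bar{\Psi}_{n,(i)}\r\} = \Phi_{n,(i)}^{-1/2}\l\{M_{1,(i)}+2M_{2,(i)}+M_{3,(i)}+M_{4,(i)}\r\},
\end{eqnarray*}
where $\Phi_{n,(i)}=\textnormal{diag}(N_{n,(i)}^{1/2},N_{n,(i)})$. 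First I would rerun the proofs of Lemmas \ref{lemmaM1}--\ref{lemmaM4} verbatim, replacing $[0,T]$ by $(\Tau_{i-1},\Tau_i]$, $T$ by $\Delta_B$, $N_n$ by $N_{n,(i)}$ and $\int_0^T$ by $\int_{\Tau_{i-1}}^{\Tau_i}$, the relevant scaling being controlled by $N_{n,(i)}\Delta_n\overset{\proba}{\rightarrow}\int_{\Tau_{i-1}}^{\Tau_i}\alpha_s^{-1}ds$ from (\ref{eqNumberJumps}). This makes $M_{1,(i)}$ negligible and, after the overall factor $\tfrac12$, yields exactly the three summands of the $(1,1)$-entry of $V_{(i)}$: the $M_{2,(i)}$ part gives $\tfrac{5\calq_{(i)}}{64 a_0\sigma_{(i)}^7\Delta_B^{1/2}}$ upon substituting the localized analogue of the bracket $C_T$ of (\ref{convBracketM2}) and the definition of $\calq_{(i)}$, the $M_{3,(i)}$ part gives the $\bar{\sigma}_{i}^2\sqrt{\Delta_B}$ summand, and the $M_{4,(i)}$ part gives the $\sqrt{\Delta_B}$ summand together with the whole $(2,2)$-entry; the within-block off-diagonal entries vanish as in the single-block case.

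The genuinely new step is the joint convergence of the stacked vector with the \emph{block-diagonal} covariance, that is, the asymptotic independence of the $B$ blocks. The structural point is that the local QMLE on block $i$ reads only the returns on $(\Tau_{i-1},\Tau_i]$, so for $i\neq l$ the continuous ($\tilde{X}$-driven) martingales $\tilde{M}_{2,(i)}^{(\sigma^2)}$ and $\tilde{M}_{2,(l)}^{(\sigma^2)}$ are built from \emph{disjoint} families of Brownian increments $\Delta W_j^n$, whence their predictable cross-bracket is zero up to a single return shared at the common boundary, whose contribution is $O(N_{n,(i)}^{-1})$; similarly, conditionally on $\calg_T$, the noise martingales $M_{3,(i)},M_{4,(i)}$ on distinct blocks are functions of disjoint families of i.i.d. variables $\epsilon_{t_k^n}$, hence conditionally independent up to at most one shared boundary noise. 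I would therefore apply Theorem 2-1 of \cite{jacod1997} to the $2B$-dimensional continuous martingale assembled from the $\tilde{M}_{2,(i)}^{(\sigma^2)}$ (and their jump corrections), checking that every cross-block bracket tends to $0$ while each diagonal bracket converges to the $M_2$-part of $V_{(i)}$ as in (\ref{convBracketM2}); this produces $\calg_T$-stable convergence of the $M_2$-contribution to an independent block-diagonal limit. In parallel, the conditional Lindeberg arguments of Lemmas \ref{lemmaM3}--\ref{lemmaM4}, run on the stacked noise terms, give convergence conditionally on $\calg_T$ to an independent block-diagonal Gaussian, and the two are merged through Proposition \ref{propositionConditionalStable} with $\calh=\calg_T$, exactly as for $B=1$.

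The main obstacle I anticipate is the careful bookkeeping of the boundary and jump edge effects. The jump corrections $A_{n,(i)}^{\pm}$ attached to a jump time $\tau_p$ are supported on the $O(L_n)$ observations surrounding $\tau_p$; since $J$ has finite activity and $L_n/N_n=N_n^{-1/2+\delta}\to 0$, almost surely for $n$ large every jump lies in the interior of a single block with its entire $L_n$-neighbourhood inside that block, so no jump straddles a boundary and each jump's variance is correctly allocated to its own $\calq_{(i)}$. The residual boundary terms for the continuous and the noise parts (the single return and single noise variable shared by two adjacent blocks) are dominated using the same exponential decay of the coefficients $\frac{\partial\omega^{p,q}}{\partial\sigma^2}$ that controlled $R_n^{(1)}$ in (\ref{eqR1}) within Lemma \ref{lemmaM2}. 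Combining the block-diagonal joint stable CLT for the stacked martingales with the negligibility of the $M_{1,(i)}$ terms and the factor $\tfrac12$ then delivers the announced limit $\calm\caln\l(0,\textnormal{diag}(V_{(1)},\dots,V_{(B)})\r)$.
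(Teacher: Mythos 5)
Your proposal is correct and follows essentially the same route as the paper: the same block-wise decomposition $2\Phi_{n,(i)}^{1/2}\{\Psi_{n,(i)}-\bar{\Psi}_{n,(i)}\}=\Phi_{n,(i)}^{-1/2}\{M_{1,(i)}+2M_{2,(i)}+M_{3,(i)}+M_{4,(i)}\}$, the localized versions of Lemmas \ref{lemmaM1}--\ref{lemmaM4} for each diagonal block $V_{(i)}$, the merging of the $\calg_T$-measurable part with the conditionally convergent noise part via Proposition \ref{propositionConditionalStable}, and asymptotic independence across blocks from orthogonality of the non-overlapping martingales. The only difference is one of emphasis: the paper dispatches the cross-block orthogonality in a single sentence, whereas you spell out the boundary and jump edge-effect bookkeeping, which is a harmless (and arguably welcome) elaboration of the same argument.
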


\begin{proof}
First, for $i \in \{1, \cdots,B\}$, we define the processes $M_{1,(i)},\cdots, M_{4,(i)}$ following the definitions  (\ref{eqM1})-(\ref{eqM4}) adapted to the time interval $(\Tau_{i-1},\Tau_i]$ of length $\Delta_B$. Accordingly, for $k \in \{1, \cdots ,4\}$, we denote by $\boldsymbol{M}_{k}$ the vector process $(M_{k,(1)}^{(\sigma^2)},M_{k,(1)}^{(a^2)},\cdots,M_{k,(B)}^{(\sigma^2)},M_{k,(B)}^{(a^2)})$, and we note that we have the decomposition
 
\beas 
2\boldsymbol{\Phi}_n^{1/2}\l\{\boldsymbol{\Psi}_n(\xi) - \boldsymbol{\bar{\Psi}}_n(\xi)\r\} = \boldsymbol{\Phi}_n^{-1/2} \l\{\boldsymbol{M}_1(T)+2\boldsymbol{M}_2(T)+\boldsymbol{M}_3(T)+\boldsymbol{M}_4(T) \r\}.
\eeas
For $i\in \{1,\cdots,B\}$, we consider the two terms $M_{3,(i)}(T)$ and $M_{4,(i)}(T)$. By independence of $\epsilon$ with the other processes we deduce that the conditional covariance term between those two processes is null. We use this fact along with the marginal convergences obtained in Lemma \ref{lemmaM3} and Lemma \ref{lemmaM4} to obtain the convergence in law conditioned on $\calg_T$

\beas 
\Phi_{n,(i)}^{-1/2} \l\{M_{3,(i)}(T)+M_{4,(i)}(T) \r\} \to \calm\caln\l(0,\l( \begin{matrix}  \inv{a_0}\l(\frac{\bar{\sigma}_{i}^2 \sqrt {\Delta_B}}{8\sigma_{(i)}^5} + \frac{\sqrt {\Delta_B}}{16 \sigma_{(i)}^3}\r) & 0 \\ 
0 & \frac{2}{a_0^{4}} + \frac{\textnormal{cum}_4[\epsilon]}{a_0^8} \end{matrix} \r)\r), 
\eeas 
where $\Phi_{n,(i)} := \textnormal{diag}(N_{n,(i)}^{1/2},N_{n,(i)})$. Now, by Slutsky's lemma, Lemma \ref{lemmaM1} and Lemma \ref{lemmaM2} we also have the $\calg_T$-stable convergence in distribution  
$$\Phi_{n,(i)}^{-1/2}\l\{M_{1,(i)}(T) + 2M_{2,(i)}(T)\r\}\to \calm\caln\l(0,\l( \begin{matrix}  \frac{5\calq_{(i)}}{16a_0\sigma_{(i)}^7\Delta_B^{3/2}} & 0 \\ 
0 & 0 \end{matrix} \r)\r). $$
Finally, by application of Proposition \ref{propositionConditionalStable} with sub-$\sigma$-field $\calg_T$ since $M_{1,(i)}(T) + 2M_{2,(i)}(T)$ is $\calg_T$-measurable, we deduce the joint $\calg_T$-stable convergence of 
$$\Phi_{n,(i)}^{-1/2}\l(M_{1,(i)}(T) + 2M_{2,(i)}(T), M_{3,(i)}(T)+M_{4,(i)}(T) \r),$$ 
hence the convergence of $\Phi_{n,(i)}^{-1/2}\l(M_{1,(i)}(T) + 2M_{2,(i)}(T) + M_{3,(i)}(T)+M_{4,(i)}(T) \r)$  toward a mixed normal distribution of random variance $4V_{(i)}$. Finally, as blocks are non overlapping, we deduce that for any $k,l \in \{1, \cdots,4\}$, for any $i \neq j \in \{1, \cdots ,B\}$ the martingales $M_{k,(i)}$ and $M_{l,(j)}$ are orthogonal so that we have automatically the joint convergence of $\boldsymbol{\Phi}_n^{1/2}\l\{\boldsymbol{\Psi}_n(\xi) - \boldsymbol{\bar{\Psi}}_n(\xi)\r\}$ to a mixed normal with block diagonal random variance matrix whose submatrices are $V_{(1)},\cdots,V_{(B)}$, and we are done.  

\end{proof}

Finally, we derive a central limit theorem for $\boldsymbol{\widehat{\xi}}_n$ to the limit $\boldsymbol{\xi}_0 := (\bar{\sigma}_{1}^2,a_0^2,\cdots,\bar{\sigma}_{B}^2,a_0^2)$, and as a byproduct Theorem \ref{CLTQMLEjumps} (and Theorem \ref{CLTQMLE}).

\begin{theorem*}
We have $\calg_T$-stably in law that
\beas
\boldsymbol{\Phi}_n^{1/2}\l\{\boldsymbol{\widehat{\xi}}_n-\boldsymbol{\xi}_0 \r\}\to  \calm\caln\l(0,\l( \begin{matrix}  V_{(1)}^{'} & 0 & \cdots & 0 \\ 0 & V_{(2)}^{'} & 0 & \vdots \\ \vdots &0&\ddots&\vdots\\0&\cdots&\cdots&V_{(B)}^{'}\end{matrix} \r)\r),
\eeas 
where for $i \in \{1, \cdots,B\}$, $V_{(i)}^{'}$ is the two dimensional matrix defined by
\beas  
V_{(i)}^{'} := 
\l( \begin{matrix}  a_0\l(\frac{5\calq_{(i)}}{\bar{\sigma}_{i} \Delta_B^{3/2}} + \frac{3\bar{\sigma}_{i}^3}{\sqrt {\Delta_B}}\r) & 0 \\ 
0 & 2a_0^{4} + \textnormal{cum}_4[\epsilon]  \\
\end{matrix} \r).
\eeas 
In particular, Theorem \ref{CLTQMLEjumps} (and Theorem \ref{CLTQMLE}) hold.

\end{theorem*}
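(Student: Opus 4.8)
The plan is to treat this final statement as the standard wrap-up of an M-estimation argument, assembling the three structural results already in hand: the block-wise consistency of Theorem \ref{thmConsistencyH1}, the convergence of the scaled Hessian in Lemma \ref{lemmaFisherConsistency}, and the $\calg_T$-stable central limit theorem for the score difference in Lemma \ref{lemmaCLTPsi}. Since each local QMLE $\widehat{\xi}_{n,(i)}$ maximizes its own block quasi-likelihood $l_{n,(i)}$, the first-order conditions read $\boldsymbol{\Psi}_n(\boldsymbol{\widehat{\xi}}_n)=0$, the stacked score being block-diagonal in the parameters. A componentwise mean-value expansion of $\partial_\xi l_{n,(i)}$ around $\boldsymbol{\xi}_0$, premultiplied by $-\boldsymbol{\Phi}_n^{-1/2}$ and using $\boldsymbol{\Phi}_n^{1/2}\boldsymbol{\Psi}_n=-\boldsymbol{\Phi}_n^{-1/2}\partial_\xi l_n$, produces the representation
$$\boldsymbol{\Phi}_n^{1/2}(\boldsymbol{\widehat{\xi}}_n-\boldsymbol{\xi}_0) = -\boldsymbol{H}_n(\boldsymbol{\bar{\xi}}_n)^{-1}\,\boldsymbol{\Phi}_n^{1/2}\boldsymbol{\Psi}_n(\boldsymbol{\xi}_0),$$
where $\boldsymbol{\bar{\xi}}_n$ is an intermediate point on the segment joining $\boldsymbol{\widehat{\xi}}_n$ and $\boldsymbol{\xi}_0$, and $\boldsymbol{H}_n$ is the block-diagonal stack of the scaled Hessians. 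The block-diagonal structure is what eventually yields a block-diagonal limiting variance: each block contributes its own independent $2\times2$ system.

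Next I would decompose the score as
$$\boldsymbol{\Phi}_n^{1/2}\boldsymbol{\Psi}_n(\boldsymbol{\xi}_0) = \boldsymbol{\Phi}_n^{1/2}\{\boldsymbol{\Psi}_n(\boldsymbol{\xi}_0)-\boldsymbol{\bar{\Psi}}_n(\boldsymbol{\xi}_0)\}+\boldsymbol{\Phi}_n^{1/2}\boldsymbol{\bar{\Psi}}_n(\boldsymbol{\xi}_0).$$
The first term converges $\calg_T$-stably to $\calm\caln(0,\textnormal{diag}(V_{(1)},\ldots,V_{(B)}))$ by Lemma \ref{lemmaCLTPsi}. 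For the Hessian factor, $\boldsymbol{\bar{\xi}}_n$ is squeezed between two consistent sequences, hence is itself consistent to $\boldsymbol{\xi}_0$, so Lemma \ref{lemmaFisherConsistency} applied block by block gives $\boldsymbol{H}_n(\boldsymbol{\bar{\xi}}_n)\overset{\proba}{\rightarrow}\textnormal{diag}(\Gamma_{(1)},\ldots,\Gamma_{(B)})$, where $\Gamma_{(i)}$ is the block analog of $\Gamma(\xi_0)$ with $T,\bar{\sigma}_0$ replaced by $\Delta_B,\bar{\sigma}_i$. Because each $\Gamma_{(i)}$ is $\calg_T$-measurable and invertible, Slutsky's lemma and the continuous mapping theorem upgrade the stable convergence of the score to
$$\boldsymbol{\Phi}_n^{1/2}(\boldsymbol{\widehat{\xi}}_n-\boldsymbol{\xi}_0)\overset{L_{\mathcal{G}}}{\rightarrow}\calm\caln\left(0,\textnormal{diag}\left(\Gamma_{(1)}^{-1}V_{(1)}\Gamma_{(1)}^{-1},\ldots,\Gamma_{(B)}^{-1}V_{(B)}\Gamma_{(B)}^{-1}\right)\right),$$
and a direct computation, substituting $\sigma_{(i)}^2=\bar{\sigma}_i^2$ at $\boldsymbol{\xi}_0$, verifies $\Gamma_{(i)}^{-1}V_{(i)}\Gamma_{(i)}^{-1}=V_{(i)}'$ (e.g. the $(1,1)$ entry $(8a_0\bar{\sigma}_i^3/\sqrt{\Delta_B})^2(V_{(i)})_{11}$ collapses to $a_0(5\calq_{(i)}/(\bar{\sigma}_i\Delta_B^{3/2})+3\bar{\sigma}_i^3/\sqrt{\Delta_B})$, and the $(2,2)$ entry to $2a_0^4+\textnormal{cum}_4[\epsilon]$).

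The main obstacle is the negligibility of the deterministic bias, $\boldsymbol{\Phi}_n^{1/2}\boldsymbol{\bar{\Psi}}_n(\boldsymbol{\xi}_0)\overset{\proba}{\rightarrow}0$. This is the one ingredient not delivered by a stated lemma: since $\boldsymbol{\xi}_0$ carries the block-\emph{average} volatility $\bar{\sigma}_i^2$ rather than the true local integrated volatilities, the matrices $\Omega(\boldsymbol{\xi}_0)$ and $\Sigma_0^c+\Sigma_0^d$ agree on each block total but differ on their diagonals, so $\boldsymbol{\bar{\Psi}}_n(\boldsymbol{\xi}_0)$ is precisely the misspecification bias generated by within-block time variation of $\sigma$. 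The consistency argument only gives $\boldsymbol{\bar{\Psi}}_n\to\Psi_\infty$ with $\Psi_\infty(\boldsymbol{\xi}_0)=0$, so a genuine rate is required here: one must show the first component is $o_\proba(N_{n,(i)}^{1/4})$. I would establish this by writing the bias as a trace $\textnormal{Tr}(\Omega^{-1}(\partial_{\sigma^2}\Omega)\Omega^{-1}D)$ against the centred diagonal perturbation $D=\Sigma_0^c+\Sigma_0^d-\Omega(\boldsymbol{\xi}_0)$, and exploiting the rapid off-diagonal decay of $\Omega^{-1}$ together with the It\^o-regularity of $\sigma$, following the lines of equations (38) and (40) in \cite{xiu2010quasi} but tracking orders rather than mere convergence.

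Finally, I would read off the two stated marginals. Writing $\tilde{Q}-T\bar{\sigma}_0^2=\sum_{i=1}^B\Delta_B(\widehat{\sigma}_{n,(i)}^2-\bar{\sigma}_i^2)$ and converting the block scaling via $N_n/N_{n,(i)}\overset{\proba}{\rightarrow}R_{(i)}$, the orthogonality of the blocks (non-overlapping martingale increments) makes the variances add, producing $AVAR_B^{(QMLE,rob)}=\sum_i R_{(i)}^{1/2}AVAR_{[\Tau_{i-1},\Tau_i]}^{(QMLE,rob)}$; the identical computation for $N_n^{1/2}(B^{-1}\sum_i\widehat{a}_{n,(i)}^2-a_0^2)$ yields $AVAR_B^{(QMLE,\epsilon)}$, while the vanishing off-diagonal of each $V_{(i)}'$ gives the asymptotic independence of the two coordinates. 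This is exactly Theorem \ref{CLTQMLEjumps}, and specializing to deterministic equispaced times with no jumps ($\alpha\equiv\textnormal{const}$, $R_{(i)}=B$, $J=0$) recovers Theorem \ref{CLTQMLE}.
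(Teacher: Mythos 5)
Your proposal is correct and follows essentially the same route as the paper's proof: a Taylor expansion of the first-order conditions around $\boldsymbol{\xi}_0$, the $\calg_T$-stable CLT for $\boldsymbol{\Phi}_n^{1/2}\{\boldsymbol{\Psi}_n(\boldsymbol{\xi}_0)-\boldsymbol{\bar{\Psi}}_n(\boldsymbol{\xi}_0)\}$ from Lemma \ref{lemmaCLTPsi}, blockwise Hessian consistency from Lemma \ref{lemmaFisherConsistency}, negligibility of $\boldsymbol{\Phi}_n^{1/2}\boldsymbol{\bar{\Psi}}_n(\boldsymbol{\xi}_0)$ by adapting (38) and (40) of \cite{xiu2010quasi}, and the same $\Gamma_{(i)}^{-1}V_{(i)}\Gamma_{(i)}^{-1}=V_{(i)}'$ algebra followed by the linear map reading off the marginals. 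The only cosmetic difference is that the paper premultiplies by $\boldsymbol{\Gamma}(\boldsymbol{\xi}_0)^{-1}$ and shows $\boldsymbol{\Gamma}(\boldsymbol{\xi}_0)^{-1}\boldsymbol{H}_n(\zeta_n)\overset{\proba}{\rightarrow}\I$ rather than inverting $\boldsymbol{H}_n$ directly, which is immaterial.
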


\begin{proof}
First, note that we can easily extend Lemma \ref{lemmaCLTPsi} to get a central limit theorem at the point $\boldsymbol{\xi}_0 = (\bar{\sigma}_{1}^2,a_0^2,\cdots,\bar{\sigma}_{B}^2,a_0^2)$ for $\boldsymbol{\Phi}_n^{1/2}\l\{\boldsymbol{\Psi}_n(\boldsymbol{\xi}_0) - \boldsymbol{\bar{\Psi}}_n(\boldsymbol{\xi}_0)\r\}$ by a generalization of Slutsky's Lemma for stably convergent sequences (see e.g. Theorem 3.18 (b) in \cite{hausler2015stable}), where now the submatrices $V_{(i)}$ in the asymptotic variance of the mixed normal distribution have the form

\beas
 V_{(i)} = \l( \begin{matrix}  \inv{64a_0}\l(\frac{5\calq_{(i)}}{\bar{\sigma}_{i}^7 \Delta_B^{1/2}}+ \frac{ 3\sqrt {\Delta_B}}{\bar{\sigma}_{i}^3}\r) & 0 \\ 
0 & \inv{2a_0^{4}} + \frac{\textnormal{cum}_4[\epsilon]}{4a_0^8}  \\
\end{matrix} \r).
\eeas 
To derive the CLT for the $2B$-dimensional estimator $\boldsymbol{\widehat{\xi}}_n$, we follow the standard procedure and expand the score function around $\boldsymbol{\xi}_0$. Thus, starting from the first order conditions on the score functions, we have

\bea 
0 = \boldsymbol{\Psi}_n\l(\boldsymbol{\widehat{\xi}}_n \r) = \boldsymbol{\Psi}_n(\boldsymbol{\xi}_0) + \boldsymbol{\Phi}_n^{-1/2} \boldsymbol{H}_n(\zeta_n)\boldsymbol{\Phi}_n^{1/2}\left(\boldsymbol{\widehat{\xi}}_n - \boldsymbol{\xi}_0\r),
\label{eqTaylorScore}
\eea 
for some $\zeta_n \in \big[\boldsymbol{\xi}_0, \boldsymbol{\widehat{\xi}}_n\big]$, and where $\boldsymbol{H}_n$ is the block diagonal matrix with submatrices $H_{n,(1)},\cdots, H_{n,(B)}$, and for $i \in \{1, \cdots,B\}$, $H_{n,(i)}$ is the scaled Hessian matrix of the log-likelihood field on block $i$, defined as in (\ref{defFisherMatrix}) adapted to the time interval $(\Tau_{i-1},\Tau_i]$. In the same way, we define $\boldsymbol{\Gamma}(\boldsymbol{\xi_0})$ as the block diagonal matrix whose subcomponents are $\Gamma_{(i)}(\xi_{0,(i)})$ where 
\beas 
\Gamma_{(i)}(\xi_{0,(i)}) := \l( \begin{matrix}  \frac{\sqrt {\Delta_B}}{8a_0 \bar{\sigma}_{i}^3}  & 0\\ 
0 & \inv{2a_0^4} \end{matrix} \r),
\eeas 
and $\xi_{0,(i)} := (\bar{\sigma}_{i}^2,a_0^2)$. We can rewrite (\ref{eqTaylorScore}) as 
\bea 
 \boldsymbol{\Gamma}(\boldsymbol{\xi}_0)^{-1}\boldsymbol{H}_n(\zeta_n)\boldsymbol{\Phi}_n^{1/2}\left(\boldsymbol{\widehat{\xi}}_n - \boldsymbol{\xi}_0\r) = - \boldsymbol{\Gamma}(\boldsymbol{\xi}_0)^{-1}  \boldsymbol{\Phi}_n^{1/2}\l\{\boldsymbol{\Psi}_n\l(\boldsymbol{\xi}_0\r) -\boldsymbol{\bar{\Psi}}_n\l(\boldsymbol{\xi}_0 \r) \r\} + \boldsymbol{\Phi}_n^{1/2} \boldsymbol{\bar{\Psi}}_n(\boldsymbol{\xi}_0).
 \label{eqTaylorReforumulate}
\eea 
Note that, again, by a direct adaptation of (38) and (40) in \cite{xiu2010quasi} (pp. 247-248) to the case of an irregular grid with $\pi_T^n \overset{\proba}{\rightarrow} 0$ and on the interval $(\Tau_{i-1},\Tau_i]$ we automatically get that each $\Phi_{n,(i)}^{1/2} \bar{\Psi}_{n,(i)}(\xi_{0,(i)}) = o_\proba(1)$ so that $\boldsymbol{\Phi}_n^{1/2} \boldsymbol{\bar{\Psi}}_n(\boldsymbol{\xi}_0)$ is negligible. Now, $\boldsymbol{\widehat{\xi}}_n$ is consistent by application of Theorem \ref{thmConsistencyH1} to each $\widehat{\xi}_{n,(i)}$ on block $i$. Therefore, $\zeta_n \overset{\proba}{\rightarrow} \boldsymbol{\xi}_0$, and by virtue of Lemma \ref{lemmaFisherConsistency} applied to each submatrix $H_{n,(i)}$, we conclude on the one hand that $\boldsymbol{\Gamma}(\boldsymbol{\xi}_0)^{-1}\boldsymbol{H}_n(\zeta_n) \overset{\proba}{\rightarrow} \I$ where $\I \in \reels^{2B\times2B}$ is the identity matrix, and on the other hand by Slutsky's Lemma and the stable CLT for $\boldsymbol{\Phi}_n^{1/2}\l\{\boldsymbol{\Psi}_n\l(\boldsymbol{\xi}_0\r) -\boldsymbol{\bar{\Psi}}_n\l(\boldsymbol{\xi}_0 \r) \r\}$ that the left-hand side of (\ref{eqTaylorReforumulate}) tends $\calg_T$-stably in law to a mixed normal distribution of block diagonal random variance matrix with submatrices of the form

\beas 
\l( \begin{matrix}  \frac{64a_0^2 \bar{\sigma}_{i}^6}{\Delta_B} &  0\\ 
0 & 4a_0^8  \\\end{matrix} \r) &\times& \l( \begin{matrix}  \inv{64a_0}\l(\frac{5\calq_{(i)}}{\bar{\sigma}_{i}^7 \Delta_B^{1/2}}+ \frac{ 3\sqrt {\Delta_B}}{\bar{\sigma}_{i}^3}\r) &  0\\ 
0 & \inv{2a_0^{4}} + \frac{\textnormal{cum}_4[\epsilon]}{4a_0^8}  \\\end{matrix} \r)\\
&=& \l( \begin{matrix}  a_0\l(\frac{5\calq_{(i)}}{\bar{\sigma}_{i} \Delta_B^{3/2}} + \frac{3\bar{\sigma}_{i}^3}{\sqrt {\Delta_B}}\r) &  0\\ 
0 & 2a_0^{4} + \textnormal{cum}_4[\epsilon]  \\\end{matrix} \r)\\
&=&V_{(i)}^{'},
\eeas 
and thus we have shown the CLT for $\boldsymbol{\widehat{\xi}}_n$. Now to get Theorem \ref{CLTQMLEjumps}, it is sufficient to notice that 
\begin{eqnarray}
\label{last}
\left(\begin{matrix} N_n^{1/4} \l(\tilde{Q} - T\bar{\sigma}_0^2 \r) \\ 
N_n^{1/2} \l( B^{-1} \sum_{i=1}^B \widehat{a}_{n,(i)}^2 - a_0^2 \r) 
\end{matrix}\right) = \Phi_n A \boldsymbol{\Phi}_n^{-1/2} \boldsymbol{\Phi}_n^{1/2} \l(\boldsymbol{\widehat{\xi}}_n - \boldsymbol{\xi}_0 \r),
\end{eqnarray}
where $A \in \reels^{2 \times 2B}$ has the form 
$$A = \l(\begin{matrix}  \Delta_B&0&\cdots&\Delta_B&0 \\ 0&B^{-1}&\cdots&0&B^{-1}\end{matrix}\r),$$
and from here we easily conclude that the left-hand side of (\ref{last}) admits a CLT with the claimed asymptotic variance. Finally Theorem \ref{CLTQMLE} is a particular case of Theorem \ref{CLTQMLEjumps}.


\end{proof}

\subsection{Proof of Theorem \ref{RKth}}
Some details of the proof are omitted as the techniques used are very close to the QMLE case. We need to introduce some notation. We consider the block constant processes defined as
\begin{eqnarray*}
\tilde{c}_t  =  c_i &\text{ where } &\Tau_{i-1} \leq t < \Tau_{i},\\
\rho_t  =  \rho_{\Tau_{i-1}, \Tau_i} &\text{ where } &\Tau_{i-1} \leq t < \Tau_{i}.\\
\xi_t^2  =  \xi_{\Tau_{i-1}, \Tau_i}^2 &\text{ where } &\Tau_{i-1} \leq t < \Tau_{i}.
\end{eqnarray*}
Condition (\ref{RKth2}) in Theorem \ref{RKth} can be re-expressed as
\begin{eqnarray*}
n^{1/4} \l( \tilde{K} - \int_0^T \sigma_u^2 du \r) &  \overset{L_X}{\rightarrow} & \calm\caln \l(0, 4 B^{1/2} \Delta_B \int_0^T \sigma_u^4 (\tilde{c}_u k_{\bullet}^{0,0} + \tilde{c}_u^{-1} 2 k_{\bullet}^{1,1} \rho_u \xi_u^2 + \tilde{c}_u^{-3} k_{\bullet}^{2,2} \xi_u^4 )du \r).
\end{eqnarray*}
We also define the kernels for general processes $A_t$ and $C_t$ as
$$K(A,C) = \gamma_0 (A,C) + \sum_{h=1}^{H} k \l( \frac{h-1}{H} \r) \l( \gamma_h (A,C) + \gamma_{-h} (A,C) \r),$$ 
where the realized autocovariance is defined as
$$\gamma_h (A,C) = \sum_{j=1}^{n} (A_{\Delta j} - A_{\Delta (j-1)}) (C_{\Delta (j - h)} - C_{\Delta (j-h-1)}),$$
with $h = -H, \cdots, -1, 0, 1 , \cdots, H$. We further define $K_i (A,C)$ the estimate on the $i$th block and we aggregate the local estimates to define the adapted version of $K(A,C)$ as 
$$\tilde{K}(A,C) = \sum_{i=1}^{B} K_i(A,C).$$ 
We follow the same line of reasoning as in the proof of Theorem 4 (p. 1530, \cite{barndorff2008designing}). Accordingly, we just need to show an adapted version of Theorem 3 (p. 1492). Theorem \ref{RKth} will then follow from Lemma 1 (p. 1523) and Proposition \ref{propositionConditionalStable}. From now on, we aim to show the adapted version of Theorem 3 (p. 1492, \cite{barndorff2008designing}) which is stated in what follows.
\begin{RKth3} (Adapted version of Theorem 3 in \cite{barndorff2008designing}) We assume that $H = c n^{1/2}$. As $n \rightarrow \infty$ we have that
\begin{eqnarray}
\label{th31} n^{1/4} \l( \tilde{K}(X,X) - \int_0^T \sigma_u^2 du \r) & \overset{L_X}{\rightarrow} & \calm\caln \l(0, 4 k_{\bullet}^{0,0} B^{1 - \alpha} \Delta_B \int_0^T \sigma_u^4 \tilde{c}_u^{-1} du \r),\\
\label{th32} n^{1/4} ( \tilde{K} (X,U) + \tilde{K} (U,X) ) & \overset{L_X}{\rightarrow} &  \calm\caln \l(0, 8 \omega^2 k_{\bullet}^{1,1} B^{\alpha} \int_0^T \sigma_u^2 \tilde{c}_u du \r).
\end{eqnarray}
In addition, when $k'(0)^2 + k'(1)^2 = 0$, the asymptotic variance of $\tilde{K} (U)$ is equivalent to
\begin{eqnarray}
\label{th34} 4 \omega^4 \l( n^{-1/4} B^{3 \alpha - 1} k_{\bullet}^{2,2} \sum_{i=1}^{B} c_i^{3}  + (B^{1/2}/(n^{1/2} m))  \big\{k_{\bullet}^{1,1} \sum_{i=1}^{B} c_i + \sum_{i=2}^{B} \tilde{k}_{\bullet}^{1,1} (c_i, c_{i-1}) \sqrt{c_i c_{i-1}} \big\} \r), 
\end{eqnarray}
where $\tilde{k}_{\bullet}^{1,1} (c_1, c_{2}) = \int_0^1 k'(x) k'(ax) dx$ with $a = \min(c_1, c_2)/ \max(c_1, c_2)$ and if $ 1/m \rightarrow 0$
\begin{eqnarray}
\label{th35}
 \sqrt{\frac{\tilde{H}^3}{n}} \tilde{K} (U,X)  & \overset{L_X}{\rightarrow} &  N \l(0, 4 \omega^4 k_{\bullet}^{2,2} B^{3 \alpha - 1} \sum_{i=1}^{B} c_i^3 \r).
\end{eqnarray}
\end{RKth3}
To show (\ref{th31}), we consider the continuous interpolated martingale $M_t = \sqrt{\frac{n}{\tilde{H}}} \l( \tilde{K}(X,X) - \int_0^t \sigma_u^2 du \r)$. As for the QMLE, we aim to use Theorem 2.1 (\cite{jacod1997}). To show condition (2.9), i.e. that $[M, M]_t \overset{\proba}{\rightarrow} 4 k_{\bullet}^{0,0} B^{1 - \alpha} \Delta_B \int_0^t \sigma_u^4 \tilde{c}_u^{-1} du$, we express $M_t$ as $\sum_{i=1}^B M_t^{(i)}$, where $M_t^{(i)}$ are such that $M_t^{(i)} = 0$ for $t \in [0,\Tau_{i-1}]$, $M_t^{(i)} = M_t$ on $[\Tau_{i-1}, \Tau_i]$ and $M_t^{(i)} = M_{\Tau_i}$ for $t \in [\Tau_i, T]$. We can easily show that 
\begin{eqnarray}
\label{MM} [ M, M]_t = \sum_{i=1}^{B} [M^{(i)}, M^{(i)}]_{t}.
\end{eqnarray}
The $K(X)$ case in the proof of Theorem 3 (p. 1528, \cite{barndorff2008designing}) is based on a martingale theorem which shows that 
\begin{eqnarray}
\label{MiMi}
[M^{(i)}, M^{(i)}]_{t} \overset{\proba}{\rightarrow} 4 k_{\bullet}^{0,0} B^{1 - \alpha} (t - \Tau_{i-1}) \int_{\Tau_{i-1} \wedge t}^{\Tau_{i} \wedge t} \sigma_u^4 \tilde{c}_u^{-1} du.
\end{eqnarray}
In view of (\ref{MM}) and (\ref{MiMi}), we have thus shown that $[M, M]_t \overset{\proba}{\rightarrow} 4 k_{\bullet}^{0,0} B^{1 - \alpha} \Delta_B \int_0^t \sigma_u^4 \tilde{c}_u^{-1} du$.
We show condition (2.10), i.e. that $\langle M, W \rangle_t \overset{\proba}{\rightarrow} 0$,
by a straightforward calculation on the Brownian motion increments.
Also, condition (2.11) is satisfied because $M$ is continuous. Finally we show that condition (2.12) hold, i.e. for any bounded martingale $\overline{N}$ orthogonal to $W$ we have that 
\bea 
\langle M, \overline{N} \rangle_t = 0.
\eea 
This can be proven with the same line of reasoning as for Lemma \ref{lemmaM2} for the QMLE.

\smallskip
The proof for (\ref{th32}) can adapt directly from the cross-term $K(X,U) + K(U,X)$ part in the proof of Theorem 3 (p. 1528, \cite{barndorff2008designing}). Indeed, on each block we have the convergence discussed on p. 1525, and it is clear that as the block terms are uncorrelated to each other conditioned on $X_t$, we obtain the convergence of the vector block estimates, with correlation limit between two different block terms equal to 0.

\smallskip
We aim to show now (\ref{th34}). In view of (A.3) on p. 1528 in \cite{barndorff2008designing}, we have 
$$\tilde{K} (U) = \sum_{i=1}^{B} \Big\{ \underbrace{- \sum_{h=1}^{H^{(i)}} (w_{h+1}^{(i)} - 2 w_{h}^{(i)} + w_{h-1}^{(i)} ) V_h^{(i)}}_{A_i} \underbrace{- \sum_{h=1}^{H^{(i)}} (w_{h+1}^{(i)} - w_{h-1}^{(i)}) R_h^{(i)}}_{C_i} \Big\},$$
where $w_h^{(i)} = k (\frac{h-1}{H^{(i)}})$ and $V_h^{(i)} = \sum_{j=(i-1) n/B + 1 }^{in/B} (U_{t_{j}} U_{t_{j-h}} +  U_{t_j} U_{t_{j+h}} + U_{t_{j-1}} U_{t_{j-1-h}} + U_{t_{j-1}} U_{t_{j-1+h}})$ and $C_i$ is due to end-effects. We have that $A_i \overset{\mathcal{L}}{\rightarrow} A_i^{(l)}$ and  $C_i \overset{\mathcal{L}}{\rightarrow} C_i^{(l)}$ for some normally distributed variables $A_i^{(l)}$ and $C_i^{(l)}$ from the proof on p. 1529 in \cite{barndorff2008designing}. Actually, we can show that the convergence still holds for the random vector $(A_1, \cdots, A_B, C_1, \cdots, C_B)$ and thus we have that $\tilde{K} (U) \overset{\mathcal{L}}{\rightarrow} N(0,V)$ where $V$ is equal to 
\begin{eqnarray}
\label{sumcov}
\sum_{1 \leq i,j \leq B} \mathrm{Cov} (A_i^{(l)}, A_j^{(l)}) + \mathrm{Cov} (C_i^{(l)}, C_j^{(l)}).
\end{eqnarray}
 We have that $n^{1/2} \sum_{i=1}^{B} \var (A^{(i)}) = B^{1/2} k_{\bullet}^{2,2} \sum_{i=1}^{B} c_i^{3}$, which shows the convergence to the first term in (\ref{th34}). The second term is obtained as 
$$(B^{1/2}/(n^{1/2} m))^{-1} \sum_{i=1}^B \var (C_i) + 2 \sum_{i=2}^B \mathrm{Cov} (A_i, A_{i-1}) = \big\{k_{\bullet}^{1,1} \sum_{i=1}^{B} c_i + \sum_{i=2}^{B} \tilde{k}_{\bullet}^{1,1} (c_i, c_{i-1}) \sqrt{c_i c_{i-1}} \big\}.$$
The other terms in (\ref{sumcov}) go to 0, thus we have shown (\ref{th34}). The convergence (\ref{th35}) is obtained as a straightforward consequence of (\ref{th34}).

\subsection{Proof of Theorem \ref{RKthjumps}}
The proof adding jumps and stochastic observation times follows the same line of reasoning as for the QMLE case.

\subsection{Proof of Corollary \ref{RKcor} and Corollary \ref{QMLEcor}}
By Slutsky's Lemma, both corollaries will be proved if we have the consistency of the AVAR estimators. This is a consequence of the consistency of the estimators $\widehat{\int_{\Tau_{i-1}}^{\Tau_i} \sigma_u^2 du}$ and $\widehat{\int_{\Tau_{i-1}}^{\Tau_i} \sigma_u^4 du}$ by Theorem 3.1 and Remark 4 in \cite{jacod2009microstructure} along with the consistency of $\widehat{a}^2$ by, e.g., (21) in \cite{zhang2005tale}. 
\subsection{Proof of Proposition \ref{propAVARRK}}

 $AVAR_{[\Tau_{i-1}, \Tau_i]}^{(RK,c_i^*,2)}$ takes on the form
\beas  
AVAR_{[\Tau_{i-1}, \Tau_i]}^{(RK,c_i^*,2)} =  a_0\sqrt{B} \l( \Delta_B \int_{T_{i-1}}^{T_i} \sigma_u^4 du \r)^{3/4} g (\rho_{\Tau_{i-1},\Tau_i}).
\eeas
In view of (\ref{smsp}), we easily obtain $0 < \underline{\rho} \leq \rho_{T_{i-1},T_i} \leq 1$ where $\underline{\rho} = \frac{\underline{\sigma}^2}{\overline{\sigma}^2}$. This gives us the estimate $8 \leq g (\rho_{\Tau_{i-1},\Tau_i}) \leq \overline{g} < \infty$ for some $\overline{g}$.

Let us define for $B \in \naturels$, $B \geq 1$, the random set $J_B := \{i \in \{ 1, \cdots, B \} | \sigma \text{ jumps on } (\Tau_{i-1}, \Tau_i] \}$. Because the jumps in $\sigma$ are of finite activity, almost surely the cardinal of $J_B$, defined as $|J_B|$, tends to a finite value. Thus we can get rid of the terms $AVAR_{[\Tau_{k-1}, \Tau_k]}^{(RK,c_i^*,2)}$ for which $k$ is contained into $J_B$ because 
\beas 
\sum_{i \in J_B} AVAR_{[\Tau_{i-1}, \Tau_i]}^{(RK,c_i^*,2)} \leq |J_B| \Delta_B T^{1/2}a_0 \overline{\sigma}^3 \overline{g} \overset{a.s.}{\rightarrow} 0,
\eeas
and similarly
\beas 
g(1) a_0 T^{1/2}\sum_{i \in J_B} \int_{\Tau_{i-1}}^{\Tau_i} \sigma_u^3 du \overset{a.s.}{\rightarrow} 0.
\eeas
Thus, the proposition will be proved if we show
\beas 
\sum_{i \notin J_B}{AVAR_{[\Tau_{i-1}, \Tau_i]}^{(RK,c_i^*,2)}} - g (1) a_0 T^{1/2}\sum_{i \notin J_B} \int_{T_{i-1}}^{T_i} \sigma_u^3 du \overset{a.s.}{\rightarrow} 0.
\eeas

As the continuous part of $\sigma$ is assumed to be an It\^{o} process with bounded components, some calculation shows that for any $p >0 ,q \geq 1$, and uniformly in $i \notin J_B$ we have the following expansion
\beas 
\int_{\Tau_{i-1}}^{\Tau_i}{\sigma_u^p du} = \sigma_{T_{i-1}}^p\Delta_B+ O_{\mathbb{L}^q}(\Delta_B^{3/2}),   
\eeas
where $A= O_{\mathbb{L}^q}(C)$, $C > 0$ means that $\esp \big| \frac{A}{C} \big|^q$ is bounded. Thus, using again (\ref{smsp}), we also obtain the expansions $\rho_{T_{i-1},T_i} = 1 + O_{\mathbb{L}^q}(\Delta_B^{1/2})$, $g(\rho_{\Tau_{i-1},\Tau_i}) = g(1) + O_{\mathbb{L}^q}(\Delta_B^{1/2})$, and $(\Delta_B \int_{T_{i-1}}^{T_i} \sigma_u^4 du )^{3/4} = \Delta_B^{1/2} \int_{T_{i-1}}^{T_i} \sigma_u^3 du  +O_{\mathbb{L}^q}(\Delta_B^{2})$ to get finally the estimate
\beas 
AVAR_{[\Tau_{i-1}, \Tau_i]}^{(RK,c_i^*,2)} = g(1) a_0 T^{1/2} \int_{T_{i-1}}^{T_i} \sigma_u^3 du + O_{\mathbb{L}^q}(\Delta_B^{3/2})
\eeas 
uniformly in $i \notin J_B$. At this stage we have thus proved that
\beas 
\sum_{i \notin J_B}{AVAR_{[\Tau_{i-1}, \Tau_i]}^{(RK,c_i^*,2)}} - g(1) a_0 T^{1/2}\sum_{i \notin J_B} \int_{T_{i-1}}^{T_i} \sigma_u^3 du = O_{\mathbb{L}^q}(\Delta_B^{1/2}).
\eeas 
To get the almost sure convergence to 0, we define $Y_B$ as the left hand side of the previous equality and note that $\esp \sum_{B=1}^{+\infty}{|Y_B|^q} < +\infty$ for any $q >2$. This gives us that $\sum_{B=1}^{+\infty}{|Y_B|^q} < +\infty \text{ a.s.}$ and so $|Y_B|^q \overset{a.s.}{\rightarrow} 0$, which completes the proof.

\subsection{Proof of Remark \ref{rklossprop}}
\label{proofrk}
We show the inequality $g(\rho) \kappa^{-1} \geq g(1)$ for any admissible couple $(\rho,\kappa)$. Note that by the domination $\kappa \leq \rho^{1/2}$ obtained on the account of (\ref{RhoKappa}), it is sufficient to show that  the function $f : \rho \to \rho^{-1/2}g(\rho)$ is decreasing on the interval $(0,1]$. We let $p(\rho) = \sqrt{1 + \sqrt{1 + 3 d / \rho^2}}$, and a short calculation shows us that $f'(\rho)$ has the same sign as $p'(\rho)(1-p(\rho)^{-2})$. Therefore, the inequality $p(\rho) \geq 1$ implies that $f$ is decreasing if and only if $p$ is, which is obvious.

\subsection{Proof of Proposition \ref{propAVARQMLE}}
This proof follows the same line of reasoning as for the proof of Proposition \ref{propAVARRK}.

\subsection{Proof of Proposition \ref{RKcorRobust}}

When $J=0$, this is a straightforward adaptation of the proof of Proposition \ref{propAVARRK} using the new estimates for any $q \geq 1$ 
$$ R_{(i)}^{1/2} = \Delta_B^{-1/2} \l(\int_0^T\alpha_s^{-1}ds\r)^{1/2} \alpha_{\Tau_{i-1}} + o_{\mathbb{L}^q}(\Delta_B^{-1/2}),$$
$$ \l(\Delta_B \calq_{(i)}\r)^{3/4} = \Delta_B^{3/2} \sigma_{\Tau_{i-1}}^3 + O_{\mathbb{L}^q}(\Delta_B^2),$$
and 
$$ g(\widetilde{\rho}_{\Tau_{i-1},\Tau_i}) = g(1) + O_{\mathbb{L}^q}\l(\Delta_B^{1/2}\r).$$

\subsection{Proof of Proposition \ref{RKcorRobust2}}
When $J \neq 0$, the situation is fairly different. Let us define the random set 
$$J_B^X := \{i \in \{ 1, \cdots, B \} | X \text{ jumps on } (\Tau_{i-1}, \Tau_i] \}.$$
Since $J$ is of finite activity, by taking $n$ sufficiently large, we may assume that for any $i \in J_B^X$, $X$ jumps exactly once on $(\Tau_{i-1}, \Tau_i]$. Splitting the sum of local variances
\beas 
AVAR_B^{(RK,rob)} &=& \sum_{i \in J_B^X} R_{(i)}^{1/2}AVAR_{[\Tau_{i-1}, \Tau_i]}^{(RK,rob,\widetilde{c}_i^{*})} + \sum_{i \notin J_B^X} R_{(i)}^{1/2}AVAR_{[\Tau_{i-1}, \Tau_i]}^{(RK,rob,\widetilde{c}_i^{*})} \\
&=& I + II,
\eeas 
again by the finite activity property of $J$ we easily deduce from the proof of Proposition \ref{RKcorRobust} that 
$$II  \overset{a.s.}{\rightarrow} 8g(1)a_0\l(\int_0^T\alpha_s^{-1}ds\r)^{1/2} \int_0^T{\alpha_s^{1/2}\sigma_s^3}ds.$$ 
Now we derive the limit of $I$. We write $\tau_1,\cdots,\tau_{\widetilde{N}_J}$ the jump times of $J$ labeled such that for any $i \in J_B^X$, $\Tau_{i-1} < \tau_i \leq \Tau_i$. For any $i \in J_B^X$, we have the estimates 
$$ \bar{\sigma}_{i}^2 \overset{a.s.}{\sim} \Delta_B^{-1} \Delta J_{\tau_i}^2,$$
$$ \calq_{(i)} \overset{a.s.}{\sim} \Delta J_{\tau_i}^2 \l(\sigma_{\tau_i}^2 + \sigma_{\tau_i-}^2 \r),$$
where for the latter expression we used the continuity of $\alpha$ at time $\tau_i$ which is a consequence of the independence of $\alpha$ and $X$. We thus have
$$ \widetilde{\rho}_{\Tau_{i-1}, \Tau_i} \overset{a.s.}{\sim} \Delta_B^{-1/2} |\Delta J_{\tau_i}|\l(\sigma_{\tau_i}^2 + \sigma_{\tau_i-}^2 \r)^{-1/2}.$$
Combined with $g(\rho) \overset{\rho \to +\infty}{\sim}  \frac{16}{3} \sqrt{\rho k_{\bullet}^{0,0} k_{\bullet}^{1,1}} \l(\inv{\sqrt 2} + \sqrt{2}\r)$, we deduce that 
$$ I \overset{a.s.}{\rightarrow} \frac{16}{3}a_0\l(\inv{\sqrt{2}} + \sqrt{2}\r) \sqrt{k_{\bullet}^{0,0} k_{\bullet}^{1,1}} \l(\int_0^T\alpha_s^{-1}ds\r)^{1/2}\sum_{0 < s \leq T} \Delta J_s^2  \l(\sigma_s^2\alpha_s  + \sigma_{s-}^2\alpha_{s-} \r)^{1/2}.$$

\subsection{Proof of Proposition \ref{QMLEcorRobust} and Proposition \ref{QMLEcorRobust2}}
The proofs follow exactly the same line of reasoning as the proofs of Proposition \ref{RKcorRobust} and Proposition \ref{RKcorRobust2}.

\bibliography{biblio}

\begin{thebibliography}{}

\bibitem[A{\"\i}t-Sahalia et~al., 2005]{ait2005often}
A{\"\i}t-Sahalia, Y., Mykland, P.~A., and Zhang, L. (2005).
\newblock How often to sample a continuous-time process in the presence of
  market microstructure noise.
\newblock {\em Review of Financial Studies}, 18(2):351--416.

\bibitem[Altmeyer and Bibinger, 2015]{altmeyer2015functional}
Altmeyer, R. and Bibinger, M. (2015).
\newblock Functional stable limit theorems for quasi-efficient spectral
  covolatility estimators.
\newblock {\em Stochastic Processes and their Applications},
  125(12):4556--4600.

\bibitem[Andersen et~al., 2001]{andersen2001distribution}
Andersen, T.~G., Bollerslev, T., Diebold, F.~X., and Labys, P. (2001).
\newblock The distribution of realized exchange rate volatility.
\newblock {\em Journal of the American Statistical Association},
  96(453):42--55.

\bibitem[Andersen et~al., 2012]{andersen2012jump}
Andersen, T.~G., Dobrev, D., and Schaumburg, E. (2012).
\newblock Jump-robust volatility estimation using nearest neighbor truncation.
\newblock {\em Journal of Econometrics}, 169(1):75--93.

\bibitem[Andersen et~al., 2014]{andersen2014robust}
Andersen, T.~G., Dobrev, D., and Schaumburg, E. (2014).
\newblock A robust neighborhood truncation approach to estimation of integrated
  quarticity.
\newblock {\em Econometric Theory}, 30(1):3--59.

\bibitem[Barndorff-Nielsen et~al., 2008]{barndorff2008designing}
Barndorff-Nielsen, O.~E., Hansen, P.~R., Lunde, A., and Shephard, N. (2008).
\newblock Designing realized kernels to measure the ex post variation of equity
  prices in the presence of noise.
\newblock {\em Econometrica}, 76(6):1481--1536.

\bibitem[Barndorff-Nielsen et~al., 2009]{barndorff2009realized}
Barndorff-Nielsen, O.~E., Hansen, P.~R., Lunde, A., and Shephard, N. (2009).
\newblock Realized kernels in practice: Trades and quotes.
\newblock {\em The Econometrics Journal}, 12(3):1--33.

\bibitem[Barndorff-Nielsen and Shephard, 2002]{barndorff2002estimating}
Barndorff-Nielsen, O.~E. and Shephard, N. (2002).
\newblock Estimating quadratic variation using realized variance.
\newblock {\em Journal of Applied Econometrics}, 17(5):457--477.

\bibitem[Bibinger and Mykland, 2016]{bibinger2016inference}
Bibinger, M. and Mykland, P.~A. (2016).
\newblock Inference for multi-dimensional high-frequency data with an
  application to conditional independence testing.
\newblock {\em Scandinavian Journal of Statistics}, 43(4):1078--1102.

\bibitem[Clinet and Potiron, 2017]{clinet2017estimation}
Clinet, S. and Potiron, Y. (2017).
\newblock Estimation for high-frequency data under parametric market
  microstructure noise.
\newblock {\em Working paper available at arXiv:1712.01479}.

\bibitem[Clinet and Potiron, 2018]{clinet2018statistical}
Clinet, S. and Potiron, Y. (2018).
\newblock Statistical inference for the doubly stochastic self-exciting
  process.
\newblock {\em Bernoulli}, 24(4B):3469--3493.

\bibitem[Da and Xiu, 2017]{da2017moving}
Da, R. and Xiu, D. (2017).
\newblock When moving-average models meet high-frequency data: Uniform
  inference on volatility.
\newblock {\em Working paper available on Dacheng Xiu's website}.

\bibitem[Engle and Patton, 2001]{engle2001good}
Engle, R.~F. and Patton, A.~J. (2001).
\newblock What good is a volatility model?
\newblock {\em Quantitative Finance}, 1(2):237--245.

\bibitem[Genon-Catalot and Jacod, 1993]{genon1993estimation}
Genon-Catalot, V. and Jacod, J. (1993).
\newblock On the estimation of the diffusion coefficient for multi-dimensional
  diffusion processes.
\newblock {\em Annales de l'IHP Probabilit{\'e}s et Statistiques},
  29(1):119--151.

\bibitem[Ghysels et~al., 1996]{ghysels19965}
Ghysels, E., Harvey, A.~C., and Renault, E. (1996).
\newblock 5 stochastic volatility.
\newblock {\em Handbook of Statistics}, 14:119--191.

\bibitem[Gloter and Jacod, 2001]{gloter2001diffusions}
Gloter, A. and Jacod, J. (2001).
\newblock Diffusions with measurement errors. i. local asymptotic normality.
\newblock {\em ESAIM: Probability and Statistics}, 5:225--242.

\bibitem[Hansen and Lunde, 2006]{hansen2006realized}
Hansen, P.~R. and Lunde, A. (2006).
\newblock Realized variance and market microstructure noise.
\newblock {\em Journal of Business \& Economic Statistics}, 24(2):127--161.

\bibitem[H{\"a}usler and Luschgy, 2015]{hausler2015stable}
H{\"a}usler, E. and Luschgy, H. (2015).
\newblock {\em Stable Convergence and Stable Limit Theorems}, volume~74.
\newblock Springer.

\bibitem[Jacod, 1994]{jacod1994limit}
Jacod, J. (1994).
\newblock Limit of random measures associated with the increments of a brownian
  semimartingale.
\newblock {\em Unpublished manuscript, Laboratorie de Probabilities, Universite
  P. and M. Curie, Paris.}

\bibitem[Jacod, 1997]{jacod1997}
Jacod, J. (1997).
\newblock On continuous conditional gaussian martingales and stable convergence
  in law.
\newblock {\em S\'eminaire de Probabilit\'es de Strasbourg}, 31:232--246.

\bibitem[Jacod et~al., 2009]{jacod2009microstructure}
Jacod, J., Li, Y., Mykland, P.~A., Podolskij, M., and Vetter, M. (2009).
\newblock Microstructure noise in the continuous case: the pre-averaging
  approach.
\newblock {\em Stochastic Processes and their Applications}, 119(7):2249--2276.

\bibitem[Jacod and Mykland, 2015]{jacod2015microstructure}
Jacod, J. and Mykland, P.~A. (2015).
\newblock Microstructure noise in the continuous case: Approximate efficiency
  of the adaptive pre-averaging method.
\newblock {\em Stochastic Processes and their Applications}, 125(8):2910--2936.

\bibitem[Jacod and Protter, 1998]{jacod1998asymptotic}
Jacod, J. and Protter, P. (1998).
\newblock Asymptotic error distributions for the euler method for stochastic
  differential equations.
\newblock {\em Annals of Probability}, 26:267--307.

\bibitem[Jacod and Protter, 2011]{jacod2011discretization}
Jacod, J. and Protter, P.~E. (2011).
\newblock {\em Discretization of processes}.
\newblock Springer Science \& Business Media.

\bibitem[Jacod and Rosenbaum, 2013]{jacod2013quarticity}
Jacod, J. and Rosenbaum, M. (2013).
\newblock Quarticity and other functionals of volatility: efficient estimation.
\newblock {\em Annals of Statistics}, 41(3):1462--1484.

\bibitem[Kallenberg, 2006]{KallenbergFoundation2002}
Kallenberg, O. (2006).
\newblock {\em Foundations of modern probability}.
\newblock Springer Science \& Business Media.

\bibitem[Li et~al., 2016]{li2016efficient}
Li, Y., Xie, S., and Zheng, X. (2016).
\newblock Efficient estimation of integrated volatility incorporating trading
  information.
\newblock {\em Journal of Econometrics}, 195(1):33--50.

\bibitem[Mancino and Sanfelici, 2012]{mancino2012estimation}
Mancino, M.~E. and Sanfelici, S. (2012).
\newblock Estimation of quarticity with high-frequency data.
\newblock {\em Quantitative Finance}, 12(4):607--622.

\bibitem[Meddahi, 2002]{meddahi2002theoretical}
Meddahi, N. (2002).
\newblock A theoretical comparison between integrated and realized volatility.
\newblock {\em Journal of Applied Econometrics}, 17(5):479--508.

\bibitem[Mykland and Zhang, 2006]{mykland2006anova}
Mykland, P.~A. and Zhang, L. (2006).
\newblock {ANOVA} for diffusions and {I}to processes.
\newblock {\em Annals of Statistics}, 34(4):1931--1963.

\bibitem[Mykland and Zhang, 2009]{mykland2009inference}
Mykland, P.~A. and Zhang, L. (2009).
\newblock Inference for continuous semimartingales observed at high frequency.
\newblock {\em Econometrica}, 77(5):1403--1445.

\bibitem[Mykland and Zhang, 2012]{mykland2012econometrics}
Mykland, P.~A. and Zhang, L. (2012).
\newblock The econometrics of high frequency data.
\newblock In {\em Statistical methods for stochastic differential equations
  (ed. by M. Kessler, A. Lindner, and M. S{\o}rensen)}, pages 109--190. Chapman
  and Hall/CRC Press Boca Raton, FL.

\bibitem[Potiron and Mykland, 2016]{potiron2016local}
Potiron, Y. and Mykland, P.~A. (2016).
\newblock Local parametric estimation in high frequency data.
\newblock {\em Working paper available at arXiv:1603.05700}.

\bibitem[Potiron and Mykland, 2017]{potiron2017estimation}
Potiron, Y. and Mykland, P.~A. (2017).
\newblock Estimation of integrated quadratic covariation with endogenous
  sampling times.
\newblock {\em Journal of Econometrics}, 197(1):20--41.

\bibitem[Reiss, 2011]{reiss2011asymptotic}
Reiss, M. (2011).
\newblock Asymptotic equivalence for inference on the volatility from noisy
  observations.
\newblock {\em Annals of Statistics}, 39(2):772--802.

\bibitem[Todorov and Tauchen, 2011]{todorov2011volatility}
Todorov, V. and Tauchen, G. (2011).
\newblock Volatility jumps.
\newblock {\em Journal of Business \& Economic Statistics}, 29(3):356--371.

\bibitem[Van~der Vaart, 2000]{van2000asymptotic}
Van~der Vaart, A.~W. (2000).
\newblock {\em Asymptotic statistics}.
\newblock Cambridge university press.

\bibitem[Varneskov, 2016]{varneskov2016flat}
Varneskov, R.~T. (2016).
\newblock Flat-top realized kernel estimation of quadratic covariation with
  nonsynchronous and noisy asset prices.
\newblock {\em Journal of Business \& Economic Statistics}, 34(1):1--22.

\bibitem[Xiu, 2010]{xiu2010quasi}
Xiu, D. (2010).
\newblock Quasi-maximum likelihood estimation of volatility with high frequency
  data.
\newblock {\em Journal of Econometrics}, 159(1):235--250.

\bibitem[Zhang, 2001]{zhang2001martingales}
Zhang, L. (2001).
\newblock {\em From martingales to ANOVA: Implied and realized volatility}.
\newblock PhD thesis, University of Chicago, Department of Statistics.

\bibitem[Zhang, 2006]{zhang2006efficient}
Zhang, L. (2006).
\newblock Efficient estimation of stochastic volatility using noisy
  observations: A multi-scale approach.
\newblock {\em Bernoulli}, 12(6):1019--1043.

\bibitem[Zhang et~al., 2005]{zhang2005tale}
Zhang, L., Mykland, P.~A., and A{\"\i}t-Sahalia, Y. (2005).
\newblock A tale of two time scales: Determining integrated volatility with
  noisy high-frequency data.
\newblock {\em Journal of the American Statistical Association},
  100(472):1394--1411.

\end{thebibliography}
\bibliographystyle{apalike} 

\newpage
\begin{table}[h]
\centering
\caption{Sample mean and standard error of $\rho$ and $\kappa$ for the three models.}
\label{tableRho}
\begin{tabular}{lcccc}
\toprule
\toprule
Model   & $\rho_{mean}$ & $\rho_{stdv.}$ &$\kappa_{mean}$& $\kappa_{stdv.}$  \\
\toprule
Model 1 & 0.89  & 0.01  & 0.92 & 0.01 \\
Model 2 & 0.77  & 0.15  & 0.83 & 0.12\\
Model 3 & 0.64  & 0.14  & 0.74 & 0.1 \\
\bottomrule
\end{tabular}
\end{table}

\begin{table}[]
\centering
\caption{Finite sample properties of $Z_n^{\tilde{K}_B}$ (Model 2)$^\dag$} 
\label{stdRK}
\begin{tabular}{@{}rccccccccc@{}}
\toprule
\toprule
\multicolumn{1}{l}{No. Obs.} & \multicolumn{1}{l}{Mean} & \multicolumn{1}{l}{Stdv.} & \multicolumn{1}{l}{RMSE} & \multicolumn{1}{l}{0.5\%} & \multicolumn{1}{l}{2.5\%} & \multicolumn{1}{l}{5\%} & \multicolumn{1}{l}{95\%} & \multicolumn{1}{l}{97.5\%} & \multicolumn{1}{l}{99.5\%} \\ \toprule
\multicolumn{10}{l}{B = 1 block} \\
5,850 & -0.042&1.102&1.103&0.29&1.75&3.77&96.62&98.62&99.85\\
11,700 & -0.032&1.067&1.068&0.39&1.96&3.98&96.01&98.20&99.80\\
23,400 & -0.030&1.044&1.044&0.41&2.13&4.16&95.63&97.84&99.70\\
46,800 &  -0.027&1.041&1.041&0.46&2.25&4.35&95.58&98.18&99.74\\
\multicolumn{10}{l}{B = 2 blocks} \\
5,850 & -0.065&1.105&1.106&0.24&1.55&3.53&96.49&98.43&99.82\\
11,700 & -0.048&1.069&1.070&0.32&1.85&3.65&95.89&98.20&99.71\\
23,400 &  -0.042&1.048&1.049&0.37&2.01&3.91&95.52&97.88&99.65\\
46,800 &  -0.037&1.044&1.045&0.43&2.11&4.10&95.54&98.06&99.71\\
\multicolumn{10}{l}{B = 4 blocks} \\
5,850 & -0.105&1.110&1.115&0.21&1.38&3.02&96.25&98.37&99.81\\
11,700 & -0.082&1.074&1.077&0.28&1.54&3.33&95.74&98.15&99.66\\
23,400 &  -0.069&1.051&1.053&0.36&1.77&3.66&95.22&97.73&99.65\\
46,800 & -0.059&1.043&1.044&0.37&1.89&3.89&95.31&97.96&99.64\\
\multicolumn{10}{l}{B = 6 blocks} \\
5,850 & -0.144&1.115&1.124&0.19&1.23&2.72&95.81&98.33&99.75\\
11,700 & -0.114&1.077&1.083&0.23&1.40&3.09&95.37&97.88&99.67\\
23,400 & -0.099&1.054&1.059&0.31&1.65&3.49&94.95&97.57&99.60\\
46,800 & -0.086&1.043&1.047&0.38&1.65&3.56&94.95&97.76&99.57\\
\multicolumn{10}{l}{B = 8 blocks} \\
5,850 &-0.193&1.119&1.136&0.15&1.03&2.31&95.40&98.14&99.75\\
11,700 & -0.154&1.080&1.091&0.21&1.26&2.83&95.27&97.88&99.66\\
23,400 & -0.128&1.054&1.062&0.28&1.52&3.27&94.91&97.56&99.59\\
46,800 &-0.109&1.042&1.047&0.32&1.64&3.51&94.72&97.56&99.55\\
\bottomrule

\end{tabular}

\scriptsize $^\dag$This table shows summary statistics and empirical quantiles benchmarked to the $N$(0,1) distribution for the infeasible Z-statistics related to the global and local RK (Tukey-Hanning 2). The simulation design is Model 2 with $M = 10,000$ Monte-Carlo simulations. 
\end{table}

\begin{table}[]
\centering
\caption{Finite sample properties of $Z_n^{\tilde{Q}_B}$ (Model 2)$^\dag$}
\label{stdQMLE}
\begin{tabular}{@{}rccccccccc@{}}
\toprule
\toprule
\multicolumn{1}{l}{No. Obs.} & \multicolumn{1}{l}{Mean} & \multicolumn{1}{l}{Stdv.} & \multicolumn{1}{l}{RMSE} & \multicolumn{1}{l}{0.5\%} & \multicolumn{1}{l}{2.5\%} & \multicolumn{1}{l}{5\%} & \multicolumn{1}{l}{95\%} & \multicolumn{1}{l}{97.5\%} & \multicolumn{1}{l}{99.5\%} \\ \toprule
\multicolumn{10}{l}{B = 1 block} \\
5,850 &-0.024&1.084&1.084&0.36&2.09&4.12&96.48&98.57&99.84\\
11,700 & -0.015&1.058&1.058&0.43&2.26&4.51&96.32&98.34&99.75\\
23,400 &  -0.012&1.039&1.039&0.51&2.19&4.48&95.87&97.97&99.70\\
46,800 &  -0.013&1.034&1.034&0.59&2.38&4.67&95.74&98.06&99.73\\
\multicolumn{10}{l}{B = 2 blocks} \\
5,850 & -0.023&1.086&1.086&0.34&1.90&4.03&96.48&98.46&99.85\\
11,700 & -0.011&1.06&1.06&0.42&2.08&4.28&96.24&98.31&99.72\\
23,400 & -0.007&1.042&1.042&0.54&2.12&4.31&95.71&98.01&99.66\\
46,800 & -0.009&1.036&1.036&0.56&2.22&4.51&95.71&98.08&99.63\\
\multicolumn{10}{l}{B = 4 blocks} \\
5,850 & -0.016&1.089&1.089&0.34&1.98&3.86&96.42&98.57&99.82\\
11,700 & -0.007&1.063&1.063&0.42&2.09&4.19&96.24&98.34&99.72\\
23,400 & -0.002&1.042&1.042&0.51&2.16&4.52&95.64&98.02&99.65\\
46,800 &  -0.005&1.035&1.035&0.56&2.20&4.74&95.60&98.08&99.69\\
\multicolumn{10}{l}{B = 6 blocks} \\
5,850 & -0.012&1.089&1.089&0.36&2.01&3.93&96.46&98.56&99.82\\
11,700 & -0.002&1.062&1.062&0.43&2.01&4.26&96.33&98.33&99.74\\
23,400 &  -0.0&1.041&1.041&0.51&2.10&4.63&95.60&98.06&99.71\\
46,800 &  -0.004&1.033&1.033&0.57&2.21&4.72&95.64&98.07&99.69\\
\multicolumn{10}{l}{B = 8 blocks} \\
5,850 & -0.014&1.093&1.093&0.36&1.82&3.83&96.42&98.63&99.82\\
11,700 & -0.005&1.066&1.066&0.40&1.94&4.15&96.33&98.37&99.74\\
23,400 & -0.001&1.043&1.043&0.47&2.05&4.53&95.64&98.15&99.67\\
46,800 &  -0.003&1.033&1.033&0.57&2.29&4.67&95.60&98.10&99.66\\
\bottomrule
\end{tabular}

\scriptsize $^\dag$This table shows summary statistics and empirical quantiles benchmarked to the $N$(0,1) distribution for the infeasible Z-statistics related to the global and local QMLE. The simulation design is Model 2 with $M = 10,000$ Monte-Carlo simulations. 
\end{table}

\begin{table}[]
\centering
\caption{Finite sample properties of $\tilde{Z}_n^{\tilde{K}_B}$ (Model 2)$^\dag$} 
\label{stdRKfeasible}
\begin{tabular}{@{}rccccccccc@{}}
\toprule
\toprule
\multicolumn{1}{l}{No. Obs.} & \multicolumn{1}{l}{Mean} & \multicolumn{1}{l}{Stdv.} & \multicolumn{1}{l}{RMSE} & \multicolumn{1}{l}{0.5\%} & \multicolumn{1}{l}{2.5\%} & \multicolumn{1}{l}{5\%} & \multicolumn{1}{l}{95\%} & \multicolumn{1}{l}{97.5\%} & \multicolumn{1}{l}{99.5\%} \\ \toprule
\multicolumn{10}{l}{B = 1 block} \\
5,850 & -0.117&1.176&1.182&0.02&0.66&1.67&95.42&97.67&99.66
\\
11,700 & -0.080&1.128&1.131&0.01&0.58&2.01&95.00&97.08&99.38\\
23,400 & -0.083&1.098&1.101&0.02&0.52&2.26&94.89&97.22&99.04\\
46,800 &  -0.069&1.087&1.089&0.02&0.73&3.44&94.49&97.27&99.15\\
\multicolumn{10}{l}{B = 2 blocks} \\
5,850 & -0.130&1.148&1.155&0.02&0.58&1.76&95.40&97.27&99.40
\\
11,700 & -0.099&1.111&1.115&0.01&0.66&2.17&94.53&96.23&99.08\\
23,400 &  -0.086&1.083&1.086&0.05&0.71&2.28&95.03&97.17&99.24\\
46,800 & -0.072&1.071&1.073&0.06&1.17&4.10&94.74&97.63&99.41
\\
\multicolumn{10}{l}{B = 4 blocks} \\
5,850 & -0.173&1.136&1.149&0.03&0.64&1.71&94.56&97.28&98.98
\\
11,700 &-0.139&1.107&1.115&0.01&0.64&1.78&93.28&96.21&99.13\\
23,400 & -0.107&1.083&1.089&0.08&0.88&2.37&94.97&96.83&99.03\\
46,800 & -0.092&1.079&1.083&0.03&1.07&3.93&94.99&97.64&99.34
\\
\multicolumn{10}{l}{B = 6 blocks} \\
5,850 & -0.225&1.145&1.167&0.02&0.60&1.25&93.88&96.93&98.80
\\
11,700 & -0.177&1.103&1.117&0.01&0.53&1.52&93.21&95.86&99.04
\\
23,400 & -0.145&1.077&1.087&0.04&0.72&1.91&94.12&96.61&99.06\\
46,800 &-0.122&1.07&1.077&0.04&1.01&3.55&94.82&97.07&99.35
\\
\multicolumn{10}{l}{B = 8 blocks} \\
5,850 &-0.270&1.152&1.183&0.01&0.45&1.080&94.41&96.88&98.66
\\
11,700 & -0.219&1.106&1.128&0.02&0.50&1.65&92.63&95.76&98.96\\
23,400 &-0.176&1.089&1.103&0.07&0.55&1.96&93.63&97.34&98.97\\
46,800 &-0.146&1.078&1.088&0.03&0.72&3.27&94.02&96.98&99.38
\\
\bottomrule

\end{tabular}

\scriptsize $^\dag$This table shows summary statistics and empirical quantiles benchmarked to the $N$(0,1) distribution for the feasible Z-statistics related to the global and local RK (Tukey-Hanning 2). The simulation design is Model 2 with $M = 10,000$ Monte-Carlo simulations. 
\end{table}

\begin{table}[]
\centering
\caption{Finite sample properties of $\tilde{Z}_n^{\tilde{Q}_B}$ (Model 2)$^\dag$}
\label{stdQMLEfeasible}
\begin{tabular}{@{}rccccccccc@{}}
\toprule
\toprule
\multicolumn{1}{l}{No. Obs.} & \multicolumn{1}{l}{Mean} & \multicolumn{1}{l}{Stdv.} & \multicolumn{1}{l}{RMSE} & \multicolumn{1}{l}{0.5\%} & \multicolumn{1}{l}{2.5\%} & \multicolumn{1}{l}{5\%} & \multicolumn{1}{l}{95\%} & \multicolumn{1}{l}{97.5\%} & \multicolumn{1}{l}{99.5\%} \\ \toprule
\multicolumn{10}{l}{B = 1 block} \\
5,850 &-0.114&1.200&1.205&0.01&0.50&1.29&95.50&97.99&99.59
\\
11,700 &-0.090&1.148&1.152&0.01&0.35&1.68&95.36&97.18&99.12
\\
23,400 &-0.075&1.109&1.112&0.01&0.62&2.00&95.13&96.91&98.18\\
46,800 &-0.062&1.093&1.095&0.01&0.61&2.98&94.38&96.67&99.08
\\
\multicolumn{10}{l}{B = 2 blocks} \\
5,850 &-0.099&1.170&1.174&0.02&0.56&1.49&95.70&97.66&99.38
\\
11,700 &-0.080&1.130&1.133&0.01&0.49&1.76&94.89&96.81&99.05\\
23,400 &-0.057&1.094&1.095&0.03&0.91&2.38&95.25&97.32&98.61\\
46,800 &-0.049&1.079&1.081&0.02&0.99&3.61&94.93&97.24&99.40
\\
\multicolumn{10}{l}{B = 4 blocks} \\
5,850 &-0.089&1.150&1.154&0.04&0.82&1.62&95.56&97.50&99.18
\\
11,700 & -0.077&1.112&1.114&0.05&0.56&1.92&94.80&96.73&99.17
\\
23,400 &-0.049&1.083&1.084&0.06&1.11&2.91&95.15&97.36&98.77\\
46,800 &-0.046&1.083&1.084&0.02&1.16&3.38&95.23&97.32&99.53\\
\multicolumn{10}{l}{B = 6 blocks} \\
5,850 &-0.090&1.146&1.149&0.07&0.91&1.65&95.89&97.47&99.00
\\
11,700 &-0.076&1.105&1.107&0.05&0.58&2.36&94.70&96.98&99.21
\\
23,400 & -0.050&1.077&1.078&0.06&1.07&3.10&95.10&97.62&98.78\\
46,800 & -0.046&1.076&1.077&0.03&1.19&3.69&95.35&97.39&99.44
\\
\multicolumn{10}{l}{B = 8 blocks} \\
5,850 &-0.090&1.145&1.148&0.08&0.78&1.96&95.71&97.39&99.27
\\
11,700 & -0.073&1.099&1.101&0.06&0.68&2.46&94.83&96.81&99.11
\\
23,400 &-0.045&1.076&1.077&0.08&1.17&2.51&95.22&97.50&98.96\\
46,800 &-0.046&1.080&1.081&0.03&1.46&3.86&95.39&97.26&99.46
\\
\bottomrule
\end{tabular}

\scriptsize $^\dag$This table shows summary statistics and empirical quantiles benchmarked to the $N$(0,1) distribution for the feasible Z-statistics related to the global and local QMLE. The simulation design is Model 2 with $M = 10,000$ Monte-Carlo simulations. 
\end{table}

\begin{table}
\centering
\caption{Losses$^\dag$}
\label{tableLossXimedium}
\resizebox{.92 \textwidth}{!}{\begin{tabular}{@{}rccccccccccc@{}}
\toprule
\toprule
Model  &   & $Q$ & $\tilde{Q}_2$ & $\tilde{Q}_4$ & $\tilde{Q}_6$ & $\tilde{Q}_8$ & $K$ & $\tilde{K}_2$ & $\tilde{K}_4$ & $\tilde{K}_6$ & $\tilde{K}_8$  \\
\toprule
\multicolumn{12}{l}{$n = 23,400$, $\xi^2 = 0.01$}                                                                                                                         \\
\multirow{2}{*}{Model 1} & Emp. & 8.4\%&6.7\%&5.3\%&3.1\%&4.1\%&14.3\%&12.4\%&13.9\%&14.6\%&18.0\%\\
  &Theo. & 6.9\%&5.4\%&2.6\%&1.4\%&0.9\%&9.9\%&8.4\%&6.0\%&5.0\%&4.5\%\\
\multirow{2}{*}{Model 2} & Emp. &29.3\%&18.4\%&12.8\%&10.7\%&8.6\%&30.2\%&23.4\%&20.3\%&17.3\%&22.5\%\\
  &Theo. &21.5\%&12.4\%&5.8\%&3.5\%&2.4\%&18.2\%&12.1\%&8.0\%&6.4\%&5.6\%\\
\multirow{2}{*}{Model 3} & Emp. & 40.0\%&21.0\%&9.4\%&6.4\%&5.1\%&29.3\%&20.3\%&14.0\%&12.0\%&13.3\%\\
  &Theo. & 38.7\%&20.9\%&9.0\%&5.0\%&3.2\%&26.8\%&17.0\%&10.3\%&7.6\%&6.3\%\\
\multicolumn{12}{l}{$n = 46,800$, $\xi^2 = 0.01$}  \\                                               \multirow{2}{*}{Model 1} & Emp. & 8.1\%&5.8\%&3.5\%&3.0\%&2.5\%&11.6\%&9.4\%&9.9\%&9.2\%&12.5\%\\
  &Theo. & 6.9\%&5.4\%&2.6\%&1.4\%&0.9\%&9.9\%&8.4\%&6.0\%&5.0\%&4.5\%\\
\multirow{2}{*}{Model 2} & Emp. & 25.5\%&15.5\%&8.7\%&5.9\%&5.7\%&22.5\%&17.2\%&15.3\%&15.8\%&15.5\%\\
  &Theo. & 21.5\%&12.4\%&5.8\%&3.5\%&2.4\%&18.2\%&12.1\%&8.0\%&6.4\%&5.6\%\\
\multirow{2}{*}{Model 3} & Emp. & 38.1\%&20.0\%&8.3\%&2.3\%&1.7\%&29.6\%&18.3\%&11.7\%&10.3\%&11.2\%\\
  &Theo. & 38.7\%&20.9\%&9.0\%&5.0\%&3.2\%&26.8\%&17.0\%&10.3\%&7.6\%&6.3\%\\
\multicolumn{12}{l}{$n = 23,400$, $\xi^2 = 0.001$}                                                                                                                         \\
\multirow{2}{*}{Model 1} & Emp. & 17.2\%&15.8\%&12.9\%&11.7\%&11.2\%&21.7\%&20.9\%&19.4\%&19.7\%&20.1\%\\
  &Theo. & 6.9\%&5.4\%&2.6\%&1.4\%&0.9\%&9.9\%&8.4\%&6.0\%&5.0\%&4.5\%\\
\multirow{2}{*}{Model 2} & Emp. &30.7\%&21.8\%&14.8\%&12.1\%&11.2\%&28.7\%&23.3\%&19.7\%&19.2\%&19.0\%\\
  &Theo. &21.5\%&12.4\%&5.8\%&3.5\%&2.4\%&18.2\%&12.1\%&8.0\%&6.4\%&5.6\%\\
\multirow{2}{*}{Model 3} & Emp. & 51.1\%&33.0\%&20.6\%&16.2\%&14.7\%&43.4\%&32.8\%&26.1\%&23.8\%&23.5\%\\
  &Theo. & 38.7\%&20.9\%&9.0\%&5.0\%&3.2\%&26.8\%&17.0\%&10.3\%&7.6\%&6.3\%\\
\multicolumn{12}{l}{$n = 46,800$, $\xi^2 = 0.001$}  \\                                               \multirow{2}{*}{Model 1} & Emp. & 15.3\%&13.9\%&10.9\%&9.6\%&9.1\%&20.0\%&19.0\%&16.7\%&16.3\%&16.4\%\\
  &Theo. & 6.9\%&5.4\%&2.6\%&1.4\%&0.9\%&9.9\%&8.4\%&6.0\%&5.0\%&4.5\%\\
\multirow{2}{*}{Model 2} & Emp. & 29.7\%&20.6\%&13.3\%&10.4\%&9.2\%&28.2\%&22.4\%&17.8\%&16.5\%&15.8\%\\
  &Theo. & 21.5\%&12.4\%&5.8\%&3.5\%&2.4\%&18.2\%&12.1\%&8.0\%&6.4\%&5.6\%\\
\multirow{2}{*}{Model 3} & Emp. & 47.6\%&29.2\%&16.9\%&12.6\%&11.0\%&38.3\%&26.8\%&20.6\%&17.6\%&17.2\%\\
  &Theo. & 38.7\%&20.9\%&9.0\%&5.0\%&3.2\%&26.8\%&17.0\%&10.3\%&7.6\%&6.3\%\\
\multicolumn{12}{l}{$n = 23,400$, $\xi^2 = 0.0002$}                                                                                                                         \\
\multirow{2}{*}{Model 1} & Emp. & 25.2\%&23.8\%&20.6\%&19.4\%&18.6\%&32.8\%&31.7\%&30.3\%&30.0\%&30.5\%\\
  &Theo. & 6.9\%&5.4\%&2.6\%&1.4\%&0.9\%&9.9\%&8.4\%&6.0\%&5.0\%&4.5\%\\
\multirow{2}{*}{Model 2} & Emp. &45.5\%&35.6\%&28.2\%&25.7\%&24.5\%&46.2\%&40.5\%&37.0\%&36.4\%&36.8\%\\
  &Theo. &21.8\%&12.6\%&5.9\%&3.6\%&2.5\%&18.4\%&12.2\%&8.0\%&6.4\%&5.6\%\\
\multirow{2}{*}{Model 3} & Emp. &64.3\%&45.3\%&32.5\%&28.4\%&26.2\%&56.0\%&46.4\%&40.8\%&39.5\%&39.4\%\\
  &Theo. &38.1\%&20.5\%&8.8\%&4.9\%&3.1\%&26.6\%&16.9\%&10.2\%&7.5\%&6.2\%\\
\multicolumn{12}{l}{$n = 46,800$, $\xi^2 = 0.0002$}  \\                                               \multirow{2}{*}{Model 1} & Emp. &19.7\%&18.1\%&14.8\%&13.6\%&12.7\%&24.9\%&23.6\%&21.4\%&21.0\%&20.7\%\\
  &Theo. &6.9\%&5.4\%&2.6\%&1.4\%&0.9\%&9.9\%&8.5\%&6.0\%&5.0\%&4.5\%\\
\multirow{2}{*}{Model 2} & Emp. &38.9\%&29.0\%&22.2\%&19.8\%&18.1\%&37.7\%&31.8\%&28.5\%&27.6\%&27.0\%\\
  &Theo. &21.8\%&12.6\%&5.9\%&3.6\%&2.5\%&18.4\%&12.2\%&8.0\%&6.4\%&5.6\%\\
\multirow{2}{*}{Model 3} & Emp. &57.3\%&38.3\%&26.3\%&22.1\%&19.6\%&47.3\%&37.1\%&31.4\%&29.5\%&28.1\%\\
  &Theo. &38.1\%&20.5\%&8.8\%&4.9\%&3.1\%&26.6\%&16.9\%&10.2\%&7.5\%&6.2\%\\
\bottomrule
\end{tabular}}
\scriptsize $^\dag$Empirical losses $\breve{L}_B^{(\Sigma)}$ and theoretical losses $\tilde{L}_B^{(\Sigma)}$ for the three models and the 10 estimators. Two levels of sampling $n = 23,400$, $n = 46,800$ and three noise-to-signal ratios $\xi^2 = 0.01$, $\xi^2 = 0.001$ and $\xi^2=0.0002$ are considered.
\end{table}

\begin{table}
\centering
\caption{Estimates of $\rho$, AVAR ratio estimates and empirical correlation of corrections.$^\dag$}
\label{tableRho2}
\begin{tabular}{lcccc}
\toprule
\toprule
$B$   & $\widehat{\rho}_{B}$ & $AVAR_B^{(QMLE)}/AVAR_1^{(QMLE)}$ &$AVAR_B^{(RK)}/AVAR_1^{(RK)}$& $\widehat{\textnormal{Corr}}(\tilde{Q}_B - Q, \tilde{K}_B - K)$ \\
\toprule
1 & 0.74  & 1  & 1 &  -\\
2 & 0.8  & 0.96  & 0.97 & 0.689\\
4 & 0.84  & 0.92  & 0.94 & 0.769\\
6 & 0.85  & 0.91  & 0.93 & 0.868\\
8 & 0.86  & 0.9  & 0.92 & 0.879\\

\bottomrule
\end{tabular}

\scriptsize $^\dag$For $B = 1,2,4,6,8$, $\widehat{\rho}_B$ refers to the empirical mean value of estimates of $\rho$ on blocks $[\Tau_{i-1}, \Tau_i]$ across days and values of $i$ for INTC in 2015. The AVAR ratios are estimated by plugging estimates of the integrated volatility, the integrated quarticity and $\rho$ on blocks of different sizes. The last column shows the empirical correlation between the corrections induced by the local method.    
\end{table}

\begin{table}[]
\centering
\caption{Summary statistics for the global and local estimators$^\dag$}
\label{tableEstEmp}
\begin{tabular}{lcccc}
\toprule
\toprule
Estimator   & Mean & Stdv. & $\widehat{\textnormal{Corr}}(.,Q)$  \\
\toprule
$Q$ & 1.771  & 1.789& 1  \\
$\tilde{Q}_2$ & 1.770  & 1.781  & $\approx$ 1\\
$\tilde{Q}_4$ & 1.766  & 1.769  & $\approx$ 1  \\
$\tilde{Q}_6$ & 1.761  & 1.762  & 0.9999  \\
$\tilde{Q}_8$ & 1.757  & 1.753  & 0.9999  \\
$K$ & 1.818  & 1.795  & 0.9994  \\
$\tilde{K}_2$ & 1.818  & 1.780  & 0.9992\\
$\tilde{K}_4$ & 1.813  & 1.770  & 0.9991  \\
$\tilde{K}_6$ & 1.808  & 1.756  & 0.9989  \\
$\tilde{K}_8$ & 1.804  & 1.751  & 0.9988  \\

\bottomrule
\end{tabular}

\scriptsize $^\dag$Sample means, standard deviations, and correlations with the global QMLE for the 10 estimators implemented for INTC data in 2015. The estimators are scaled by a factor $10^4$. 

\end{table}

\newpage
\newpage
\clearpage

\begin{figure}
\includegraphics[width=1\linewidth]{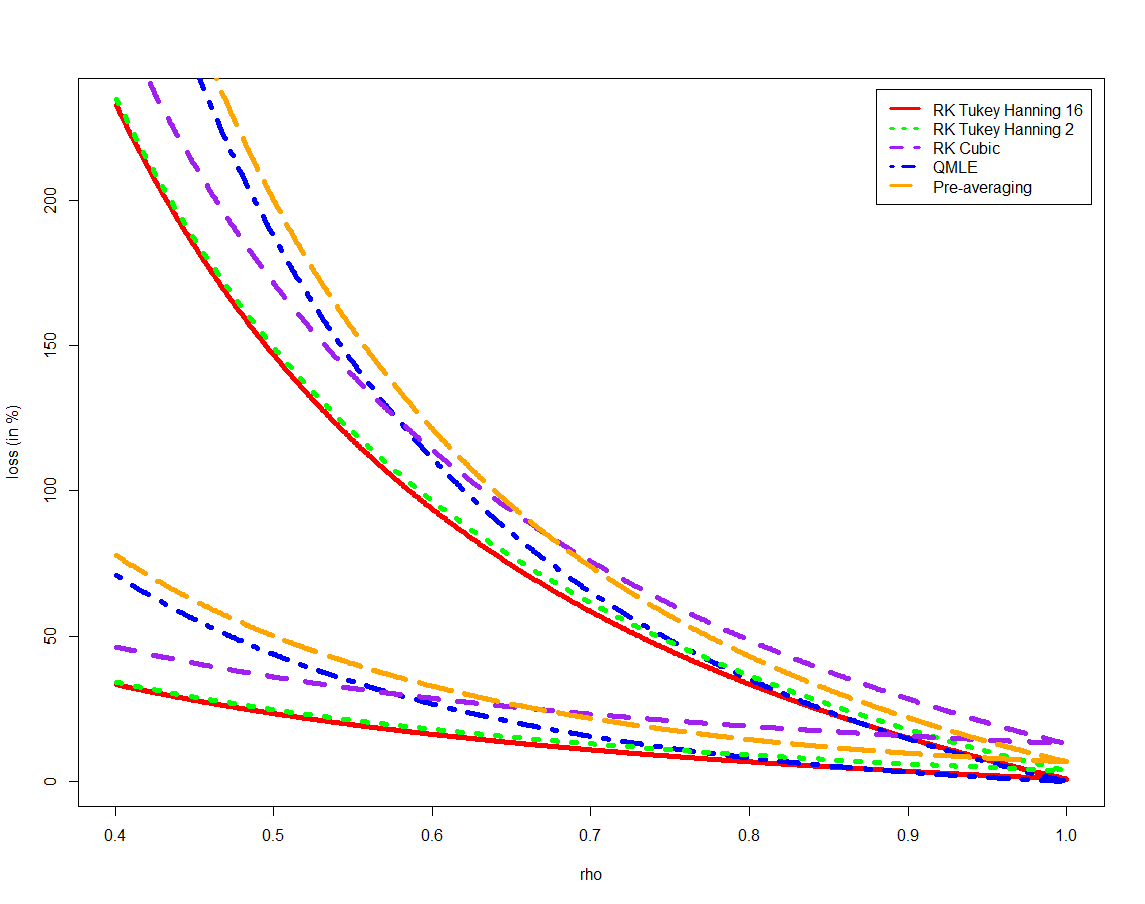}
\centering
\caption{Feasible loss region for three typical RK (Tukey-Hanning 16, Tukey-Hanning 2, Cubic), the QMLE and the PAE with triangle kernel. For each estimator, the lower line corresponds to the lower boundary when considering the best possible scenario $\kappa = \rho^{1/2}$ and the upper line stands for the upper boundary in the worst case scenario $\kappa = \rho^{3/2}$. The feasible loss region lies between those two lines. Note that a loss of 100 \% corresponds to an AVAR twice as big as the bound of efficiency.}
\label{efficiencyGlob}
\end{figure}

\begin{figure}
\includegraphics[width=1\linewidth]{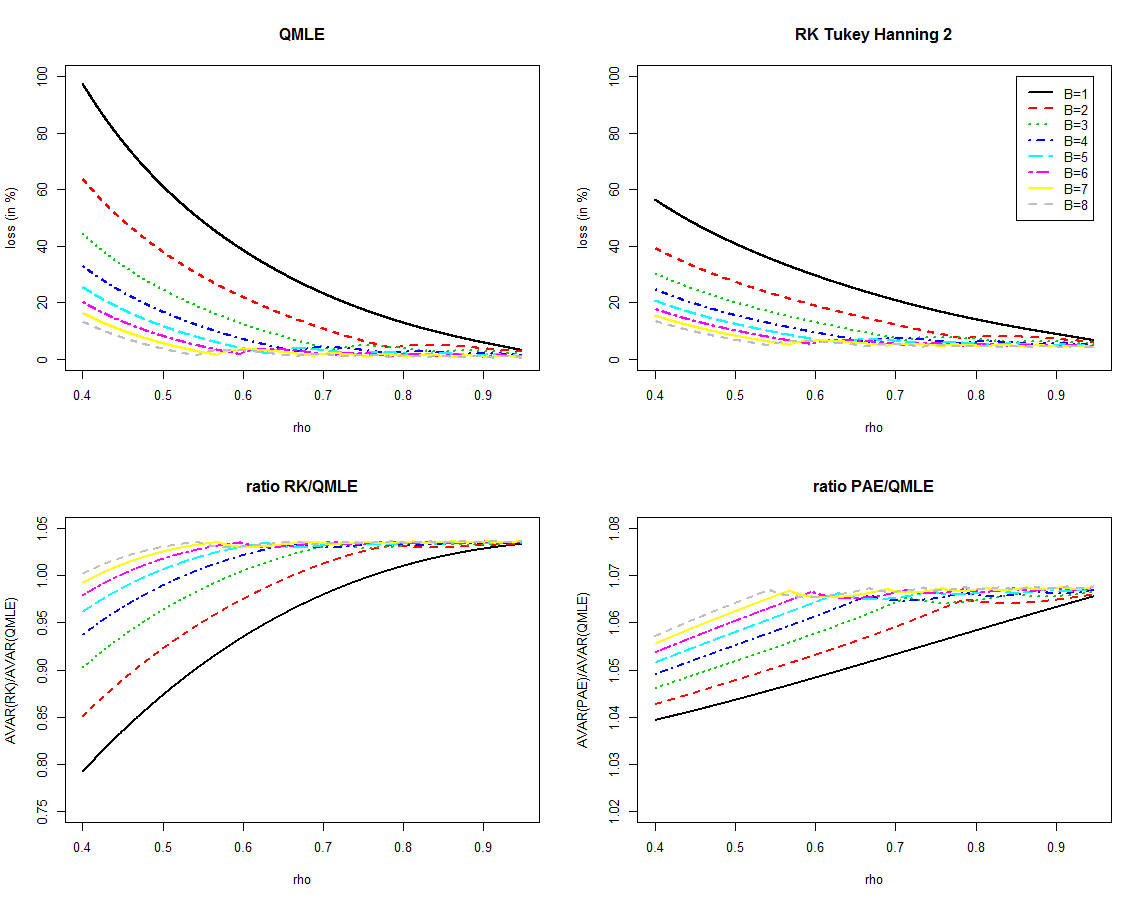}
\centering
\caption{For $B=1, \cdots,8$ we plot $L_B^{(QMLE)}$ (upper left panel), $L_B^{(RK)}$ for Tukey-Hanning 2 kernel (upper right panel), the corresponding AVAR ratio defined as $AVAR_{B}^{(RK)}/AVAR_{B}^{(QMLE)}$ (lower left panel) and the ratio of pre-averaging AVAR using $B$ blocks over $AVAR_{B}^{(QMLE)}$ (lower right panel) as a function of $\rho$.}
\label{AVARblockmod2}
\end{figure}

\begin{figure}
\includegraphics[width=1\linewidth]{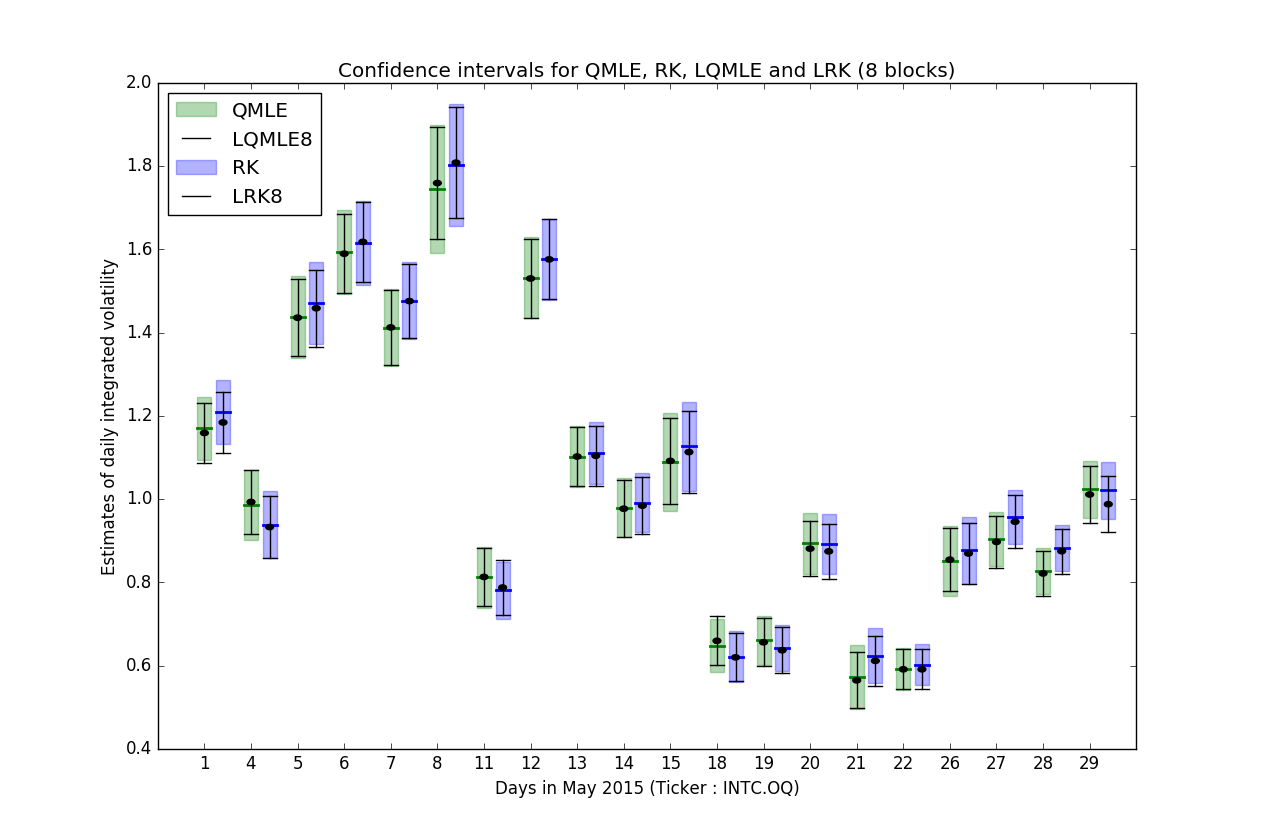}
\centering
\caption{95\% Confidence intervals for the four estimators $Q$ and $\tilde{Q}_8$ (green, left), $K$ and $\tilde{K}_8$ (blue, right) on INTC data in May 2015. The CIs are computed using the estimates of $AVAR_{B}^{(QMLE)}$, and $AVAR_{B}^{(RK)}$ for $B =1,8$ obtained as explained in Section \ref{empiricalIllustration}. The estimators are scaled by a factor $10^4$. }
\label{CI}
\end{figure}

\end{document}